\newtheorem{dfn}{Definition}
\newtheorem{thm}{Theorem}
\newtheorem{cond}{Condition}
\newtheorem{prop}{Proposition}
\newtheorem{lem}[thm]{Lemma}
\newtheorem{remark}{Remark}
\newcommand{\argmax}{\mathop{\rm arg~max}\limits}
\newcommand{\argmin}{\mathop{\rm arg~min}\limits}
\title{An adaptive strategy for sequential designs of multilevel computer experiments}
\author{
  Ayao Ehara\thanks{Corresponding author: ayao.ehara.18@ucl.ac.uk} \\
 Department of Statistical Science\\
  University College London\\
  London, UK \\
   \And
 Serge Guillas \\
 Department of Statistical Science\\
  University College London\\
  London, UK \\
}
\begin{document}
\maketitle

\begin{abstract}
Investigating uncertainties in computer simulations can be prohibitive in terms of computational costs, since the simulator needs to be run over a large number of input values. Building an emulator, i.e. a statistical surrogate model of the simulator constructed using a design of experiments made of a comparatively small number of evaluations of the forward solver, greatly alleviates the computational burden to carry out such investigations. Nevertheless, this can still be above the computational budget for many studies. Two major approaches have been used to reduce the budget needed to build the emulator: efficient design of experiments, such as sequential designs, and combining training data of different degrees of sophistication in a so-called multi-fidelity method, or multilevel in case these fidelities are ordered typically for increasing resolutions. We present here a novel method that combines both approaches, the multilevel adaptive sequential design of computer experiments (MLASCE) in the framework of Gaussian process (GP) emulators. We make use of reproducing kernel Hilbert spaces as a tool for our GP approximations of the differences between two consecutive levels. This dual strategy allows us to allocate efficiently limited computational resources over simulations of different levels of fidelity and build the GP emulator. The allocation of computational resources is shown to be the solution of a simple optimization problem in a special case where we theoretically prove the validity of our approach. Our proposed method is compared with other existing models of multi-fidelity Gaussian process emulation.  Gains in orders of magnitudes in accuracy or computing budgets are demonstrated in some of numerical examples for some settings.
\end{abstract}

\keywords{uncertainty quantification \and surrogate models\and Gaussian process \and multi-fidelity \and sequential design \and reproducing kernel Hilbert spaces (RKHS)}

\section{Introduction}
Complex mathematical models, implemented in a large computer simulation (e.g., numerical solvers of partial differential equations), have been used to study real systems in many areas of scientific research. For example, if the wave height at a fixed location in a tsunami simulation is a quantity of interest, we can simulate it by running the numerical solver of the governing equation and grasp how the target quantity changes with different values of the parameters in the system. These parameters, which determine the system, are typically treated as input variables of the underlying model. This type of analysis, implementing a number of runs of computer simulations with various inputs, is called a computer experiment \cite{sacks1989design}. Since running a complex model is computationally expensive, it is hard to have a complete understanding of this input-output structure; thereby, the need for the framework of surrogate emerges. We follow the framework of \cite{sacks1989design}. Their approach is to model the deterministic output from computer simulation as the realization of a stochastic process, thereby providing a statistical basis for designing experiments (choosing the inputs) for efficient predictions. Statistical emulators are constructed subsequently for the predictions of the input-output structure, and statistical emulators allow us a quick exhaustive sweep of the outputs over the entire domain of the input, avoiding the computational burden expended for evaluating a simulation at many different values of the input. The design problem or design of an experiment is to determine the locations of input variables and the corresponding output data. A stochastic process is fitted to the collected data, and Gaussian processes are employed to build a statistical emulator. Gaussian processes are suitable for interpolation in computer experiments and relatively easy to implement because of the form of conditional distributions. Besides, Gaussian processes can represent a wide variety of functions if the correlation structure is appropriately chosen. For these reasons, Gaussian processes or Gaussian process regression \cite{rasmussen2006gaussian,stein2012interpolation} (also called kriging) are a commonly used class of surrogate models, which assume that prior beliefs about the input-output structure can be modeled by a Gaussian process \cite{craig2001bayesian, kennedy2001bayesian}.

As mentioned, it is prohibitive to implement computer simulations numerous times because of the computational expense. But a computer simulation often can be run at different levels of accuracy, with versions ranging from the most sophisticated simulator to the most basic. For example, in the case of finite difference method used for solving a differential equation, we can set different levels of fidelity by changing the degree of discretization in the solver. The sophisticated versions are accurate but so expensive that only a limited numbers of runs can be obtained. In contrast, the low-level simulations return relatively inaccurate results but are cheaper to evaluate, and we can use a large number of samples. Then, it is a natural question of how the data from cheap computer simulations can effectively complement the small size of samples from expensive simulations. In other words, how can we use the data from several simulators with different levels of accuracy in the most effective way to construct a statistical emulator? In order to address this question, we develop a novel multi-fidelity Gaussian process regression and optimize the numbers of runs from simulators of different accuracy under a limited computational budget.

The first multi-fidelity statistical emulator proposed in \cite{craig1998constructing} was based on a linear regression formulation. Then, this model was improved in \citep{cumming2009small} by employing a Bayes linear formulation. The drawback is a lack of accuracy since they are based on a linear regression formulation. Following the fundamental framework of \cite{sacks1989design},  multi-fidelity Gaussian process regression, or co-kriging, was considered in a seminal paper in \cite{kennedy2000predicting} as a more general methodology of interpolation. \cite{higdon2004combining, reese2004integrated, qian2008bayesian} followed this formulation. \cite{ le2013bayesian,le2014recursive} articulated the formulation, and \cite{le2015cokriging} developed another strategy for obtaining training data as a strategy of sequential experimental design. \cite{perdikaris2017nonlinear} extended the scheme of co-kriging by assuming the deep GP formulation, the nonlinear autoregressive multi-fidelity GP
regression (NARGP), put forth by \cite{damianou2013deep,damianou2015deep}. However, \cite{perdikaris2017nonlinear} does not offer a strategy for experimental design. Recent papers such as \cite{perdikaris2015multi, yang2019physics, yang2020bifidelity} took advantage of this type of method. The benefit of co-kriging is that the correlation between the data of high and low fidelity is introduced so that it allows us to construct efficiently an accurate emulator by performing relatively few evaluations of an expensive high-fidelity model and more evaluations of a cheaper one. However, these approaches do not address the relationships between the precision of an emulator and the number of runs from each simulator, which is crucial to obtain the most accurate surrogate model under a limited computational budget, hence can lead to a large prediction errors. In addition, choosing the location of data, or a design of experiments, directly affects the quality of an emulator. Most of the related works use space-filling or Latin Hypercube Design (LHD), which can be inefficient, particularly in the case of high dimensions. Moreover, they assume nested design across every level of fidelity, which means that the simulated points for the more sophisticated computer simulations are included in the design points for less accurate ones. We exploit the correlation across levels but we also need to explore where higher fidelity simulations behave differently from the ones of low fidelity. Nested LHD may not allow this exploration and thus provide insufficient samplings to fit the differences across levels. This is one of the reasons why will employ sequential design in this paper. 

The contributions of this research are two-fold. First, our approach is the first multi-fidelity emulation method that is both free from the nested property of the design and takes advantage of sequential designs, which have been shown to be superior to one-shot designs for computer experiments \cite{beck2016sequential} and in multi-fidelity work \cite{le2015cokriging}. Secondly, compared to previous multi-fidelity methods, we allow our choice of design points across fidelities to be fully flexible. Our criterion selects both the input point and level of fidelity of the next run that results in the largest improvement of the overall prediction quality using all levels combined into one emulator. This dual strategy provides large gains compared to state-of-the-art multi-fidelity experimental designs. Defining the accuracy of an emulator and evaluating the effect of including additional evaluations of expensive computer simulations as the training data are necessary steps in optimising the usage of computational resources. This is made possible by decomposing the high resolution computer simulations into a telescoping sum of incremental differences across consecutive levels and assuming a reproducing kernel Hilbert space (RKHS) as a function space to which each incremental function belongs. Evaluating the accuracy of an approximation in the RKHS guarantees that our emulator gets closer to the underlining function as the number of runs increases; hence our methodology solves the problem of inaccuracy which is inherently included in the above-mentioned emulators. Regarding this type of asymptotic theory, profound literature exists in scattered data approximation and related fields, which is often used in Gaussian process approximation. The exhaustive study was given in \cite{wendland2004scattered}. In the context of practical implementation of Gaussian process regressions in recent papers, the effects of misspecified parameters in covariance functions were investigated in \cite{tuo2020kriging, wang2020prediction} while convergence rates with statistically estimated parameters were studied in \cite{teckentrup2020convergence, wynne2021convergence}. The convergence rates by these methods are based on geometric features of the domain and design. Especially in the case of high dimension and a tensor grid design, these convergence rates are quite slow. Therefore, we rely on an empirical but usually more efficient design strategy, Mutual Information Computer Experiment (MICE) \cite{beck2016sequential}, to obtain faster convergence in a practical way.

 This paper is organized as follows. In section \ref{sec:GP} we present a review of Gaussian process regression for emulation and existing multi-fidelity methods. We also provide an overview of our strategy of experimental design and explain the choice of covariance structure in relation with reproducing kernel Hilbert spaces. It also provides an overview of multilevel Monte Carlo and an error bound of the kriging predictor for our analysis as the background theory. Section 3 is the core part of this research. We show the novel multilevel adaptive sequential design of computer experiments (MLASCE) in the framework of multi-fidelity GPR and an algorithm for constructing it. Section \ref{MLMC_kriging} details an independent analysis of optimal numbers of runs in the special case of tensor grid design. Numerical examples are given in section \ref{nume_example} showing that MLASCE performs better than other existing multi-fidelity models. We also demonstrate that MLASCE is robust in a realistic and multi-dimensional example of a tsunami simulation.

\section{Gaussian process (GP) emulation: background and multi-fidelity context} 
\label{sec:GP}
We assume that $\mathbb{D} \subseteq \mathbb{R}^{d}$ is the domain of input for a positive integer $d$ and $y:\mathbb{D} \rightarrow \mathbb{R}$ denotes the deterministic quantity of interest dependent on the input. Further, let $\boldsymbol{y}(X_{N})=\left(y(x_{1}), y(x_{2}) \ldots, y(x_{N})\right)^{\top}$ denote $N$ observations of $y$ collected at the distinct inputs $X_{N}=\left(x_{1}, x_{2}, \ldots, x_{N}\right)^{\top}$ where $x_{i} \in \mathbb{D} \subseteq \mathbb{R}^{d}$, $y(x_{i}) \in \mathbb{R}$. We aim to predict $\boldsymbol{y}$ at any new input $\boldsymbol{x} \in \mathbb{D}$. The GP regression (GPR) assumes that the observation vector $\boldsymbol{y}(X_{N})$ is a realization of the $N$-dimensional random vector with multivariate Gaussian distribution
\begin{equation}
\boldsymbol{Y}=\left(Y\left(\boldsymbol{x}_{1}\right), Y\left(\boldsymbol{x}_{2}\right), \ldots, Y\left(\boldsymbol{x}_{N} \right)\right)^{\top}.
\end{equation}
where $Y(\cdot): \mathbb{D}  \rightarrow \mathbb{R}$ is a GP. From now on, we write $Y(x)$ instead of $Y_{x}$ in order to emphasize the corresponding index $x$. The GP $Y$ is usually represented using GP notation as $Y(\boldsymbol{x}) \sim GP\left(m(\boldsymbol{x}), K\left(\boldsymbol{x}, \boldsymbol{x}^{\prime}\right)\right)$, where $m(\cdot): \mathbb{D} \rightarrow \mathbb{R}$ and $K(\cdot, \cdot): \mathbb{D} \times \mathbb{D} \rightarrow \mathbb{R}$ are the mean and covariance functions
\begin{eqnarray}
\begin{aligned}
m(\boldsymbol{x}) &=\mathbb{E}[Y(\boldsymbol{x})] \\
K\left(\boldsymbol{x}, \boldsymbol{x}^{\prime}\right) &=\operatorname{Cov}\bigl(Y(\boldsymbol{x}), Y\left(\boldsymbol{x}^{\prime}\right) \bigr)=\mathbb{E}\bigl[\bigl(Y(\boldsymbol{x})-m(\boldsymbol{x})\bigr) \bigl(Y\left(\boldsymbol{x}^{\prime}\right)-m\left(\boldsymbol{x}^{\prime}\right) \bigr) \bigr]
\end{aligned}
\label{def_mean_cov}
\end{eqnarray}
The variance of $Y(\boldsymbol{x})$ is $K(\boldsymbol{x}, \boldsymbol{x})$ and its standard deviation is $\sigma(\boldsymbol{x})=\sqrt{K(\boldsymbol{x}, \boldsymbol{x})}$. The mean $m(\boldsymbol{x})$ and covariance $K\left(\boldsymbol{x}, \boldsymbol{x}^{\prime}\right)$ often include the constants which determines the structures, such as $\gamma^{2}$ and $\sigma^{2}$ in a Gaussian convariance function $\sigma^{2}\exp(-\|x-x^{\prime}\|^{2}_{2} / \gamma^{2})$ where $\|\cdot\|_{2}$ is the Euclidean distance. We call these constants hyperparameters or parameters. The covariance matrix of the random vector $\boldsymbol{Y}$ is then given by
\[
K(X_{N},X_{N})=\left(\begin{array}{ccc}
k\left(\boldsymbol{x}_{1}, \boldsymbol{x}_{1}\right) & \cdots & k\left(\boldsymbol{x}_{1}, \boldsymbol{x}_{N}\right) \\
\vdots & \ddots & \vdots \\
k\left(\boldsymbol{x}_{N}, \boldsymbol{x}_{1}\right) & \cdots & k\left(\boldsymbol{x}_{N}, \boldsymbol{x}_{N}\right)
\end{array}\right)
\]
 The hyperparameters in $m(\boldsymbol{x})$ and $K\left(\boldsymbol{x}, \boldsymbol{x}^{\prime}\right)$ are typically identified by maximizing the $\log$ marginal likelihood of the observations (see examples in Section 2.2 in \citep{rasmussen2006gaussian}). Typically, the log likelihood is given in the following form
 \[\mathcal{L}_{N}(\theta) = -\frac{N}{2}\log(2 \pi) - \frac{1}{2}\log(\operatorname{det} ( K(X_{N},X_{N})))-\frac{\boldsymbol{Y}^{\top{}} K^{-1}(X_{N},X_{N}) \boldsymbol{Y}}{2} \]
 where $\operatorname{det} ( K(X_{N},X_{N}))$ is the determinant of the matrix $K(X_{N},X_{N})$.  The GPR prediction of $y\left(\boldsymbol{x}\right)$ at $x$ consists of the posterior normal distribution $\mathcal{N}\left(m_{N}^{y}\left(x\right), \hat{s}^{2}_{N}\left(\boldsymbol{x}\right)\right),$ with the posterior or predictive mean and variance given by
\begin{eqnarray}
\label{predictives}
\begin{aligned}
m_{N}^{y}\left(\boldsymbol{x}\right) &=m\left(\boldsymbol{x}\right)+K\left(\boldsymbol{x}, X_{N}\right)^{\top} K(X_{N},X_{N})^{-1}(\boldsymbol{y}(X_{N})-\boldsymbol{m}(X_{N})) \\
\hat{\mathrm{s}}^{2}_{N}\left(\boldsymbol{x}\right) &=K(\boldsymbol{x}, \boldsymbol{x})-K\left(\boldsymbol{x}, X_{N}\right)^{\top} K(X_{N},X_{N})^{-1} K\left(\boldsymbol{x}, X_{N}\right)
\end{aligned}
\end{eqnarray}
where $\boldsymbol{m}(X_{N})=\left(m\left(x_{1}\right), \ldots, m\left(x_{N}\right)\right)^{\top}$ and $K\left(\boldsymbol{x}, X_{N}\right)$ is the vector of covariances
\[
K\left(\boldsymbol{x}, X_{N}\right)=\left(K(\boldsymbol{x}, \boldsymbol{x}_{1}), K(\boldsymbol{x}, \boldsymbol{x}_{2}), \cdots, K(\boldsymbol{x}, \boldsymbol{x}_{N}) \right)^{\top}
.\]

We now present a brief overview of the initial approach of multi-fidelity GP emulators or co-kriging by \citep{kennedy2000predicting}. A co-kriging model is introduced based on a first-order auto-regressive relation between model outputs of different levels of fidelity. Suppose that we have $L$-levels of variable-fidelity computer simulations $\left(y_{l}(x)\right)_{l=1}^{L}$, which are deterministic. We denote  $\boldsymbol{y}_{l,N_{l}}=\left(y_{l}(x_{1,l}), y_{l}(x_{2,l}) \ldots, y_{l}(x_{N_{l},l})\right)^{\top}$ observations at the inputs $X_{l,N_{l}}=\left(x_{1,l}, \dots, x_{N_{l},l}\right)^{\top{}}$ sorted by increasing fidelity and modelled as observations of a Gaussian process $\left(Y_{l}(\mathrm{x})\right)_{l=1}^{L}$ where $x_{i,l} \in \mathbb{D} \subseteq \mathbb{R}^{d}, y_{l}(x_{i,l}) \in \mathbb{R}$ ($1 \leq i \leq N_{l}$). Then, $y_{L}(x)$ denotes the output of the most accurate and expensive model, while $y_{1}(x)$ is the output of the cheapest and least accurate surrogate at our disposal. It is assumed that the corresponding experimental design sets $\left \{X_{l,N_{l}}\right\}_{l=1}^{L}$ have a nested structure, i.e., $X_{L,N_{L}} \subseteq X_{L-1,N_{L-1}} \subseteq \ldots \subseteq X_{1,N_{1}} .$ The auto-regressive co-kriging scheme of \citep{kennedy2000predicting} reads as
\begin{eqnarray}
\label{ohagan}
Y_{l}(x)=\rho_{l-1} Y_{l-1}(x)+\delta_{l}(x), \quad l=2, \ldots, L
\end{eqnarray}
where $\delta_{l}(x)$ is a Gaussian process independent of $\left\{Y_{l-1}(x), \ldots, Y_{1}(x)\right\}$ and distributed with expectation $\mu_{\delta_{t}}$ and covariance $\Sigma_{l}$, i.e. $\delta_{l} \sim \mathcal{N}\left(\mu_{\delta_{l}}, \Sigma_{l}\right)$. We denote by convention, $Y_{1}(x) = \delta_{1}(x)$ and also utilize a Gaussian process to represent this base low fidelity run, as we do for the increments. Also, $\rho$ is a scaling factor and a deterministic scalar independent of $x$. $\rho$ also quantifies the correlation between the model outputs $\left(y_{l}(x), y_{l-1}(x)\right)$. The derivation of this model is based on the Markov property
\[
\operatorname{Cov} \bigl( Y_{l}(\mathbf{x}), Y_{l-1}\left(\mathbf{x}^{\prime}\right) | Y_{l-1} (\mathbf{x}) \bigr)=0, \quad \text{ for all } \mathbf{x} \neq \mathbf{x}^{\prime}
\]
which translates into assuming that given $Y_{l-1}(x),$ we can learn nothing more about $Y_{l}(\mathrm{x})$ from any other model output $Y_{l-1}\left(\mathrm{x}^{\prime}\right),$ for $\mathrm{x} \neq \mathrm{x}^{\prime}$.

The resulting posterior distribution at the $l$th co-kriging level has a mean and covariance given by
\begin{eqnarray}
\begin{aligned}
\mu_{Y_{l} | \boldsymbol{y}_{l,N_{l}}, \boldsymbol{y}_{l-1,N_{l-1}}, \ldots, \boldsymbol{y}_{1,N_{1}}}(\mathbf{x})=\mu_{l}+\left(\Sigma_{l} A_{l}^{\top }\right)\left(A_{l} \Sigma_{l} A_{l}^{\mathrm{T}}+\sigma_{l} I\right)^{-1}\left(\boldsymbol{y}_{l,N_{l}}-A_{l} \mu_{l}\right) \\
\Sigma_{Y_{l} | \boldsymbol{y}_{l,N_{l}}, \boldsymbol{y}_{l-1,N_{l-1}}, \ldots, \boldsymbol{y}_{1,N_{1}}}(\mathbf{x})=\Sigma_{Y_{l}}+\rho_{l-1}^{2} \Sigma_{Y_{l-1}}+\rho_{l-1}^{2} \rho_{l-2}^{2} \Sigma_{Y_{l-2}}+\cdots+\left(\Pi_{l=1}^{l-1} \rho_{l}\right)^{2} \Sigma_{Y_{1}}
\end{aligned}
\label{ohagan_mean}
\end{eqnarray}
where $\mu_{l}$ is the mean value of $Y_{l}(x)$,  $ \Sigma_{l}$ is a covariance matrix comprising $l$ blocks, representing all cross-correlations between levels $\{l, l-1, \ldots, 1\}$ and $\Sigma_{Y_{l}}=\Sigma_{l}-\left(\Sigma_{l} A_{l}^{\top}\right)\left(A_{l} \Sigma_{l} A_{l}^{\top}\right)^{-1}\left(A_{l}^{\top } \Sigma_{l}\right)$
is the covariance of the co-kriging predictor at level $l$. Also, $A_{l}$ is a simple matrix that restricts a realization of the Gaussian process $Y_{l}(x)$ to the locations of observed data $\boldsymbol{y}_{l,N_{l}}$ at level $l$. Similar to ordinary GPR, the set of unknown hyperparameters in these distributions are determined by maximum likelihood method or standard Markov Chain Monte Carlo. 

\cite{le2014recursive} presents the extended formulation of the multi-fidelity model and \cite{le2015cokriging} follows its special form and considers a strategy of sequential design in this special case. The unique difference with the previous model is that  $Y_{l}(x)$ (the Gaussian process modeling the response at level $l$ ) is a function of the Gaussian process $Y_{l-1}(x)$ conditioned on the values $\{ \boldsymbol{y}_{i,N_{i}} \}_{i=1}^{l-1}$ at points in the experimental design sets $\{X_{i,N_{i}}\}_{i=1, \ldots, l-1}$. For brevity, let the realizations $Y_{l}(X_{l,N_{l}})$ denote $\boldsymbol{y}_{l,N_{l}}$. They also introduced the dependency of $\rho_{l}(x)$ on input $x$. As in the previous model, the nested property for the experimental design sets is assumed.

Let us consider the following model for $l=2, \ldots, L:$
\begin{eqnarray}
\label{leg}
\left\{\begin{array}{l}
Y_{l}(x)=\rho_{l-1}(x) \tilde{Y}_{l-1}(x)+\delta_{l}(x) \\
\tilde{Y}_{l-1}(x) \perp \delta_{l}(x) \\
\rho_{l-1}(x)=g_{l-1}^{\top{}}(x) \beta_{\rho_{l-1}}
\end{array}\right.
\end{eqnarray}
where $ \delta_{l}(x)$ is a
Gaussian process $GP\left(f_{l}^{\top{}}(x) \beta_{l}, \sigma_{l}^{2} r_{l}\left(x, x^{\prime}\right)\right)$, $f_{l}(x)$ and $g_{l}(x)$ are vectors of regression functions and $\beta_{l}$ and $\beta_{\rho_{l}}$ are the coefficient vectors (hyperparameters) for $f_{l}(x)$ and $g_{l}(x)$. $\perp$ denotes the independence relationship and $X_{L,N_{L}} \subseteq X_{L-1,N_{L-1}} \subseteq \cdots \subseteq X_{1,N_{1}}$. 
$\tilde{Y}_{l-1}(x)$ is a GP conditioned on $\left[ \boldsymbol{y}_{l-1,N_{l-1}}, \beta_{l-1}, \beta_{\rho_{l-2}}, \sigma_{l-1}^{2}\right]$ and the hyperparameters $\beta_{l-1}, \beta_{\rho_{l-2}}, \sigma_{l-1}^{2}$ are assumed to have the prior distributions. Then, for $l=2, \ldots, L$ and $x \in \mathbb{R}^{d}$, we have: 
\[
\left[Y_{l}(x) \mid \boldsymbol{y}_{l,N_{l}}, \beta_{l}, \beta_{\rho_{l-1}}, \sigma_{l}^{2}\right] \sim \mathcal{N}\left(\mu_{Y_{l}}(x), s_{Y_{l}}^{2}(x)\right)
\]
where
\begin{eqnarray}
\mu_{Y_{l}}(x) &=&\rho_{l-1}(x) \mu_{Y_{l-1}}(x)+f_{l}^{\top{}}(x) \beta_{l} \nonumber \\
&+&r_{l}^{\top{}}(x) R_{l}^{-1}\left(Y_{l}\left(X_{l,N_{l}}\right)-\rho_{l-1}\left(X_{l,N_{l}}\right)  \odot  Y_{l-1}\left(X_{l,N_{l}}\right)-F_{l} \beta_{l}\right)
\label{legratiet_mean}
\end{eqnarray}
and
\[
\sigma_{Y_{l}}^{2}(x)=\rho_{l-1}^{2}(x) \sigma_{Y_{l-1}}^{2}(x)+\sigma_{l}^{2}\left(1-r_{l}^{\top{}}(x) R_{l}^{-1} r_{l}(x)\right)
\]
$R_{l}$ is the correlation matrix $R_{l}=\left(r_{l}\left(x, x^{\prime}\right)\right)_{x, x^{\prime} \in X_{l,N_{l}}}, r_{l}^{\top{}}(x)$ is the correlation vector $r_{l}^{\top{}}(x)=\left(r_{l}\left(x, x^{\prime}\right)\right)_{x^{\prime} \in X_{l,N_{l}}}$. $F_{l}$ and $\rho_{l-1}\left(X_{l,N_{l}}\right)$ is the matrix containing the values of $f_{l}(x)^{\top{}}$ and $\rho_{l-1}(x)^{\top{}}$ on $X_{l,N_{l}}$. The notation $\odot$ represents the element by element matrix product.

The mean $\mu_{Y_{l}}(x)$ is the surrogate model of the response at level $l, 1 \leq l \leq L,$ taking into account the known values of the $l$ first levels of responses $\left(Y_{i}\right)_{i=1, \ldots, L}$ and the variance $\sigma_{Y_{l}}^{2}(x)$ represents the mean squared error of this model. The mean and the variance of the Gaussian process regression at level $l$ are expressed in function of the ones of level $l-1$ and we have a recursive multi-fidelity GPR. The estimation of hyperparameters is implemented by maximum likelihood method or minimising a loss function of a Leave-One-Out Cross-Validation procedure. \cite{le2014recursive} considers a nested space-filling design for choosing $X_{l,N_{l}}$ and \cite{le2015cokriging}, which assumes constant $\rho$ thus a special case of \cite{le2014recursive}, proposes a strategy of sequential design, which combines both the error evaluation provided by the GPR model and the observed errors of a Leave-One-Out Cross-Validation procedure.

One of the drawbacks of these formulations is that the discrepancy of the mean of the statistical emulator from the true response is not considered. Therefore, the above methods would not guarantee accurate prediction in theory and they do not inherently incorporate the prediction error. Indeed, there is little theoretical background for the predictive mean functions in (\ref{ohagan_mean}) and (\ref{legratiet_mean}) to get closer to the true response even as the number of the samples grows. Thus, we expect that more carefully specified mean functions would lead to a reduction in the prediction error.

\
\cite{perdikaris2017nonlinear} generalizes the autoregressive multi-fidelity scheme of equation (\ref{ohagan}) while it does not cover the framework of experimental design. This formulation is the so-called deep GP as first put forth in \cite{damianou2013deep,damianou2015deep} and goes well beyond the linear structure of the multi-fidelity GPR. Their formulation is 
\begin{eqnarray}
Y_{l}(x)=f_{l-1}\left(Y_{l-1}(x)\right)+\delta_{l}(x)
\label{deep_GP}
\end{eqnarray}
where $f_{l-1}(\cdot)$ is an unknown function that maps the lower fidelity model output to the higher fidelity one. They propose a Bayesian non-parametric treatment of $f_{l-1}$ by assigning it a GP prior. Because $Y_{l-1}$ in equation (\ref{deep_GP}) is also assigned a GP prior, the functional composition of two GP priors, i.e. $f_{l-1}\left(Y_{l-1}(x)\right)$, gives rise to the deep GP framework \cite{damianou2013deep,damianou2015deep}, and, therefore, the posterior distribution of $Y_{l}$ is no longer Gaussian. 

To avoid a significant increase in computational cost and far more complex implementations than standard GPR, they replace the GP prior $Y_{l-1}$ with the GP posterior from the previous inference level $Y_{* l-1}(x)$, where $Y_{* l-1}(x)$ is the posterior of $Y_{l-1}$. Then, using the additive structure of equation (\ref{deep_GP}), along with the independence assumption between the GPs $Y_{l-1}$ and $\delta_{l}$, they summarise the autoregressive scheme of equation (\ref{deep_GP}) as
$$
Y_{l}(x)=g_{l}\left(x, Y_{*{l-1}}(x)\right)
$$
where $g_{l} \sim \text{GP}\left( \mathbf{0}, k_{l}\left(\left(x, Y_{* _l-1}(x)\right),\left(x^{\prime}, Y_{*_l-1}\left(x^{\prime}\right)\right) \right)\right)$.
They consider a covariance kernel that decomposes as
$$
k_{l}=k_{l_{1}}\left(x, x^{\prime} \right) \cdot k_{l_{2}}\left(Y_{*{l-1}}(x), Y_{*{l-1}}\left(x^{\prime}\right)\right)+k_{l_{3}}\left(x, x^{\prime}\right)
$$
where $k_{l_{1}}, k_{l_{2}}$ and $k_{l_{3}}$ are square exponential covariance functions with different hyperparameters. The estimation of the hyperprameters is based on the maximum-likelihood estimation procedure and they utilise the gradient descend optimizer L-BFGS using randomized restarts to ensure convergence to a global optimum.
For the purpose of performing predictions given uncertain inputs $x_{*}$ for $l \geq 2$, the posterior distribution of $Y_{* l}\left(x_{*}\right)$ is given by
\begin{eqnarray}
\begin{aligned}
p\left(Y_{* l}\left(x_{*}\right)\right) &:=p\left(Y_{l}\left(x_{*}, Y_{*_{l-1}}\left(x_{*}\right)\right) \mid Y_{*_{l-1}}, x_{*}, y_{l}(X_{l,N_{l}}), X_{l,N_{l}}\right) \\
&=\int p\left(Y_{l}\left(x_{*}, Y_{*_{l-1}}\left(x_{*}\right)\right) \mid  x_{*}, y_{l}(X_{l,N_{l}}), X_{l,N_{l}}\right) p\left(Y_{*_{l-1}}\left(x_{*}\right)\right) \mathrm{d} \bigl( Y_{*_{l-1}}(x_{*}) \bigr)
\end{aligned}
\label{deep_MC}
\end{eqnarray}
where $p\left(Y_{*_{l-1}}\left(x_{*}\right)\right)$ denotes the posterior distribution of $Y_{*_{l-1}}$ at the previous level $(l-1)$. For $l=1$, the posterior distribution is simply the multidimensional normal distribution with the mean and covariance given in (\ref{predictives}). They compute the predictive mean and variance of all posteriors $p\left(Y_{*_{t}}\left(x_{*}\right)\right), l \geq 2$, using Monte Carlo integration of equation (\ref{deep_MC}). 

In their numerical example, they claim that their algorithm allows for capturing complex nonlinear, non-functional and space-dependent cross-correlations based on their deep GP formulation. 
Moreover, in the scarce data regime typically encountered in multi-fidelity applications, they argue that their algorithm avoids the problem of estimating a large number of hyperparameters in $\rho_{l}(x)$ in (\ref{leg}) and achieves a fundamental extension of the schemes by assuming the deep GP formulation.  
However, their approach assumes that all kernels account for directional anisotropy in each input dimension using automatic relevance determination (ARD) weights \cite{rasmussen2006gaussian} hence requires the estimation of $(2d + 3)$ model hyperparameters for every $l \geq 2$, resulting in the total $(L-1)\times (2d+3) + d+1$ hyperparameters. In a high dimensional case and under a severe restriction of the computational budget, the number of the hyperparameters might still be overwhelming. Furthermore, their Monte Carlo integration scheme is highly prohibitive in computational costs when the number of the levels is large. 


\subsection{Sequential design }
An experimental design is a specification of inputs in the domain of our interest at which we wish to compute the outputs \cite{santner2003design}. In our case, experimental design is about obtaining the training data and specifying the appropriate location. In the case of GPR, the posterior mean and covariance functions are used for its prediction. Therefore, the locations of inputs and corresponding outputs used as the training data as well as its number of data points do affect the quality of the prediction. In a design of experiments, it is customary to use space-filling designs, which spread observations evenly throughout the input region \cite{simpson2001sampling,santner2003design}, such as uniform designs, multilayer designs, maximin (Mm)- and minimax (mM)-distance designs, and Latin hypercube designs (LHD). Space-filling designs treat all parts of the design space as equally important hence it can be inefficient especially if the domain of input is high dimensional. A variety of adaptive designs (an algorithm is called adaptive if it updates its behavior to new data) have been proposed, which can take advantage of information collected during the experimental design process \citep{santner2003design}, typically in the form of input-output data from simulation runs.  In the case of GPR, the hyperparameters in kernel functions are updated recursively whenever new data becomes available. Some classical adaptive design criteria are the maximum mean squared prediction error (MMSPE), the integrated MSPE (IMSPE), and the entropy criterion \citep{sacks1989design}.
\

For a computer experiment, the domain $\mathbb{D}$ is often discretised into grid $\mathbb{D}_{G}$ with $N_{G}$ points in practice since the optimization search of training data is a formidable task if $\mathbb{D}$ is not.  Consequently, we replace the search over
$\mathbb{D}$ by a search over a set of candidate points $X_{\text{cand}} \subseteq \mathbb{D}_{G}$. We consider sequential designs as practical, computationally cheaper alternatives to “one shot” designs, which determines all input points in one step, albeit often suboptimal. The sequential design is defined as follows: Suppose that we have an initial design $ \{(x_{j},y(x_{j}))\}_{j=1}^{n} = \{X_{n},\boldsymbol{y}(X_{n})\}$, then for each $k = n, n+1, n+2, \ldots$ one collects an input-output pair $(x_{k+1},y(x_{k+1}))$ by choosing the input values
\begin{eqnarray*}
x_{k+1} = \argmax_{x \in X_{\text{cand}} \setminus X_{k} } F_{k}(x).
\end{eqnarray*}
where $F_{k}(\cdot)$ is a design criterion to be maximized. The algorithm iterates until a stopping criterion is met or the computational budget allocated is exhausted. Often the design criterion makes use of the predictive variance of a Gaussian process, defined in (\ref{predictives}) and denoted by $\hat{s}^{2}_{k}(x)$   when conditional on $X_{k}$ with $k$ points with estimated parameters in $K$. Indeed it is natural to look for points where the predictive variance is large, or where the variance can be reduced the most when adding a point in that region. In our context, the parameters in the kernel of a GP need to be estimated. Sequential designs allow the estimates to be improved sequentially with the additional new design points. This is especially advantageous when some input variables are considerably more influential on the output than others. 

We employ a useful method called MICE (mutual information computer experiment) design as a design strategy in our proposed framework, MLASCE. In general, MICE outperforms other approaches for sequential design, such as the active learning MacKay (ALM) \cite{mackay1992information} and Active learning Cohn (ALC) \cite{cohn1995active} algorithm, as shown in the numerical examples in \citep{beck2016sequential}. For our purpose of selecting a very small number of runs in the design for high fidelity simulators, it is essential to pick these runs very carefully to optimize the design. MICE aims to construct an emulator for an expensive simulation with small to medium-sized experimental designs, therefore, an appropriate tool in our framework. Since MICE is a significant improvement from mutual information criteria, we start from the review of it.

\subsubsection{Mutual information criteria}
Mutual information (MI) is a classical information-theoretic measure. The goal is to place a set of design $\mathcal{X}$ that will give us good predictions at all unexplored locations $\mathbb{D} \setminus \mathcal{X}$. Specifically, we want to find 
\begin{eqnarray}
\label{MI}
\mathcal{X}^{*} = \argmax_{\mathcal{X} \subseteq X_{\text{cand}}} H(\mathbb{D}_{G} \setminus \mathcal{X}) - H(\mathbb{D}_{G} \setminus \mathcal{X} | \mathcal{X} )
\end{eqnarray}
where $H(X)$ denotes the entropy of Gaussian process $Y(X)=\left( Y(x_{1}),\ldots, Y(x_{n}) \right)$ with the location $X= \left(x_{1}, \ldots, x_{n}\right)$ and $H(X|Z)$ is the entropy of a conditional Gaussian process $Y(X|Z)$ conditioned on the realized values $Y(Z)$ with the location $Z= \left(z_{1}, \ldots, z_{m}\right)$. 
The set $\mathcal{X}^{*}$ maximally reduces the entropy over the rest of space $\mathbb{D}_{G} \setminus \mathcal{X}$. The entropy of a Gaussian process is  
\[ H(X)= {n \over 2} \log(2 \pi) \operatorname{det} \bigl( \operatorname{Cov} (Y(X)) \bigr)\] where $\operatorname{det} \left( \cdot \right)$ is the determinant of a matrix and $\operatorname{Cov} (Y(X))$ means the covariance matrix of $Y(X)$. Though this optimization problem  is NP-hard, \cite{krause2008near} presented an alternative algorithm, known as MI algorithm, which avoids directly maximizing the difference of entropies. The algorithm is as follows and we denote the collection of $k$ inputs $X_{k} = \left( x_{1}, \ldots, x_{k}\right)$ and outputs $\boldsymbol{y}_{k} = \left(y_{1}, \ldots, y_{k} \right)$.

\begin{description}
  \item[Require]: Function $y(x)$, GP emulator $(m(\cdot), K(\cdot, \cdot))$, nugget parameters $\tau^{2}$, grid $\mathbb{D}_{G}$, candidate set $X_{\text{cand}} \subseteq \mathbb{D}_{G}$, an initial design $X_{k} \subseteq X_{\text{cand}}$ of size $k$, desired design size $N$.
  \item[Step 1] Let $X_{\text{cand}} \leftarrow X_{\text{cand}}  \setminus X_{k} $ 
  \item[Step 2] Solve $x_{k+1} \leftarrow  \argmax_{x \in X_{\text{cand}}}  \hat{s}^{2}_{k}(x,\tau^{2})  \slash \hat{s}^{2}_{G \setminus (k\cup x)}(x,\tau^{2})$
    \item[Step 3] Let $X_{k+1} \leftarrow X_{k} \cup x_{k+1}$, and $X_{\text{cand}} \leftarrow X_{\text{cand}} \setminus x_{k+1}$
  \item[Step 4] If $k+1= N $, then stop; otherwise let $k=k+1$, and go to Step 1.
   \item[Output] : $X_{N}$. 
\end{description}

In the above algorithm, a parameter $\tau^{2}>0$ is introduced in the diagonal elements of the correlation matrix $K$ of the GP. Such a parameter is often called a nugget parameter. In practice, when ${K}$ should be inverted, we actually invert $K+ \tau^{2} I$ instead of $K$, where $I$ is the $k \times k$ identity matrix. A nugget parameter $\tau^{2}$ is commonly used to stabilize the inversion, using the Cholesky decomposition, and usually takes a very small value. $\hat{s}^{2}_{k}(x,\tau^{2})$ is the prediction variance where the inversion of $K$ is replaced with that of  $K+\tau^{2}I$ in (\ref{predictives}) and $G \setminus (k\cup x)$ in $\hat{s}^{2}_{G \setminus (k\cup x)}(x,\tau^{2})$ means the number of points where $X_{k}$ and $x$ are excluded. In Step $2$, a GP emulator is assigned to the set of points $\{ (x_{i}, y(x_{i})) \}_{i=1}^{k}$, and, for each $x \in X_{\text{cand}}$, a GP emulator is assigned to $\mathbb{D}_{G} \setminus\left(X_{k} \cup x\right)$. The GP over $\mathbb{D}_{G} \setminus \left(X_{k} \cup x\right)$ is required in order to estimate the difference between the total information and the information we have obtained by $X_{k} \cup x$.

\subsubsection{Mutual information for computer experiments}
 MI reduces uncertainties over the domain with relatively cheap computational costs. However, MI lacks adaptivity; that is, the parameters in the covariance are assumed to be known. Moreover, computational instability is inherent due to the inversion of the covariance matrix in the denominator $\hat{s}^2_{G \cup X}(x, \tau^{2})$ in Step 2 of MI algorithm. \cite{beck2016sequential} developed a modified approach known as mutual information for computer experiments (MICE). They introduced a different nugget parameter in diagonal elements of the covariance matrix in the denominator $\hat{s}^2_{G \cup X}(x, \tau^{2})$ to stabilize the inversion of a matrix. In addition, they added adaptivity in the sense that $\mathbb{D}_{G}$ is sampled in the design space $\mathbb{D}$ instead of keeping it fixed, and the parameters in the covariance matrix are estimated at each step of sampling. The outlined algorithm is as follows \cite{beck2016sequential}. The covariance $K(\cdot, \cdot)$ is replaced with $K_{\theta}(\cdot, \cdot)$ to emphasize the hyperparameters $\theta$ in $K$. 

\begin{description}
  \item[Require]: Function $y(x)$, GP emulator $(m(\cdot), K_{\theta}(\cdot, \cdot ))$, nugget parameters $\tau^{2}$ and $\tau^{2}_{s}$, domain of input $\mathbb{D}$, initial data $\{X_{k},\boldsymbol{y}(X_{k}) \}$, discrete set size $N_{G}$, candidate set size $N_{\text{cand}}$, desired size $N$. 
  \item[Step 1] Implement maximum likelihood estimation to obtain estimates $\hat{\theta}_{k}$ of $\theta$ based on $\{ X_{k},\boldsymbol{y}(X_{k}) \}$ in $K_{\theta}(\cdot, \cdot)$.
  \item[Step 2] Fit GP emulator to data  $\{ X_{k},\boldsymbol{y}(X_{k}) \}$.
    \item[Step 3] Generate a discrete set $\mathbb{D}_{G}$ of size $N_{G}$, and choose a candidate set $X_{\text{cand}} \subseteq \mathbb{D}_{G}$.
  \item[Step 4] Solve $x_{k+1} = \argmax_{x \in X_{\text{cand}}}  \hat{s}^{2}_{k}(x,\hat{\theta}_{k},\tau^{2}) \slash \hat{s}^{2}_{G \setminus (k\cup x)}(x,\hat{\theta}_{k},\max({\tau^{2}, \tau^{2}_{s}}))$
    \item[Step 5] Evaluate $y_{k+1}=y(x_{k+1})$, and let $X_{k+1} = X_{k} \cup x_{K+1}$ and $\bold{y}(X_{k+1}) = \bold{y}(X_{k}) \cup y_{k+1}$.
  \item[Step 6] If $k+1= N $, then stop; otherwise let $k=k+1$, and go to Step 1.
   \item[Output]: The collection of inputs and outputs  $\{X_{N},\boldsymbol{y}(X_{N}) \}$ of size $N$. 
\end{description}
In step $2$, the maximum likelihood method is implemented to get the estimation of the hyperparameters in the covariance. For example, if the covariance kernel is Gaussian, the shape and scale parameters are estimated while all of the hyperparameters except the smoothness parameter are estimated in the case of Mat\'ern kernel. In Step $3$, \cite{beck2016sequential} suggest that $\mathbb{D}_{G}$ be sampled in the design space $\mathbb{D},$ instead of keeping $\mathbb{D}_{G}$ fixed throughout. In Step 4 the MICE criterion is evaluated for all $x \in X_{\text {cand }}$ and $\tau^{2}_{s}$ is another nugget parameter that may be used in place of $\tau^{2}$  to avoid numerical stability in the denominator $\hat{s}^{2}_{G \setminus (k\cup x)}(x,\hat{\theta}_{k},\max({\tau^{2}, \tau^{2}_{s}}))$. The estdimate $\hat{\theta}_{k}$ in $\hat{s}^{2}_{k}(x,\hat{\theta}_{k},\tau^{2})$ is explicitly written so that the hyperparameter is updated adaptively. The nugget parameter $\tau^{2}$ usually takes small values and would cause numerical instability if $\tau_{s}^{2}$ were not in use. \cite{beck2016sequential} suggests around $1.0$ for $\tau_{s}^{2}$. 

A few comments on the choice of a nugget parameter are given here. \cite{ranjan2011computationally} and \cite{peng2014choice} suggest a framework for obtaining a lower bound of a nugget parameter to address ill-conditioning, whereas \cite{dancik2008mlegp} uses MLE of the small nugget parameter for achieving numerical stability. In contrast, \cite{gramacy2012cases} strongly advocates for the use of a (non-small) nugget parameter compared to consequences of the zero-nugget model concentrating on deterministic computer experiments. They argue that the nugget is crucial for maintaining smoothness of the emulator when the data are too sparse to get a good fit of the function. They also observe that while the nugget model smooths and produces reasonable confidence bands, the no-nugget model ends up making predictions well outside the range of the actual data in that region, and its confidence bands are all over the place. Further, \cite{beck2016sequential} derives in their Theorem 3.1 if the nugget $\tau^{2}$ is large enough, the predictive variance $\hat{s}_{\tau^{2}}^{2}\left(\boldsymbol{x}_{i}\right) \approx \sigma^{2} \tau^{2} / k$ for $\boldsymbol{x}_{i} \in \boldsymbol{X}_{k}$ ($k$ is the number of points in the design), which is used in the denominator of Step 4 in the above MICE algorithm. Thus, the initial choice $\tau^{2}_{s}=1.0$, assuming this value is large enough for this approximation, could be partially verified as both the prediction and the denominator of Step 4 would become stable.

\subsection{Reproducing kernel Hilbert spaces and choice of kernels}
To implement the error analysis of an emulator and evaluate the significance of expensive simulations compared to the cheaper ones based on the function norm, a reproducing kernel Hilbert space (RKHS) is a convenient tool. Apparently, the first deep investigation of RKHS goes back to \citep{aronszajn1950theory}, and an overview of statistical applications for RKHS theory is provided in \citep{berlinet2011reproducing}. The following definition of RKHS is given in \citep{hsing2015theoretical}. 

\begin{dfn}[{\cite{hsing2015theoretical}}, Definition 2.7.1.] 
Let $\mathbb{H}$ be a Hilbert space composed of the real valued functions defined on $\mathbb{D}$. A bivariate function $K$ on $\mathbb{D} \times \mathbb{D}$ is said to be a reproducing kernel (RK) for $\mathbb{H}$ if

\begin{enumerate}
\item for every $t \in \mathbb{D}, K(\cdot, t) \in \mathbb{H}$ and
\item $K$ satisfies the reproducing property that $f(t)=\langle f, K(\cdot, t)\rangle_{\mathbb{H}}$ for every $f \in  \mathbb{H}$ and $t \in \mathbb{D}$ where $\langle \cdot ,  \cdot \rangle_{\mathbb{H}}$ is the inner product of $\mathbb{H}$.
\end{enumerate}

When $\mathbb{H}$ possesses an RK it is said to be an RKHS.

\end{dfn}

A covariance function in (\ref{def_mean_cov}) precisely corresponds to the reproducing kernel of an RKHS. Indeed, if we define the $\mathbb{R}$-linear space using the stationary covariance function $K$, $[\mathbb{H}_{0} \coloneqq \text{span}\{K(\cdot,y) \hspace{4pt} : \hspace{4pt} y \in \mathbb{D}\}$, and equip this space with the bilinear form
\[\langle \sum_{j=1}^{N} \alpha_{j}K(\cdot,x_{j}), \sum_{i=1}^{M} \beta_{i}K(\cdot,y_{i}) \rangle_{\mathbb{H}_{0}} \coloneqq \sum_{j=1}^{N} \sum_{i=1}^{M} \alpha_{j} \beta_{i} K(x_{j},y_{i})\]
then $\mathbb{H}_{0}$ is a pre-Hilbert space with a reproducing kernel $K$ and $\langle \cdot, \cdot\rangle_{\mathbb{H}_{0}}$ is an inner product in $\mathbb{H}_{0}$ (Theorem 10.7 in \citep{wendland2004scattered}). Let $\tilde{\mathbb{H}}_{0}$ be the completion of $\mathbb{H}_{0}$ with respect to $\langle \cdot, \cdot\rangle_{\mathbb{H}_{0}}$ and the linear mapping for every $f \in \tilde{\mathbb{H}}_{0}$ be defined as

\[R: \tilde{\mathbb{H}}_{0} \rightarrow C(\mathbb{D}), \hspace{10pt} R(f)(x) = \langle f, K(\cdot, x) \rangle_{\mathbb{H}_{0}}.\]

As stated in Definition 10.9 and Theorem 10.10 in \citep{wendland2004scattered}, we can construct an RKHS $\mathbb{H}$ by defining $\mathbb{H} \coloneqq R(\tilde{\mathbb{H}}_{0})$ with the RK $K$ where $C\left(\mathbb{D}\right)$ is a set of continuous functions in the domain $\mathbb{D}$. Further, $\mathbb{H}_{0}$ is dense in $\mathbb{H}$ by its construction.

The structure of RKHS, or a variety of functions with which a Gaussian process emulator can deal, is totally determined by the covariance structure. In other words, we determine the function space where the true response lies by choosing the covariance structure. In this paper, Mat\'ern covariance functions are used due to a clear interpretation of the corresponding function spaces. We start by reviewing the following form of Mat\'ern covariance functions
\begin{eqnarray}
\label{mat}
K_{\nu, \lambda, \sigma^{2}}(x,x^{\prime}) =\sigma^{2} \frac{2^{1-\nu }}{\Gamma(\nu)}\left(\sqrt{2 \nu} \frac{ \|x-x^{\prime} \|_{2}}{ \lambda}\right)^{\nu} J_{\nu}\left(\sqrt{2 \nu} \frac{ \|x-x^{\prime} \|_{2}}{ \lambda}\right)
\end{eqnarray}
where $\|x-x^{\prime} \|_{2}$ denotes the Euclidean distance between $x$ and $x^{\prime}$ for $x, x^{\prime} \in \mathbb{R}^{d}$, $\Gamma(\nu)$ is the Gamma function for $\nu>0$ and $J_{\nu}$ is a modified Bessel function of order $\nu$. The hyperparameter $\sigma^{2}$ is usually referred to as the (marginal) variance, $\lambda$ as the correlation length and $\nu$ as the smoothness parameter. The expression for the Mat\'ern covariance kernel simplifies for particular choices of $\nu$. Notable examples include the exponential covariance kernel $\sigma^{2} \exp \left(-\left\|x-x^{\prime}\right\|_{2} / \lambda\right)$ with $\nu=1 / 2,$ and the Gaussian covariance kernel $\sigma^{2} \exp \left(-\left\|x-x^{\prime}\right\|_{2}^{2} / \lambda \right)$ in the $\operatorname{limit} \nu \rightarrow \infty$. Useful examples include the cases where $\nu = m +1/2$ for a non-negative integer $m$ \citep{rasmussen2006gaussian}:
\begin{eqnarray*}
K_{m +1/2, \lambda, \sigma^{2}}(x,x^{\prime}) = \sigma^{2}  \exp \Bigl( - \frac{\sqrt{2\nu}\|x-x^{\prime} \|_{2} }{\lambda} \Bigr) \frac{\Gamma(m+1)}{\Gamma(2m+1)} \sum_{i=0}^{m} \frac{(m+i)!}{i!(m-i)!}\Bigl(\frac{\sqrt{8\nu}\|x-x^{\prime}\|_{2}}{\lambda} \Bigr)^{m-i} 
\end{eqnarray*}
Choosing the hyperparemeters $\{ \nu, \lambda, \sigma^{2} \}$ in Mat\'ern kernel functions is quite important. We shall see later that $\nu$ determines the smoothness of the function we want to recover. Indeed, an overly large $\nu$, which implies that the corresponding RKHS is composed of very smooth functions, can miss the target function which may be potentially less smooth. Sobolev spaces are the common way of measuring the smoothness of a function \citep{tuo2020kriging,wang2020prediction,wynne2021convergence,teckentrup2020convergence} since we can interpret the function spaces via weak differentiability and we grasp how the parameter affects our approximation in the same vein. Let us denote $\mathbb{H}(\mathbb{R}^{d})$ be a RKHS defined in $\mathbb{R}^{d}$ with a Mat\'ern kernel ${K_{\nu, \lambda, \sigma^{2}}}$ specified in $(\ref{mat})$. Moreover, we define the Sobolev spaces $H^{s}(\mathbb{R}^{d})$ with the norm $ \|\cdot\|_{H^{s}}$ as 
\begin{eqnarray}
\label{fracsobolev}
\begin{aligned}
 H^{s}(\mathbb{R}^{d}) \coloneqq \left\{f \in L^{2}\left( \mathbb{R}^{d}\right) \mid (1+\|\xi\|^{2})^{s /2 }\widehat{f}(\xi) \in L^{2}(\mathbb{R}^{d})  \right\} & \\
 \|\cdot\|_{H^{s}} \coloneqq \Bigl( \int_{\mathbb{R}^{d}} (1+\|\xi\|^{2})^{s } |\widehat{f}(\xi)|^{2} d \xi \Bigr)^{1/2}
 \end{aligned}
 \end{eqnarray}

where $\hat{f}(\xi)$ is the Fourier transform of $f(x)$ and $L^{p}(\mathbb{R}^{d})$ means the $L^{p}$ space in the measure space $(\mathbb{R}^{d},\mathcal{R},\mu)$ for $1 \leq p \leq \infty$ ($\mathcal{R}$ is the $\sigma -$algebra on $\mathbb{R}^{d}$ and $\mu$ is the Lebesgue measure on $(\mathbb{R}^{d}, \mathcal{R})$). Then, we can understand the smoothness of the functions in $\mathbb{H}(\mathbb{R}^{d})$ by the following proposition.

\begin{prop}[{\cite{stuart2018posterior}}, Proposition 3.2]
\label{stuart}
 Let $k_{\nu, \lambda, \sigma^{2}}$ be a Mat\'ern covariance kernel as defined in (\ref{mat}).
Then the RKHS $\mathbb{H}(\mathbb{R}^{d})$ with the reproducing kernel $K_{\nu, \lambda, \sigma^{2}}$ is equal to the Sobolev space $H^{\nu+d / 2}(\mathbb{R}^{d})$ as a vector space, and the norm of $\mathbb{H}(\mathbb{R}^{d})$  and $H^{\nu+d / 2}(\mathbb{R}^{d})$ are equivalent.
\end{prop}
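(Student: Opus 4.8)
The plan is to go through the Fourier (spectral) characterisation of the RKHS of a translation-invariant kernel. Writing $K_{\nu,\lambda,\sigma^{2}}(x,x')=\Phi(x-x')$ for a radial function $\Phi$ on $\mathbb{R}^{d}$, the first step is to record the spectral density $\widehat{\Phi}$. A classical computation of the Fourier transform of the modified-Bessel expression in (\ref{mat}) — equivalently, using the representation of the Mat\'ern covariance as a Gaussian scale mixture with an inverse-Gamma mixing measure (see \cite{stein2012interpolation}) — gives
\[
\widehat{\Phi}(\xi) = c_{d,\nu}\,\sigma^{2}\lambda^{d}\left(1+\frac{\lambda^{2}\|\xi\|^{2}}{2\nu}\right)^{-(\nu+d/2)}
\]
for an explicit positive constant $c_{d,\nu}$. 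The features of this formula I will actually use are that $\widehat{\Phi}$ is continuous, strictly positive and integrable, and that it obeys a two-sided bound: there are constants $0<c_{1}\le c_{2}<\infty$ (depending on $d,\nu,\lambda,\sigma^{2}$) with
\[
c_{1}\left(1+\|\xi\|^{2}\right)^{-(\nu+d/2)}\le\widehat{\Phi}(\xi)\le c_{2}\left(1+\|\xi\|^{2}\right)^{-(\nu+d/2)},\qquad\xi\in\mathbb{R}^{d}.
\]

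Second, I would invoke the standard description of the RKHS attached to a continuous, integrable, positive-definite translation-invariant kernel with positive integrable spectral density (for instance \cite{wendland2004scattered}, Theorem 10.12): $\mathbb{H}(\mathbb{R}^{d})$ is precisely the set of $f\in L^{2}(\mathbb{R}^{d})$ with $\int_{\mathbb{R}^{d}}|\widehat{f}(\xi)|^{2}/\widehat{\Phi}(\xi)\,d\xi<\infty$, equipped (up to the fixed Fourier normalisation) with the inner product
\[
\langle f,g\rangle_{\mathbb{H}}=(2\pi)^{-d}\int_{\mathbb{R}^{d}}\frac{\widehat{f}(\xi)\,\overline{\widehat{g}(\xi)}}{\widehat{\Phi}(\xi)}\,d\xi.
\]
The hypotheses of that theorem have to be checked for the Mat\'ern $\Phi$ with $\nu>0$, but all of them are immediate from the explicit $\widehat{\Phi}$ above (integrability of $\widehat{\Phi}$ is exactly the condition $2\nu+d>d$).

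Combining the two steps closes the argument. Inserting the two-sided bound on $\widehat{\Phi}$ into the inner product gives
\[
c_{2}^{-1}\int_{\mathbb{R}^{d}}\left(1+\|\xi\|^{2}\right)^{\nu+d/2}|\widehat{f}(\xi)|^{2}\,d\xi\ \le\ (2\pi)^{d}\,\|f\|_{\mathbb{H}}^{2}\ \le\ c_{1}^{-1}\int_{\mathbb{R}^{d}}\left(1+\|\xi\|^{2}\right)^{\nu+d/2}|\widehat{f}(\xi)|^{2}\,d\xi,
\]
and each integral here is exactly $\|f\|_{H^{\nu+d/2}(\mathbb{R}^{d})}^{2}$ for the norm induced by (\ref{fracsobolev}) with $s=\nu+d/2$. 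Hence $f\in\mathbb{H}(\mathbb{R}^{d})$ if and only if $f\in H^{\nu+d/2}(\mathbb{R}^{d})$, and the two norms are equivalent, as claimed.

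The only genuinely technical point is the first step: identifying the Mat\'ern spectral density and, in particular, pinning down the two-sided polynomial bound. This is a special-function calculation (the Fourier transform of $t\mapsto t^{\nu}J_{\nu}(at)$, or subordination through an inverse-Gamma mixture); everything afterwards is routine Fourier bookkeeping plus the abstract RKHS characterisation. In a short write-up one may simply cite $\widehat{\Phi}$ as classical from the spatial-statistics literature and move directly to the second and third steps.
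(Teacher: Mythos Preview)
Your argument is correct and is the standard route to this result: identify the Mat\'ern spectral density, invoke the Fourier characterisation of the native space (Wendland, Theorem~10.12), and read off the two-sided comparison with the Sobolev weight $(1+\|\xi\|^{2})^{\nu+d/2}$. Note, however, that the paper does not supply a proof of this proposition at all; it is simply quoted from \cite{stuart2018posterior}, Proposition~3.2, so there is no in-paper argument to compare against---your sketch is essentially the argument one finds in that reference (and in \cite{wendland2004scattered}, Corollary~10.13).
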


It is seen that as a vector space, the RKHS $\mathbb{H}(\mathbb{R}^{d})$ of the Mat\'ern kernel $K_{\nu, \lambda, \sigma^{2}}$ is fully determined by the smoothness parameter $\nu$. The parameters $\lambda$ and $\sigma^{2}$ do, however, influence the constants in the norm equivalence of the RKHS norm and the standard Sobolev norm. In summary, if we assume the target function $y$ which we aim to recover satisfies $y \in H^{\nu+d / 2}(\mathbb{R}^{d})$, the smoothness of $y$ corresponds to the order $\nu+d/2$ in $H^{\nu+d / 2}(\mathbb{R}^{d})$, thereby the smoothness parameter $\nu$ in the Mat\'ern kernel \citep{kanagawa2018gaussian}. 
While the domain is assumed to be $\mathbb{R}^{d}$ in the above discussion, we may use an RKHS $\mathbb{H}(\mathbb{D})$ as a restriction of $\mathbb{H}(\mathbb{R}^{d})$ and this RKHS is equipped with the kernel being the restriction of $K$ to $ \mathbb{D} \times \mathbb{D}$. Theorem 6 in \cite{berlinet2011reproducing} guarantees this statement. Before finishing this subsection, the interpretation of the smoothness in $H^{\nu+d / 2}(\mathbb{R}^{d})$ is investigated via weak differentiability. 
The definition of the norm of the Sobolev space $W^{k, p}(\mathcal{D})$ is given by
\begin{eqnarray}
\label{sobolev_norm_integer}
\|u\|_{W^{k, p}(\mathcal{D})}:=\left\{\begin{array}{ll}
\left(\sum_{|\alpha| \leq k}\left\|D^{\alpha} u\right\|_{L^{p}(\mathcal{D})}^{p}\right)^{\frac{1}{p}} & 1 \leq p<\infty \\
\max _{|\alpha| \leq k}\left\|D^{\alpha} u\right\|_{L^{\infty}(\mathcal{D})} & p=\infty
\end{array}\right.
\end{eqnarray}

If we set $k=s$ as a positive integer and $p =2$ in $W^{k, p}(\mathbb{R}^{d})$, the norm in (\ref{sobolev_norm_integer}) is equivalent of the norm in (\ref{fracsobolev}). Hence, $W^{k, 2}(\mathbb{R}^{d})$ with the norm in (\ref{sobolev_norm_integer}) is equivalent of the Sobolev space defined in $(\ref{fracsobolev})$. Then, in this case, the smoothness of the Sobolev space $(\ref{fracsobolev})$ is interpreted in a way that a function in that space is weak differentiable up to $k$th order. The definition of $(\ref{fracsobolev})$ is more general since the index $s$ may take non-integer values in $(\ref{fracsobolev})$. Therefore, we can see that the RKHS with Mat\'ern kernel $k_{\nu, \lambda, \sigma^{2}}$ consists of functions that are weak differentiable up to order $\nu+d / 2$, assuming $\nu+d / 2$ is a positive integer \cite{kanagawa2018gaussian}.

\subsection{Uniform error bounds in GPR prediction}
\label{MLMC_basis_1}
In section \ref{MLMC_kriging}, we implement an independent analysis of optimal numbers of runs mixed with the concept of multilevel Monte Carlo (MLMC). The objective of this analysis is to get a sense of an optimal number of experimental design points before running computer simulations. A clear relationship between the accuracy of an emulator and the number of training data points was provided in \citep{wang2020prediction}, based on the framework of \citep{sacks1989design}. We introduce the main result of \citep{wang2020prediction} as a preparation for the later analysis in section \ref{MLMC_kriging}. The main theorem of \citep{wang2020prediction} is then combined with the idea of MLMC, which is introduced in the next section, for our novel result of optimal numbers of evaluations of different levels of simulations.
 In a non-Bayesian framework (frequentist approach), \cite{sacks1989design} treats the target function as if it were drawn randomly from some population of functions and postulate a Gaussian process model for the distribution of functions in that population. This randomness is assumed to be characterized by the Gaussian process with the true hyperparameters in its covariance structure, which cannot be actually known. This interpretation of the non-Bayesian work in \cite{sacks1989design} is also explained in \cite{kennedy2001bayesian}. In addition, \cite{sacks1989design} explains that taking the frequentist stance leads to replace the deterministic $y(x)$ in (\ref{predictives}) by the corresponding stochastic process or GP. \cite{wang2020prediction} adopts the same philosophy of \cite{sacks1989design}. Accordingly, in our framework of this section, Mat\'ern kernels (\ref{mat}) are assumed for the covariance structure of GPR and we study the case of misspecified kernel functions. One cannot know the underlining Gaussian process whose realisations are the values of the target function. As we reviewed, the GP is determined by the covariance function. Then, this addresses a common question in GPR when the true kernel function, hence the true underlining Gaussian process, is unknown. Let $\{\nu_{0},\lambda_{0},\sigma_{0}^{2}\}$ denote the true parameters, which are usually unknown, in (\ref{mat}) and we denote the Mat\'ern kernel in (\ref{mat}) as $K_{0}$ if the parameters are equal to $\{\nu_{0},\lambda_{0},\sigma_{0}^{2}\}$. Further, let $Z(x)$ be the Gaussian process $\text{GP}(0,K_{0}(x,x^{\prime}))$ with a Mat\'ern kernel $K_{0}(x,x^{\prime})$ in $\mathbb{R}^{d}$ and we assume that the evaluations of the target function are the realisations of $Z(x)$ (a non-Bayesian framework by \cite{sacks1989design,wang2020prediction}). For $x \in \mathbb{D}$ ($\mathbb{D}$ is the input domain), we also define a Power function $P_{K,X_{N}}(x)$ as
\begin{eqnarray}
\label{power_function}
P_{K,X_{N}}^{2}(x) = 1-K\left(\boldsymbol{x}, X_{N}\right)^{\top} K(X_{N},X_{N})^{-1}  K\left(\boldsymbol{x}, X_{N}\right) / \sigma^{2}
\end{eqnarray}
The Power function $P_{K,X_{N}}(x)$ is the special case of predictive variance $\hat{s}^{2}_{N}\left(\boldsymbol{x}\right)$ where $\sigma^{2} = 1$. We further define the supremum of the Power function as 
\begin{eqnarray}
\label{sup_power}
P_{K,X_{N}} = \sup_{x \in \mathbb{D}}P_{K,X_{N}}(x)
\end{eqnarray}
The following condition for the kernel is necessary to present the upper bound on the maximum prediction error of GPR in Theorem 1 in \cite{wang2020prediction}. Condition \ref{cond_scattered} and Theorem \ref{scattered_conv} hold for a general stationary kernel but we assume the Mat\'ern kernel in our framework. In the following, since $K$ and $k_{0}$ are stationary then $K(x,x^{\prime}) = K(r)$ and $K_{0}(x,x^{\prime})=K_{0}(r)$ with $r=x-x^{\prime}$ hold.

\begin{cond}[{\cite{wang2020prediction}}]
\label{cond_scattered}
 
 The stationary kernels $K(x,x^{\prime}) = K(r)$ and $K_{0}(x,x^{\prime}) = K_{0}(r)$ ($r= x-x^{\prime}$) are continuous and integrable as a function of $r$ on $\mathbb{R}^{d}$, satisfying
\[
\|\hat{K}_{0} / \hat{K}\|_{L^{\infty}\left(\mathbb{R}^{d}\right)}=: A_{1}^{2}<+\infty
\]

where $\hat{K}_{0}$ and $\hat{K}$ are the Fourier transforms of $K_{0}$ and $K$. In addition, there exists $\tilde{\alpha} \in(0,1]$, such that
\[
\int_{\mathrm{R}^{d}}\|\omega\|^{\tilde{\alpha}} \hat{K}(\omega) d \omega=: A_{0}<+\infty.
\]
\end{cond}

Then, we can present Theorem \ref{scattered_conv} to provide an upper bound on the maximum prediction
error of GPR, which is Theorem 1 in \citep{wang2020prediction}. We treat $Z(x) - m_{N}^{Z}\left(\boldsymbol{x}\right) $ as the prediction error and $m_{N}^{Z}\left(\boldsymbol{x}\right) =K\left(\boldsymbol{x}, X_{N}\right)^{\top} K(X_{N},X_{N})^{-1}\boldsymbol{Z}(X_{N})$ as defined in (\ref{predictives}). Since we are assuming a non-Bayesian approach in this section, remember that $y(x)$ in (\ref{predictives}) is replaced with the GP $Z(x)$. It is important to see that $\sigma^{2}$ (the variance of the GP $Z(x)$) in the following upper bound represents the magnitude of the target function. 

\begin{thm}[{Theorem 1 in \cite{wang2020prediction}}]
\label{scattered_conv}
 Suppose Condition \ref{cond_scattered} holds, and the design set $X_{N}$ is dense enough in the sense that $P_{K, X_{N}}$ defined in (\ref{sup_power}) is no more than some given constant $C$. 
 For any $u>0$ with probability at least $1-2 \exp \left\{-u^{2} /\left(2 A_{1}^{2} \sigma^{2} P_{K, X_{N}}^{2}\right)\right\}$, the
prediction error has the upper bound
\begin{eqnarray}
\label{conv_wang}
\sup_{x \in \mathbb{D}}\left| Z(x) - m_{N}^{Z}\left(x\right)\right| \leq \mathcal{K} \sigma P_{K, X_{N}} \log ^{1 / 2}\left(\mathrm{e} / P_{K, X_{N}}\right)+u
\end{eqnarray}
Here the constants $C, \mathcal{K}>0$ depend only on $\mathbb{D}, \tilde{\alpha}, A_{0}$, and $A_{1}$.
\end{thm}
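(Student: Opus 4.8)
The plan is to view the prediction error as a centered Gaussian process and to combine a pointwise variance bound, which is where Condition~\ref{cond_scattered} enters, with a chaining estimate for its supremum and a Gaussian concentration inequality. Set $Z(x) := m_N^y(x) - \delta(x)$. Since the target $\delta$ is a realization of the zero-mean Gaussian process with covariance $\sigma^2 K_0$, and the kriging predictor $m_N^y(x) = K(x,X_N)^\top K(X_N,X_N)^{-1}\boldsymbol{y}(X_N)$ is a fixed (data-independent) linear functional of the Gaussian vector $(\delta(x_1),\dots,\delta(x_N))$, the process $Z$ is again zero-mean Gaussian; note that a sample path of $\delta$ a.s.\ fails to lie in the native space $\mathbb{H}_{K_0}$, so the classical deterministic bound ``interpolation error $\le$ power function $\times$ RKHS norm'' cannot be invoked directly, which is why a probabilistic argument is needed. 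Writing $a(x) = K(X_N,X_N)^{-1}K(x,X_N)$ for the kriging weights, $Z(x) = -\eta_x(\delta)$ with $\eta_x(f) = f(x) - \sum_i a_i(x) f(x_i)$, so $\operatorname{Var}(Z(x))$ equals the double application of the signed measure $\eta_x$ to $\sigma^2 K_0$, i.e.\ $\sigma^2 \int_{\mathbb{R}^d} \hat K_0(\omega)\,|\hat\eta_x(\omega)|^2\,d\omega$, whereas the optimality/reproducing property of the kriging weights gives $P_{K,X_N}^2(x) = \int_{\mathbb{R}^d}\hat K(\omega)\,|\hat\eta_x(\omega)|^2\,d\omega$ for the same $\eta_x$. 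The key algebraic step is then to factor $\hat K_0 = (\hat K_0/\hat K)\,\hat K$ and bound $\hat K_0/\hat K$ by $A_1^2$ from Condition~\ref{cond_scattered}, yielding $\operatorname{Var}(Z(x)) \le A_1^2\sigma^2 P_{K,X_N}^2(x)$ for every $x$, hence $\sup_{x\in\mathbb{D}}\operatorname{sd}(Z(x)) \le A_1\sigma P_{K,X_N}$.

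Next I would control the canonical (Dudley) pseudometric $\rho(x,x') := \operatorname{sd}\big(Z(x)-Z(x')\big)$. Again via the Fourier representation, $\operatorname{Var}(Z(x)-Z(x'))$ reduces to $\sigma^2\int \hat K_0(\omega)\,|\hat\eta_x(\omega)-\hat\eta_{x'}(\omega)|^2\,d\omega$; using the elementary estimate $|e^{ia}-e^{ib}| \le 2^{1-\tilde\alpha}|a-b|^{\tilde\alpha}$ valid for $\tilde\alpha\in(0,1]$ to compare the exponentials entering $\hat\eta_x$ and $\hat\eta_{x'}$, together with the finiteness of $\int_{\mathbb{R}^d}\|\omega\|^{\tilde\alpha}\hat K(\omega)\,d\omega = A_0$ (and the $A_1$ bound to pass from $\hat K$ to $\hat K_0$), produces a Hölder modulus $\rho(x,x') \le \mathcal{K}_1\,\sigma\,\|x-x'\|^{\tilde\alpha/2}$ with $\mathcal{K}_1$ depending only on $A_0,A_1$. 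Combined with the uniform bound $\rho \le 2A_1\sigma P_{K,X_N}$, this gives covering numbers $N(\mathbb{D},\rho,\varepsilon)$ that are polynomial in $1/\varepsilon$ of degree $\asymp 2d/\tilde\alpha$ for $\varepsilon \le 2A_1\sigma P_{K,X_N}$ and are $1$ beyond that radius.

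I would then feed these into Dudley's entropy integral, $\mathbb{E}\sup_{x\in\mathbb{D}} Z(x) \le \mathcal{K}_2 \int_0^{2A_1\sigma P_{K,X_N}} \sqrt{\log N(\mathbb{D},\rho,\varepsilon)}\,d\varepsilon$: the integrand is $\lesssim \sqrt{\log(\sigma/\varepsilon)}$ up to constants, and integrating up to the radius $\asymp \sigma P_{K,X_N}$ — here the density hypothesis $P_{K,X_N}\le C$ is used so that $\log(e/P_{K,X_N})\ge 1$ and the logarithmic terms stay tame — yields $\mathbb{E}\sup_x Z(x) \le \mathcal{K}\,\sigma P_{K,X_N}\log^{1/2}(e/P_{K,X_N})$ with $\mathcal{K}$ depending only on $\mathbb{D},\tilde\alpha,A_0,A_1$; by symmetry the same bound holds for $\mathbb{E}\sup_x(-Z(x))$. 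Finally, applying the Borell--TIS Gaussian concentration inequality to $\sup_x Z(x)$ (and separately to $\sup_x(-Z(x))$), each with variance proxy $\sup_x\operatorname{Var}(Z(x)) \le A_1^2\sigma^2 P_{K,X_N}^2$, gives $\Pr\big\{\sup_x Z(x) > \mathbb{E}\sup_x Z(x) + u\big\} \le \exp\!\big(-u^2/(2A_1^2\sigma^2 P_{K,X_N}^2)\big)$, and a union bound over the two one-sided suprema, combined with the expectation bound, delivers~(\ref{conv_wang}).

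The step I expect to be the real work is the second one: converting the single spectral condition $A_0<\infty$ into a quantitative modulus of continuity and metric-entropy estimate for $Z$, and then carrying out Dudley's integral carefully enough that exactly the factor $\log^{1/2}(e/P_{K,X_N})$ emerges (rather than a larger power of the logarithm) with all constants traceable to $\mathbb{D},\tilde\alpha,A_0,A_1$ only. By comparison, the Gaussianity of $Z$, the Fourier variance identities, and the final Borell--TIS step are essentially bookkeeping once Condition~\ref{cond_scattered} is assumed; the density requirement $P_{K,X_N}\le C$ plays no role beyond normalizing the logarithmic factor and the constants.
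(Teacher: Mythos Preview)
The paper does not contain a proof of this theorem: it is quoted verbatim from \cite{wang2020prediction} as a background result and is then applied (together with Lemma~\ref{wu}) in the proof of Theorem~\ref{mlmc_bound}, but it is not argued anywhere in the paper itself. There is therefore nothing in this paper to compare your proposal against.

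That said, your outline is the standard route to results of this type and is, as far as I can see, exactly what one would expect the cited reference to do: (i) interpret the kriging residual $Z(x)=m_N^y(x)-\delta(x)$ as a centred Gaussian process and bound $\operatorname{Var}(Z(x))$ by $A_1^2\sigma^2 P_{K,X_N}^2(x)$ via the Fourier factorisation $\hat K_0=(\hat K_0/\hat K)\hat K$; (ii) use the moment condition $A_0<\infty$ to control the canonical metric and hence the covering numbers; (iii) integrate Dudley's entropy bound up to the radius $\asymp\sigma P_{K,X_N}$ to obtain the $\sigma P_{K,X_N}\log^{1/2}(e/P_{K,X_N})$ term; and (iv) finish with Borell--TIS concentration and a union bound over $\pm Z$, which produces precisely the tail $2\exp\{-u^2/(2A_1^2\sigma^2 P_{K,X_N}^2)\}$. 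Your identification of step~(ii)--(iii) as the place where the real work lies is accurate; the rest is bookkeeping. If you want to verify the details, you will need to consult \cite{wang2020prediction} directly, not this paper.
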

Here $K_{0}$ is a Mat\'ern kernel with smoothness parameter $\nu_{0}$. As discussed in \citep{wang2020prediction}, it can be verified that Condition \ref{cond_scattered} holds if and only if $0 < \nu \leq \nu_{0}$. $0 < \nu \leq \nu_{0}$ means that we set the smoothness parameter $\nu$ as the less smooth value than unknown $\nu_{0}$. If we want this convergence result to hold with probability at least $p$, $u$ should be equal to $\sigma P_{K, X_{N}} A_{1} \left(2\log{2 \over 1-p}\right)^{1/2}$. Note that the design set cannot reach density for costly simulations, so this result is rather theoretical and thus only guides us towards the use of a multilevel design that can achieve a higher density in the low fidelity instances.

\section{Multilevel adaptive sequential design of computer experiments (MLASCE)}
\label{main_body}
This section presents our novel multilevel adaptive sequential design of computer experiments (MLASCE) in the framework of multi-fidelity GPR. The novelty is an error analysis for specifying the form of the mean prediction and associated optimization of the usage in computational resources. The assumptions and formulation of this model are dealt with in Section \ref{rkhs} to \ref{smooth_para}. Section \ref{MLMC_basis_2} reviews the Multilevel Monte Carlo method.  Section \ref{er_analysis} presents an error analysis and Section \ref{algo_imp} illustrates the algorithm for implementing our method. We offer an independent analysis of optimal numbers of runs (prior to actual implementation of computer simulations) in Section \ref{MLMC_kriging}.
We present the results of our implementation of MLASCE on several numerical examples in Section \ref{nume_example}.


\subsection{Multilevel Monte Carlo (MLMC)}
\label{MLMC_basis_2}
This section presents the concept of MLMC as an ingredient of our analysis of optimal numbers of runs. Multilevel Monte Carlo (MLMC) \citep{giles2015multilevel} decomposes an expensive function into increments, whose concept is used for our analysis of optimal numbers of runs combined with the content in Section \ref{MLMC_basis_1}. In other words, the essence of MLMC, i.e., minimising the overall estimation error (variance) with respect to the numbers of samples under the limit of computational time, is employed in our framework although we do not rely on MLMC itself. As discussed in \cite{peherstorfer2018survey}, in the broad context of multi-fidelity framework, less accurate but cheap simulators are used to reduce the runtime where possible and more sophisticated and expensive simulators are employed to preserve the accuracy of the emulator. More recently, strategies \citep{schaden2020multilevel, seelinger2021high} have been introduced to overcome the limitations of approaches such as MLMC and Multi-Fidelity Monte Carlo (MFMC).  Then, a natural question is how to know the optimal numbers of runs for each simulator, preferably before implementing the simulations. In MLMC, the goal is to estimate the expectation $\mathbb{E}[P]$ for some random variable $P$ using Monte Carlo simulation while our objective is to build an emulator of the functional form of the computer model hence goes much further than the estimation of the expectation of a random variable. 

We briefly review the framework of MLMC. 
The random variables $P_{0}, \ldots, P_{L}$ are prepared and the goal is to estimate $\mathbb{E}[P_{L}]$. $P_{0}, \ldots, P_{L-1}$ approximate $P_{L}$ with increasing accuracy, but also increasing cost. Then, we have the simple identity by decomposing $P_{L}$
\[
\mathbb{E}\left[P_{L}\right]=\mathbb{E}\left[P_{0}\right]+\sum_{\ell=1}^{L} \mathbb{E}\left[P_{\ell}-P_{\ell-1}\right]
\]
and therefore we can use the following unbiased estimator for $\mathbb{E}\left[P_{L}\right]:$
\[
N_{0}^{-1} \sum_{n=1}^{N_{0}} P_{0}^{(0, n)}+\sum_{\ell=1}^{L}\left\{N_{\ell}^{-1} \sum_{n=1}^{N_{\ell}}\left(P_{\ell}^{(\ell, n)}-P_{\ell-1}^{(\ell, n)}\right)\right\}
\]
where the inclusion of the level $\ell$ in the superscript $(\ell, n)$ indicates that independent samples are used at each level of correction. Also, $N_{l}$ is the number of samples to estimate $\mathbb{E}\left[P_{\ell}-P_{\ell-1}\right]$.

If we define $C_{0}, V_{0}$ to be the cost and variance of one sample of $P_{0},$ and $C_{\ell}, V_{\ell}$ to be the cost and variance of one sample of $P_{\ell}-P_{\ell-1},$ then the overall cost and variance of the multilevel estimator are $\sum_{\ell=0}^{L} N_{\ell} C_{\ell}$ and $\sum_{\ell=0}^{L} N_{\ell}^{-1} V_{\ell},$ respectively.
For a fixed cost, the variance is minimized by choosing $N_{\ell}$ to minimize
\begin{eqnarray}
\label{MLMLprinciple}
\sum_{\ell=0}^{L}\left(N_{\ell}^{-1} V_{\ell}+\mu^{2} N_{\ell} C_{\ell}\right)
\end{eqnarray}
for some value of the Lagrange multiplier $\mu^{2}$, which is determined by the desired overall variance $\varepsilon^{2}$. This gives $N_{\ell}=\mu \sqrt{V_{\ell} / C_{\ell}}$. To achieve an overall variance of $\varepsilon^{2}$ then requires that
\[
\mu=\varepsilon^{-2} \sum_{\ell=0}^{L} \sqrt{V_{\ell} C_{\ell}}
\]

and the total computational cost is therefore
\[
C=\varepsilon^{-2}\left(\sum_{\ell=0}^{L} \sqrt{V_{\ell} C_{\ell}}\right)^{2}
\]
The formal statement of the above discussion is stated in Theorem 2.1 in \cite{giles2015multilevel}.

\subsection{Incremental decomposition and reproducing kernel Hilbert spaces (RKHS)}
\label{rkhs}
Although an elegant framework of multi-fidelity GPR or co-kriging was established in \cite{kennedy2000predicting}, this type of method fails to evaluate the improvement from simulations of higher fidelity in the sense of optimizing the accuracy of emulators and usage of computational resources. Moreover, these emulators would inherently incorporate significant prediction errors due to the specification of the mean predictive functions. Here, we construct the necessary settings in order to capture the degree of benefit of more sophisticated simulations and construct a theoretically accurate statistical emulator. One way is to decompose computer simulations into a basic approximation and incremental functions. This may seem an analogy of the correlation structure (\ref{ohagan}) of \cite{kennedy2000predicting} with $\rho_{l} = 1$ but it is essential for us to implement analysis of the error. Decomposing an expensive function into increments is not only necessary to evaluate the additional effects of sophisticated computer simulations but also a common technique in computational statistics.

We assume $y$ is a deterministic and unknown function and try to approximate it by other deterministic functions of different fidelity $y_{l}: \mathbb{D} \rightarrow \mathbb{R}$ for level $l=1, \ldots, L$, which are called computer simulations. Each $y_{l}$ has its own degree of accuracy denoted as $h_{l}$ and cost $t_{l}$. $h_{l}$ satisfies $1 \geq h_{1} > h_{2} > \cdots > h_{L}$. $h_{l}$ is an ordinal number and should be decided by the user in proportion to different accuracy of computer simulations. For example, $h_{l}$ can be determined by the proportion of different discretization levels in a numerical solver. The actual value of $h_{l}$ is not used in MLASCE but is present only in the next section. The cost $t_{l}$ satisfies $t_{L} > t_{L-1} > \cdots > t_{1}$ hence $y_{l_{1}}$ is more accurate but expensive than $y_{l_{2}}$ for different levels $l_{1} > l_{2}$. Our goal is to recover the most reliable simulation $y_{L}$. For this purpose, we introduce the discrepancy terms $\delta_{l}:\mathbb{D} \rightarrow \mathbb{R}$ for $l = 2, \ldots, L$ so $y_{l}(\boldsymbol{x}) = y_{l-1}(\boldsymbol{x}) + \delta_{l}(\boldsymbol{x})$ where $\boldsymbol{x}$ is an arbitrary element in $\mathbb{D}$.
Using this expression, $y_{L}$ is decomposed into the sum of the basic approximation and increments: 
\begin{eqnarray}
\label{decomp}
y_{L}(\boldsymbol{x}) = \sum_{l=1}^{L} \delta_{l}(\boldsymbol{x})
\end{eqnarray}
where $\delta_{1}(\boldsymbol{x}) = y_{1}(\boldsymbol{x})$ and $\delta_{l}(\boldsymbol{x})= y_{l}(\boldsymbol{x}) - y_{l-1}(\boldsymbol{x})$ for $ l=2, \ldots, L$. We can see that $\delta_{1}$ is the baseline approximation and $\{ \delta_{l} \}_{l=2}^{L}$ are the incremental refinements. It would be usually expected that the value of increment $\delta_{l}(\boldsymbol{x})$ should decrease as the level $l$ increases thus less contribution comes from higher fidelity simulations. In order to evaluate the degree of improvements accurately, we first assume each $\delta_{l}$ is an element of a reproducing kernel Hilbert space (RKHS) $\mathbb{H}_{l}$ with a reproducing kernel $K_{l}$ for $l=1, \ldots, L$. Let $\boldsymbol{\delta_{l}}(X_{l,N_{l}}) = \left( \delta_{l}(x_{l,1}), \delta_{l}(x_{l,2}) \ldots, \delta_{l}(x_{l,N_{l}}) \right)^{\top}$ denote $N_{l}$ outputs of $\delta_{l}$ collected at the inputs $X_{l,N_{l}}=\left(x_{l,1}, x_{l,2}, \ldots, x_{l,N_{l}}\right)^{\top}$. Next, we build a new multi-fidelity GPR by assuming that $\boldsymbol{\delta_{l}}(X_{l,N_{l}})$ is a realization of the GP $\eta_{l}(\boldsymbol{x}) \sim GP \left( 0, K_{l}(\boldsymbol{x}, \boldsymbol{x}^{\prime}) \right)$, where every $\eta_{l}$ is independent of each other for different $l$. Based on the experimental design or training data $X_{l,N_{l}}$ and $\boldsymbol{\delta_{l}}(X_{l,N_{l}})$, we obtain the GPR prediction at $x$ consisting of the posterior distribution $\eta_{l}\left(\boldsymbol{x}\right)|X_{l,N_{l}},\boldsymbol{\delta_{l}}(X_{l,N_{l}})  \sim \mathcal{N}\left(m_{l,N_{l}}^{\delta_{l}}\left(\boldsymbol{x}\right), \hat{s}^{2}_{l,N_{l}}\left(\boldsymbol{x}\right)\right)$ with posterior mean and variance given by
\[
\begin{aligned}
m_{l,N_{l}}^{\delta_{l}}\left(\boldsymbol{x}\right) &=K_{l}\left(\boldsymbol{x}, X_{l,N_{l}}\right)^{\top} K_{l}(X_{l,N_{l}},X_{l,N_{l}})^{-1}\boldsymbol{\delta_{l}}(X_{l,N_{l}}) \\ 
\hat{s}^{2}_{l,N_{l}}\left(\boldsymbol{x}\right) &=K_{l}(\boldsymbol{x}, \boldsymbol{x})-K_{l}\left(\boldsymbol{x}, X_{l,N_{l}}\right)^{\top} K_{l}(X_{l,N_{l}},X_{l,N_{l}})^{-1} K_{l}\left(\boldsymbol{x}, X_{l,N_{l}}\right)
\end{aligned}.
\]

Therefore, we can construct the prediction for $y_{L}(\boldsymbol{x})$ by taking note that every $\eta_{l}$ is independent of each other and summing up $\eta_{l}(\boldsymbol{x})| X_{N_{l}},\boldsymbol{\delta_{l,N_{l}}}$. The predictive distribution of $y_{L}(x)$ based on our multi-fidelity GPR is given by
\begin{eqnarray}
\label{mymodel}
\mathcal{N}\left(m^{y_{L}}_{L}(\boldsymbol{x}), s^{2}_{L}(\boldsymbol{x}) \right)
\end{eqnarray}
where $m^{y_{L}}_{L}(\boldsymbol{x}) = \sum_{l=1}^{L} m_{l,N_{l}}^{\delta_{l}}\left(\boldsymbol{x}\right)$ and $s^{2}_{L}(\boldsymbol{x}) = \sum_{l=1}^{L} \hat{s}^{2}_{l,N_{l}}\left(\boldsymbol{x}\right)$. 

The posterior mean function $m_{l,N_{l}}^{\delta_{l}}\left(\boldsymbol{x}\right)$ is used for predicting $\delta_{l}(\boldsymbol{x})$ and $\delta_{l}$ is assumed to be an element of the corresponding RKHS with the RK $K_{l}$. When we use $m^{y_{L_{1}}}_{L_{1}}(\boldsymbol{x}) = \sum_{l=1}^{L_{1}} m_{l,N_{l}}^{\delta_{l}}\left(\boldsymbol{x}\right)$ as the prediction value at $x$ for $ 1 \leq L_{1} \leq L$, we call it the level $L_{1}$ emulator. We further expect that it would converge to the true response $y_{L}$ pointwise and in the RKHS norm if the number of data points tends to infinity. Theorem \ref{conv} guarantees this desired property and this is going to be discussed later.

If we use a non-zero mean $m_{l}(\cdot)$ in GP $\eta_{l}$, the formula for the predictive mean $m_{l,N_{l}}^{\delta_{l}}$ changes to
$$
m_{l,N_{l}}^{\delta_{l}}\left(\boldsymbol{x}\right) = m_{l}(\boldsymbol{x}) + K_{l}\left(\boldsymbol{x}, X_{l,N_{l}}\right)^{\top} K_{l}(X_{l,N_{l}},X_{l,N_{l}})^{-1} \left( \boldsymbol{\delta_{l}}(X_{l,N_{l}}) - m_{l}\left(X_{l,N_{l}}\right) \right)
$$
where $m_{l}\left(X_{l,N_{l}}\right):=\left(m_{l}\left(x_{l,1}\right), \ldots, m_{l}\left(x_{l,N_{l}}\right)\right) \in \mathbb{R}^{N_{l}}$ and the predictive covariance is unchanged. The benefit of specifying a non-zero mean function is that one could reduce ``the amount of learning'' by considering the discrepancy $m_{l}-\delta_{l}$ instead of $\delta_{l}$ itself if a specific form of $m_{l}$ is promising. On the other hand, parametrising $m_{l}$ would require a large number of hyperparameters especially in the case of high dimensional inputs. 

\subsection{Choice of smoothness parameter}
\label{smooth_para}
As we can see in the Proposition \ref{stuart}, the smoothness parameter decides the (equivalent) underlining function space to which the target function is assumed to belong. As we reviewed, if we assume $y \in H^{\nu+d / 2}(\mathbb{R}^{d})$ for the target function $y$, the smoothness of $y$ corresponds to the order $\nu+d/2$ in $H^{\nu+d / 2}(\mathbb{R}^{d})$, hence the smoothness parameter $\nu$ in the Mat\'ern kernel. However, choosing an appropriate value for the smoothness parameter $\nu$, which specifies the smoothness of the target function via weak differentiability (reviewed in earlier), may be a difficult task. The hyperparameters in a covariance function is usually unknown. We might be able to estimate $\nu$ by maximum likelihood method or other statistical methods but a sufficient amount of data may not be available to obtain a reliable result, which is a typical case where the computer simulation is expensive to run. Moreover, evaluating the Mat\'ern kernel with $\nu$ other than $0.5, 1.5, 2.5, \infty$ numerous times using the maximum likelihood method and experimental design is expensive, and it is common to fix the smoothness parameter $\nu$ before implementing the computation in the packages of statistical software. Therefore, we usually have to set the value for $\nu$ in advance. For the different smooth parameters $\nu$ and $\nu^{\prime}$ $(\nu < \nu^{\prime})$, the corresponding Sobolev spaces satisfy $H^{\nu^{\prime}}(\mathbb{R}^{d})  \subseteq H^{\nu}(\mathbb{R}^{d})$ hence one possible strategy of choosing $\nu$ is to set a small value, for $y \in H^{\nu^{\prime}}(\mathbb{R}^{d})$ implies $y \in H^{\nu}(\mathbb{R}^{d})$.

Some papers investigate the effect of misspecified smoothness parameters on the GPR approximation in the context of space filling design. The convergence rate of the posterior mean of GPR becomes slower if the (fixed) smoothness parameters are different from the true value \citep{wang2020prediction,tuo2020kriging}. Similar results \citep{teckentrup2020convergence,wynne2021convergence} are available in the case that the estimated smoothness parameter is used. Although we do not use space filling design for locating the training data, these results are insightful for choosing the smoothness parameter.     

\subsection{Error analysis}
\label{er_analysis}
 In this subsection, we show that our multi-fidelity GPR is based on a reasonable specification of the mean predictive function for approximating the target function. We have to remember that the co-kriging modelling, which is commonly used multi-fidelity GPR, lacks a theoretical background for its specification of the mean predictive function. Without an appropriate theoretical guarantee, its predictive value might not get close to the true response even as the number of the training data points grows. Therefore, the mean prediction of the co-kriging may tend to deviate from the target, as we reviewed in the last chapter. In contrast, our multi-fidelity emulator is based on a theoretically verified specification of the mean predictive function.
 
 In our framework, the training data is obtained by implementing MICE as the design of experiment. The estimated parameters of $\lambda, \sigma^{2}$ and a priori fixed $\nu$ are used in the Mat\'ern function. The estimation of $\lambda, \sigma^{2}$ is based on the data obtained by design of experiment. It is assumed that the observation vector $\boldsymbol{y}$ is a realization of the Gaussian process $Y(\boldsymbol{x}) \sim GP\left(0, K_{\nu,\theta}\left(\boldsymbol{x}, \boldsymbol{x}^{\prime}\right)\right)$ where $K_{\nu,\theta} = K_{\nu, \lambda, \sigma^{2}}$ is a Mat\'ern kernel defined in $(\ref{mat})$ and $\theta$ denotes $\{\lambda, \sigma^{2}\}$ for convenience. The GPR prediction of $y(x)$ at $x$ is obtained in the form of the normal distribution $\mathcal{N}\left(m_{N,\theta}^{y}\left(x\right), \hat{s}^{2}_{N,\theta}\left(x\right)\right),$ with posterior mean and variance given by
\[
\begin{aligned}
m_{N,\theta}^{y}\left(\boldsymbol{x}\right) &=K_{\nu, \theta}\left(\boldsymbol{x}, X_{N}\right)^{\top} K_{\nu, \theta}(X_{N},X_{N})^{-1}\boldsymbol{y}(X_{N}) \\
\hat{\mathrm{s}}^{2}_{N,\theta}\left(\boldsymbol{x}\right) &=K_{\nu, \theta}(\boldsymbol{x}, \boldsymbol{x})-K_{\nu, \theta}\left(\boldsymbol{x}, X_{N}\right)^{\top} K_{\nu, \theta}(X_{N},X_{N})^{-1} K_{\nu, \theta}\left(\boldsymbol{x}, X_{N}\right)
\end{aligned}
.\]
We are interested in a pointwise prediction of deterministic $y\left(\boldsymbol{x}\right)$ and it is a natural question to investigate the error between the $y\left(\boldsymbol{x}\right)$ and its mean prediction $m_{N,\theta}^{y}\left(\boldsymbol{x}\right)$. To formulate this analysis, we let $\Theta$ be a compact set in $\mathbb{R}^{2}_{+}$ ($\mathbb{R}^{2}_{+}$ denotes the subset of $\mathbb{R}^{2}$ composed of positive values) and $y \in \mathbb{H}$ for an arbitrary $\theta \in \Theta$ where $\mathbb{H}$ is an RKHS with the RK $K_{\nu, \theta}$. For any $\boldsymbol{x} \in \mathbb{D}$ and $\theta \in \Theta$, we can easily check that the error of the mean prediction of the deterministic $y \in \mathbb{H}$ is given by (see e.g., Theorem 11.4 in \citep{wendland2004scattered} for a proof)
\begin{eqnarray}
\label{error1}
|y\left(\boldsymbol{x}\right)- m_{N,\theta}^{y}\left(\boldsymbol{x}\right) | \leq  \| y\|_{\mathbb{H}} \bigl(  \hat{\mathrm{s}}^{2}_{N}\left(\boldsymbol{x}\right) \bigr)^{\frac{1}{2}} .
\end{eqnarray}
This inequality neglects the modeling error of the GP. e.g. when too little training data are available or when the assumption of a smooth
functional mapping which be violated (for which we suggest the use of deep GPs \citep{sauer2021active,ming2021deep}. Moreover, the error converges to $0$ pointwise and in the RKHS norm if an enough number of data points is available. Although it may be argued that the number of data points for an expensive computer simulation is limited in practice, the following theorem still ensures that our specification of the mean predictive function is the right choice at least. The proof is given in the Appendix. 

\begin{thm}
\label{conv}
Let $\mathbb{H}_{\theta}$ denote an RKHS defined in $\mathbb{D} \subseteq \mathbb{R}^{d}$ with the reproducing kernel $K_{\nu, \theta} = K_{\nu, \lambda, \sigma^{2}}$ defined in (\ref{mat}) and $\mathbb{R}^{2}_{+}$ denote the subset of $\mathbb{R}^{2}$ composed of positive values. Further, we assume the following conditions.

\begin{enumerate}
  \item $\Theta$ is a compact set in $\mathbb{R}^{2}_{+}$.
  \item $y \in \mathbb{H}_{\theta}$ for an arbitrary $\theta \in \Theta$ and a fixed $\nu$.
  \item $X = \{x_{i} : 1 \leq i \leq \infty \}$ is a set of points in $\mathbb{D}$ such that $\text{span} \{ K_{\nu, \theta}(\cdot, x_{i}) : 1 \leq i \leq \infty \}$ is dense in $\mathbb{H}_{\theta}$ and $y(X) = \{y(x_{i}) : 1 \leq i \leq \infty \}$ is the values of $y$ located at $X$. 
  \item $X_{N}=\left(x_{1}, x_{2}, \ldots, x_{N}\right)^{\top}$ and $\boldsymbol{y} (X_{N}) = \bigl( y(x_{1}), y(x_{2}) \ldots, y(x_{N}) \bigr)$ are the subsets of $X$ and $y(X)$.
 \item $K_{\nu, \theta}(X_{N},X_{N})^{-1}$ exists for any $\theta \in \Theta$ and $X_{N}$.
\end{enumerate}

Then, the posterior covariance $K_{\nu, \theta}(x, u)- K_{\nu, \theta}(x, X_{N})K_{\nu, \theta}(X_{N}, X_{N})^{-1} K_{\nu, \theta}(X_{N}, u) \to 0$ and posterior mean $m_{N,\theta}^{y}\left(\boldsymbol{x}\right) = K_{\nu, \theta}(x, X_{N})K_{\nu, \theta}(X_{N}, X_{N})^{-1} y(X_{N}) \to y(x)$ for any $x, u \in \mathbb{D}$, a fixed $\nu$ and $\theta \in \Theta$ as $N \to \infty$. Furthermore, $\|m_{N,\theta}^{y} -y\|_{\mathbb{H}_{\theta}} \to 0$ holds under the same conditions.
\end{thm}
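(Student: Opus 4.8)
The plan is to recast the matrix expressions as orthogonal projections in the RKHS $\mathbb{H}_{\theta}$ and then invoke the elementary fact that a monotone sequence of projections onto subspaces with dense union converges strongly to the identity. Fix $\theta \in \Theta$ and the value of $\nu$; the statement is a pointwise-in-$N$ limit for each fixed $\theta$, so the compactness of $\Theta$ plays no role in this argument. Write $V_{N} := \mathrm{span}\{K_{\nu,\theta}(\cdot,x_{i}) : 1 \le i \le N\} \subseteq \mathbb{H}_{\theta}$; condition 5 makes the Gram matrix $K_{\nu,\theta}(X_{N},X_{N})$ invertible, which is exactly linear independence of $\{K_{\nu,\theta}(\cdot,x_{i})\}_{i\le N}$ in $\mathbb{H}_{\theta}$, so $\dim V_{N} = N$ and $V_{1} \subseteq V_{2} \subseteq \cdots$.

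First I would establish two algebraic identities. For any $f \in \mathbb{H}_{\theta}$, the function $s := K_{\nu,\theta}(\cdot,X_{N})^{\top} K_{\nu,\theta}(X_{N},X_{N})^{-1} \bigl(f(x_{1}),\ldots,f(x_{N})\bigr)^{\top}$ lies in $V_{N}$ and, by direct computation, satisfies $s(x_{i}) = f(x_{i})$; by the reproducing property this says $\langle f - s, K_{\nu,\theta}(\cdot,x_{i})\rangle_{\mathbb{H}_{\theta}} = 0$ for all $i \le N$, so $s = P_{V_{N}} f$, the orthogonal projection of $f$ onto $V_{N}$. Taking $f = y$ (legitimate by condition 2) gives $m_{N,\theta}^{y} = P_{V_{N}} y$; in particular, combining this with Cauchy--Schwarz and the reproducing property reproduces the pointwise estimate $(\ref{error1})$. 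Taking $f = K_{\nu,\theta}(\cdot,u)$ and evaluating at $x$ gives $[P_{V_{N}} K_{\nu,\theta}(\cdot,u)](x) = K_{\nu,\theta}(x,X_{N}) K_{\nu,\theta}(X_{N},X_{N})^{-1} K_{\nu,\theta}(X_{N},u)$, hence, using that $I - P_{V_{N}}$ is a self-adjoint idempotent and the reproducing property,
\[
K_{\nu,\theta}(x,u) - K_{\nu,\theta}(x,X_{N}) K_{\nu,\theta}(X_{N},X_{N})^{-1} K_{\nu,\theta}(X_{N},u) = \bigl\langle (I - P_{V_{N}}) K_{\nu,\theta}(\cdot,x),\, (I - P_{V_{N}}) K_{\nu,\theta}(\cdot,u)\bigr\rangle_{\mathbb{H}_{\theta}}.
\]

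Next I would record the projection lemma: if $(V_{N})$ is an increasing sequence of closed subspaces of a Hilbert space with $\overline{\bigcup_{N} V_{N}} = \mathbb{H}_{\theta}$, then $P_{V_{N}} g \to g$ in $\mathbb{H}_{\theta}$ for every $g$. Condition 3 is exactly this density hypothesis, since $\bigcup_{N} V_{N} = \mathrm{span}\{K_{\nu,\theta}(\cdot,x_{i}) : i \ge 1\}$. The proof is the standard $\varepsilon$-argument: given $\varepsilon > 0$ choose $v \in \bigcup_{N} V_{N}$ with $\|g - v\|_{\mathbb{H}_{\theta}} < \varepsilon$; for all large $N$ we have $v \in V_{N}$ by monotonicity, and since $P_{V_{N}} g$ is the nearest point of $V_{N}$ to $g$, $\|g - P_{V_{N}} g\|_{\mathbb{H}_{\theta}} \le \|g - v\|_{\mathbb{H}_{\theta}} < \varepsilon$.

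Finally I would assemble the two conclusions. Applying the lemma with $g = y$ and using Cauchy--Schwarz with the reproducing property, $|m_{N,\theta}^{y}(x) - y(x)| = |\langle P_{V_{N}} y - y, K_{\nu,\theta}(\cdot,x)\rangle_{\mathbb{H}_{\theta}}| \le \|P_{V_{N}} y - y\|_{\mathbb{H}_{\theta}}\, K_{\nu,\theta}(x,x)^{1/2} \to 0$. Applying the lemma with $g = K_{\nu,\theta}(\cdot,x)$ and $g = K_{\nu,\theta}(\cdot,u)$ and Cauchy--Schwarz on the bilinear identity above bounds the posterior covariance by $\|(I - P_{V_{N}}) K_{\nu,\theta}(\cdot,x)\|_{\mathbb{H}_{\theta}}\, \|(I - P_{V_{N}}) K_{\nu,\theta}(\cdot,u)\|_{\mathbb{H}_{\theta}} \to 0$. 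I expect the only real care needed to be bookkeeping: verifying the interpolant-equals-projection identity and its evaluated form, and noting that condition 5 is precisely what guarantees $V_{N}$ is genuinely $N$-dimensional so the matrix inverses are well posed. The analytic heart — strong convergence of monotone projections under a dense-union hypothesis — is elementary Hilbert space theory, so I do not anticipate a substantive obstacle.
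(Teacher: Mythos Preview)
Your proposal is correct and shares with the paper the central observation that the posterior mean and the matrix expression inside the posterior covariance are evaluations of the orthogonal projection $P_{V_N}$ in $\mathbb{H}_\theta$. The executions differ in two respects. First, the paper builds an explicit orthonormal basis of $V_N$ via $K_{\nu,\theta}(X_N,X_N)^{-1/2}$ and then invokes Mercer's theorem, expanding $K_{\nu,\theta}(x,u)=\sum_i \lambda_i\phi_i(x)\phi_i(u)$ and pushing $P_N$ through the series termwise; you bypass both the orthonormalisation and Mercer entirely by writing the posterior covariance as $\langle (I-P_{V_N})K_{\nu,\theta}(\cdot,x),(I-P_{V_N})K_{\nu,\theta}(\cdot,u)\rangle_{\mathbb{H}_\theta}$ and applying Cauchy--Schwarz. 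Second, the paper asserts $P_N\to I$ in the \emph{operator} norm, which is actually false for any strictly increasing sequence of proper projections (taking $y=e_{N+1}$ in their display gives norm $1$ for every $N$); your $\varepsilon$-argument correctly establishes only \emph{strong} convergence $P_{V_N}g\to g$, which is all that is required. Your route is therefore more elementary---it needs no spectral decomposition and no assumption that Mercer applies on $\mathbb{D}$---and slightly more careful on the mode of convergence; the paper's route, once the operator-norm slip is patched to strong convergence, gives the same conclusion via a more explicit spectral picture.
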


In practice, the hyperparameters $\theta$ or $\{\lambda,\sigma^{2}\}$ in $K_{\nu,\theta}$, which are usually unknown, are identified by the maximum likelihood estimator $\hat{\theta}_{N}$, whereas $\nu$ is set to a specific fixed value in advance. One might argue that the changing value of $\widehat{\theta}_{N}$ in  $m_{N,\hat{\theta}_{N}}^{y}\left(\boldsymbol{x}\right)$ as $N$ grows might not be trivial but it can be verified that $m_{N,\hat{\theta}_{N}}^{y}\left(\boldsymbol{x}\right)$ also converges to $y(x)$ if we further assume a mild condition on $m_{N,\theta}^{y}\left(\boldsymbol{x}\right)$ and that $\{ \hat{\theta}_{N} \}_{N=1}^{\infty}$ is in $\Theta$ with a fixed limit $\theta_{0} \in \Theta$ (see Remark 2 in Appendix).

We now apply the above results to our multi-fidelity GPR. First, we prepare Mat\'ern kernels $K_{\nu_{l},\lambda_{l},\sigma^{2}_{l}}$ with fixed $\nu_{l}$ for $\eta_{l}(x)$ ($l=1, \ldots, L$). We estimate $\{\lambda_{l}, \sigma_{l}^{2}\}$ based on the design $\{ X_{l,N_{l}}, \boldsymbol{\delta}(X_{l,N_{l}})\}$ accordingly and set these values to the hyperparameters in $K_{\nu_{l},\lambda_{l},\sigma^{2}_{l}}$. We plug these Mat\'ern kernels in the covariance structure of $(\ref{mymodel})$. We also assume that $\delta_{l} \in \mathbb{H}_{l}$ with the RK $K_{\nu_{l},\lambda_{l},\sigma^{2}_{l}}$ for a chosen $\nu_{l}$ and any $\lambda_{l},\sigma^{2}_{l} $ in a compact set $\Theta_{l} \subseteq \mathbb{R}^{2}_{+}$. Further, the estimated hyperparameters of $\lambda_{l},\sigma^{2}_{l}$ are in $\Theta_{l}$. Then, we derive an error bound for our mean prediction $m^{y_{L}}_{L}(x)$ in $(\ref{mymodel})$.
\begin{eqnarray}
\label{myerror}
|m^{y_{L}}_{L}(\boldsymbol{x})  - y_{L}(\boldsymbol{x})| \leq \sum_{l=1}^{L}|m_{l,N_{l}}^{\delta_{l}}(x) - \delta_{l}(x)|
\leq \sum_{l=1}^{L} \bigl( \hat{s}^{2}_{l,N_{l}}\left(\boldsymbol{x}\right) \bigr)^{\frac{1}{2}} \| \delta_{l}\|_{\mathbb{H}_{l}}
\end{eqnarray}
We write $m_{N,\theta}^{y}$ as $m_{N}^{y}$ and $K_{\theta}(\cdot, \cdot)$ as $K(\cdot, \cdot)$ when there is no need for expressing the dependency on specific $\theta$. 

Further, Theorem \ref{conv} applies to each $|m_{l,N_{l}}^{\delta_{l}}\left(\boldsymbol{x}\right) - \delta_{l}(\boldsymbol{x})|$ hence we are able to make the error as small as possible if a sufficient amount of data can be obtained and the functional relationship is amenable to a GP fitting, as we mentioned before. In practice, we usually cannot afford to run expensive simulations to obtain enough data. Therefore, the next task is optimizing the number of training data points $\{X_{l,N_{l}}, \delta_{l}(X_{l,N_{l}})\}$ across every $l$ under a constraint of computational resources to achieve a small upper bound in $(\ref{myerror})$. 

\subsection{Algorithm for MLASCE and optimal usage of computational resources}
\label{algo_imp}

Let us go back to the error bound in $(\ref{myerror})$. Our goal is to implement the method of experimental design so that the above upper bound is made as small as possible under a restriction of the computational resources. That is, we prepare the training data $\{X_{l,N_{l}}, \boldsymbol{\delta}_{l,N_{l}}(X_{l,N_{l}})\}$ for all levels $l=1, \ldots, L$ in a way that the error bound becomes as small as possible. The error bound depends on not only the number of data points $N_{l}$ but also the locations of the data $X_{l,N_{l}}$. However, evaluating it in an explicit form without specific knowledge is a rather formidable task. Thus, an optimisation strategy should be developed. The essence of our strategy is to choose the most effective level $l$ for reducing the overall error in $(\ref{myerror})$ and run $\delta_{l}$ to obtain the new training data point $x_{l,N_{l}+1}$ and $\delta_{l}(x_{l,N_{l}+1})$ using MICE for deciding the location of $x_{l,N_{l}+1}$.

 Although $\hat{s}^{2}_{l,N_{l}}\left(x\right) ^{\frac{1}{2}} \| \delta_{l}\|_{\mathbb{H}_{l}}$ tends to 0 for every $l$ as $N_{l}$ goes to infinity (assuming the density), its weight in the overall upper bound (\ref{myerror}) is different because of the magnitude of $\| \delta_{l}\|_{\mathbb{H}_{l}}$. Thus, it is sensible to get more samples at the level where the reduction in $\hat{s}^{2}_{l,N_{l}}\left(x\right) ^{\frac{1}{2}} \| \delta_{l}\|_{\mathbb{H}_{l}}$ is large as $N_{l}$ increases. This is to ensure that the upper bound of (\ref{myerror}) becomes smaller quickly. We propose the optimisation problem for our purpose instead of considering the upper bound of $(\ref{myerror})$ directly: maximizing $\sum_{l=1}^{L} \|m_{l,N_{l}}^{\delta_{l}}\|_{\mathbb{H}_{l}}^{2}$ under the restriction of $\sum_{l=1}^{L}N_{l}t_{l} = T$. Indeed, $\|m_{l,N_{l}}^{\delta_{l}}\|_{\mathbb{H}_{l}}^{2}$ contains the information of both $\bigl( \hat{s}^{2}_{l,N_{l}}\left(\boldsymbol{x}\right) \bigr)^{\frac{1}{2}}$ and $ \| \delta_{l}\|_{\mathbb{H}_{l}}$. The validation of this formulation is as follows. 

First, the pointwise convergence of $m_{l,N_{l}}^{\delta_{l}}(x)$ is equivalent to that of  $\hat{s}^{2}_{l,N_{l}}\left(x\right) ^{\frac{1}{2}} $ as discussed in the proof of Theorem \ref{conv} and the convergence of $\|m_{l,N_{l}}^{\delta_{l}}\|_{\mathbb{H}_{l}}$ in the RKHS norm guarantees the pointwise convergence of $m_{l,N_{l}}^{\delta_{l}}(x)$ (e.g. Theorem  2.7.6 \citep{hsing2015theoretical}). The convergence of $m_{l,N_{l}}^{\delta_{l}}$ in the RKHS norm is also guaranteed in Theorem \ref{conv}. Moreover, since $\|m_{l,N_{l}}^{\delta_{l}}\|_{\mathbb{H}_{l}}$ is bounded by $\| \delta_{l}\|_{\mathbb{H}_{l}}$ and a non-decreasing function of $N_{l}$ (see Proposition 3.1 in \cite{teckentrup2020convergence}), increasing the value of $\|m_{l,N_{l}}^{\delta_{l}}\|_{\mathbb{H}_{l}}$ with growing $N_{l}$ leads to the smaller $\hat{s}^{2}_{l,N_{l}}\left(x\right) ^{\frac{1}{2}}$. 

Second, $\| \delta_{l}\|_{\mathbb{H}_{l}}$ may be generally unknown but $\|m_{l,N_{l}}^{\delta_{l}}\|_{\mathbb{H}_{l}}$ can be used as a substitute of $\|\delta_{l} \|_{\mathbb{H}_{l}}$ albeit with an approximation error. Indeed, the proof of Corollary 10.25 in \citep{wendland2004scattered} provides the equation: $\| \delta_{l}\|_{\mathbb{H}_{l}}^{2} =\|  m_{l,N_{l}}^{\delta_{l}}\|_{\mathbb{H}_{l}}^{2}+ \| \delta_{l} - m_{l,N_{l}}^{\delta_{l}}\|_{\mathbb{H}_{l}}^{2}$. Hence, $\|m_{l,N_{l}}^{\delta_{l}}\|_{\mathbb{H}_{l}}$ also includes the information of the weight $\| \delta_{l}\|_{\mathbb{H}_{l}}$ then we can consider the magnitude of $\| \delta_{l}\|_{\mathbb{H}_{l}}$ as well as the convergence of $\hat{s}^{2}_{l,N_{l}}\left(x\right) ^{\frac{1}{2}}$ at the same time by focusing on $\|m_{l,N_{l}}^{\delta_{l}}\|_{\mathbb{H}_{l}}$ only.

Finally, we can compute the actual value of $\|m_{l,N_{l}}^{\delta_{l}}\|_{\mathbb{H}_{l}}$ with ease by the following relationship:
\[\|m_{l,N_{l}}^{\delta_{l}}\|_{\mathbb{H}_{l}}^{2} =  \boldsymbol{\delta}_{l,N_{l}}(X_{l,N_{l}})^{\top} K_{l}(X_{l,N_{l}},X_{l,N_{l}})^{-1}\boldsymbol{\delta}_{l,N_{l}}(X_{l,N_{l}}).\] 
Measuring the uncertainty of the mean prediction over the entire input domain would be difficult if we use the value of the prediction variance $\hat{s}^{2}_{l,N_{l}}\left(x\right)$ only; the value of $\hat{s}^{2}_{l,N_{l}}\left(x\right)$ is varied depending on the location of $x$. If $x$ is close to the design point, $\hat{s}^{2}_{l,N_{l}}\left(x\right)$ should be small and vice versa. Therefore, $\|m_{l,N_{l}}^{\delta_{l}}\|_{\mathbb{H}_{l}}$ is a candidate for a concise measure of uncertainty.

Our optimisation problem is similar to that in the context of nonlinear knapsack problem or resource allocation problems \cite{morin1976algorithm,bretthauer2002nonlinear} and our optimisation strategy can be based on the greedy methods, which is one of the common approaches in this discipline (for example, see \cite{marsten1978hybrid,bretthauer1995nonlinear, d2011heuristic}). In our case, the change in the value $\|m_{l,N_{l}}^{\delta_{l}}\|_{\mathbb{H}_{l}}^{2}$ as $N_{l}-1$ is increased to $N_{l}$ is seen as a criterion or ``score''. When increasing the number of points $N_{l}$, we decide which $l$ should be chosen based on the score of every level and continue this process recursively.

Let us define the ``score'' $\gamma_{l,N_{l}}$ by
\begin{eqnarray}
\label{score}
\gamma_{l,N_{l}} \coloneqq  | \|m_{l,N_{l}}^{\delta_{l}}\|_{\mathbb{H}_{l}}^{2} -\|m_{l,N_{l}-1}^{\delta_{l}}\|_{\mathbb{H}_{l}}^{2} | \times \frac{a_{l}}{t_{l}}
\end{eqnarray}
where $a_{l}$ is a tuning parameter that decides the degree of the penalty of the cost $t_{l}$. It takes a positive value up to $t_{l}$ and a smaller value means less emphasis on that level. The idea is to see the score as a benefit of running $\delta_{l}$ weighted by its cost $t_{l}$. This kind of heuristic strategy is widely proposed as a feasible solution to the nonlinear knapsack problem such as in \cite{marsten1978hybrid,bretthauer1995nonlinear, d2011heuristic}. 

We would recommend $a_{l} = t_{l}$ (the maximum value of $a_{l}$) as a first attempt. If the computational costs $t_{l}$ of the simulations of the higher levels are much more expensive than those of the lower levels and tiny values of $a_{l}$ are set for those of the higher levels, this would lead to very small scores for these higher levels due to the small $\frac{a_{l}}{t_{l}}$. As a result, too much emphasis on the cheaper simulations would occur. As we shall see later, another existing strategy of sequential design \cite{le2015cokriging} yields inefficient emulators for the same reason. Our algorithm with $a_{l} = t_{l}$ for every $l$ (the score $\gamma_{l,N_{l}}$ becomes $|\|m_{l,N_{l}}^{\delta_{l}}\|_{\mathbb{H}_{l}}^{2} -\|m_{l,N_{l}-1}^{\delta_{l}}\|_{\mathbb{H}_{l}}^{2}|$) automatically tends not to pick up the higher levels of simulations too frequently since we would expect that $y_{l}(x) \simeq y_{l-1}(x)$, hence $\delta_{l}(x) \simeq 0$, for a bigger $l$, which implies the overall difference between the two high-fidelity simulations is small, and $\gamma_{l,N_{l}}$ takes a smaller value accordingly. In the case that $y_{l}(x)$ behaves very differently than $y_{l-1}(x)$ does, which can happen in tsunami dynamics simulations with kinky movements, a wisely chosen weight $a_{l}$ may improve the algorithm. 

The available computational budget is denoted as $T$, often the available computational time. We have to pay the computational cost $t_{l}$ for running $\delta_{l}$ and $T$ is updated to $T-t_{l}$ accordingly if it is run. We repeat this process until the computational budget depletes. The score $\gamma_{l,N_{l}}$ is used for computing the potential reduction in every level. We choose the level where $\gamma_{l,N_{l}}$ is the largest. It should be noticed that $\delta_{l}$ cannot be run if $t_{l}$ becomes larger than $T$. Thus, we have to grasp which computer simulations are available at each step. The summary of our strategy, MLASCE, is as follows.

\begin{description}

  \item[Preparation (1)] For $l=1, \ldots, L$, we prepare the computer simulation $\delta_{l}$ with the computational cost $t_{l}$ for each run ($t_{1} < \cdots <t_{L}$), discretised input domain  $\mathrm{D}_{G,l} \subseteq \mathbb{R}^{d}$ and candidate set $X_{\text{cand}, l} \subseteq \mathrm{D}_{G,l}$. We denote the total computational budget $T$ and $ X_{l, k_{l}}\subseteq X_{\text{cand}, l} $ the experimental design at level $l$ of size $k_{l}$.

  \item[Preparation (2)] We prepare GP$(0, K_{l}(\cdot, \cdot))$ for $l=1, \ldots, L$. $ K_{l}$ is a Mat\'ern kernel with some fixed $\nu_{l}$ and the other hyperparameters $\sigma_{l}$ and $\lambda_{l}$ updated during the procedure of MICE. We first set the score $\gamma_{l,X_{l,0}} = 0$ before obtaining any data.

\item[Step 1] By implementing random sampling for every level, we obtain the initial design (training data) with the size $1$ for every $l$. That is, let $N_{l} =1 $, sample $\bigl( X_{l, 1}, \delta_{l}(X_{l, 1}) \bigr)$ and the hyperparameters $\sigma^{2}_{l}$ and $\lambda_{l}$ of $K_{l}$ are updated based on the obtained data.

  \item[Step 2] Calculate $\gamma_{l,X_{l,k_{l}}}$ based on the current design $\bigl( X_{l, k_{l}}, \delta_{l}(X_{l, k_{l}}) \bigr)$ for every $l$. The candidate sets are updated accordingly so that $X_{\text{cand}, l}  \leftarrow X_{\text{cand}, l}   \setminus  X_{l, k_{l}} $ for every $l$.

   \item[Step 3-1]  We choose the level $l$ where the score $\gamma_{l,X_{l,k_{l}}}$ is the largest under the condition $t_{l} \leq T$. The corresponding computational cost $t_{l}$ should be smaller than the current budget.

    \item[Step 3-2] At the chosen level $l$, sample a next training point $\{ x_{l}, \delta_{l}(x_{l}) \}$ using MICE. Set $X_{l, k_{l}+1} \leftarrow X_{l, k_{l}} \cup x_{l}$. The hyperparameters $\sigma^{2}_{l}$ and $\lambda_{l}$ are updated by fitting the new data $\{X_{l, k_{l}+1}, \delta_{l}(X_{l, k_{l}+1}) \}$ to GP$(0, K_{l}(\cdot, \cdot))$.

  \item[Step 4] If $T<t_{1} $, then stop; otherwise go to Step 2.
   \item[Output] : The experimental design $\{ X_{l,N^{*}_{l}}, \delta_{l}(X_{l,N^{*}_{l}}) \}$ for every $l$ and 
   the predictive distribution (\ref{mymodel}). 
   
\end{description}

By implementing this algorithm, the Gaussian process emulator is automatically built since the hyperparameters $\sigma^{2}_{l}$ and $\lambda_{l}$ are updated whenever new data is sampled. Obviously the model error will be large at the beginning with few points available to fit the GP and acts as an agnostic random design, like an LHD. Howver, the design strategy progressively improves its performance to deliver gains relatively quickly in our examples. 
\begin{remark}
\label{mlasce_algo}
In the above explanation, only 1 point is assumed for the initial design of every level and the algorithm increases just 1 point for the chosen level at Step 3-2. But, these settings can be modified flexibly. The number of the initial data point for the emulator of each level may be determined by the user; instead of fixing it to 1 point, one would choose an arbitrary number of points if the budget permits. Moreover, one may increase an arbitrary number of points at Step 3-1 and 3-2. For instance, if MLASCE chooses level $l_{1}$ and $l_{2}$ subsequently, then one can increase, say, 3 design points for both level $l_{1}$ and $l_{2}$ at each step. 
\end{remark}

\begin{figure}[ht]
\begin{center}
\includegraphics[width=0.4\linewidth]{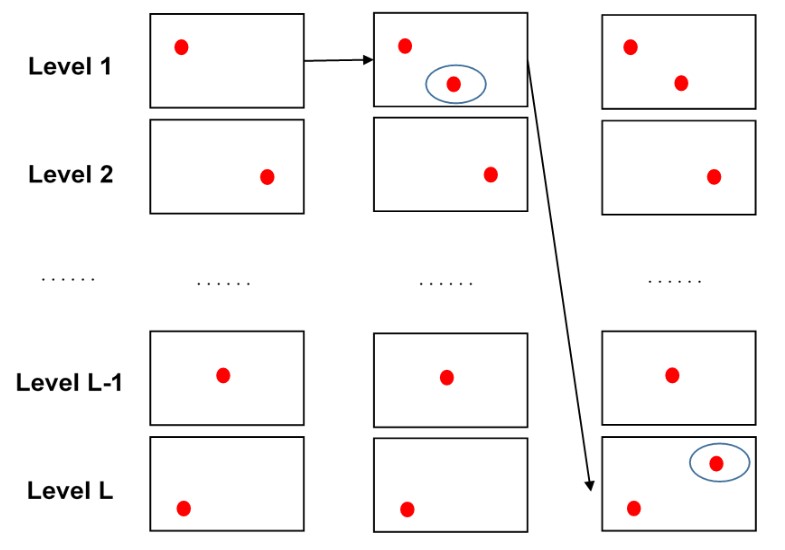}
\caption{Workflow of MLASCE. New design point is generated at level 1 first and next at level $L$.}
\label{plan}
\end{center}
\end{figure}

\section{The special case of a uniform tensor grid: optimal numbers of runs}
\label{MLMC_kriging_1}
Although MLASCE produces an efficient statistical emulator, the number of data points for each computer simulation is actually unknown until the algorithm finishes its computation. It would be beneficial to get a sense of an optimal number of experimental design points before running computer simulations, albeit in the case of space filling design. This can be the case if the locations of the design are specified by prior knowledge of a system. In this section, an analysis of optimal numbers of runs (prior to actual implementation of computer simulations) is provided and this is independent of the algorithm of MLASCE presented in the previous sections. We again assume that the evaluations of the each increment $\delta_{l}(x)$ are the realizations of the mutually independent GP $\eta_{l}(x) \sim \text{GP}(0,K_{l}(x,x^{\prime}))$. The scheme presented in this part is similar to that of MLMC (\ref{MLMLprinciple}): the error of the emulator is expressed as a function of the number of data points $N_{l}$ and it is minimized under a restriction. The assumptions and notations are same as in the previous section. The main principle of this analysis is composed of (i) deriving a convergence rate of $\sum_{l=1}^{L}|\eta_{l}(x)-m_{l,N_{l}}^{\eta_{l}}\left(x\right) |$ in (\ref{myerror_2}) with respect to the number of runs $N_{l}$ and (ii) minimising this upper bound as a function of $N_{l}$. Since the realizations $\eta(X_{l,N_{l}})$ are $\delta_{l}(X_{l,N_{l}})$, we take note that $m_{l,N_{l}}^{\delta_{l}}\left(x\right) = m_{l,N_{l}}^{\eta_{l}}\left(x\right)$. There is rich literature regarding obtaining the convergence rate in the field of scattered data approximation (e.g. \cite{wendland2004scattered,wendland2005approximate}). An approximation is constructed using radial basis functions originating from reproducing kernels and the approximation coincides with the mean of a conditioned GP. The results from this literature are hence directly transferable to approximation with GPs. The work of \cite{wang2020prediction} and \cite{teckentrup2020convergence} contains the most up-to-date application of scattered data approximation results for GPR. In \cite{wang2020prediction}, an upper bound of worst-case prediction error is given, which is to be applied to our analysis. 

We start from the posterior prediction error of the GPs $\eta_{l}(x)$ in (\ref{myerror_2}). We use Mat\'ern kernels (\ref{mat}) with hyperparameters $\{\nu_{l}, \lambda_{l},\sigma_{l}^{2}\}$ again for $\eta_{l}(x)$ and $m_{l,N_{l}}^{\eta_{l}}\left(x\right)$ for every $l$ and assume the case of misspecified kernel functions. This analysis of optimal numbers of runs is implemented before obtaining training data and there is no clue to know the actual values of parameters in the kernels hence this assumption is crucial. The outline of this study is (i) the error bound in (\ref{myerror_2}) is converted into the function of Power functions based on Theorem \ref{scattered_conv}, (ii) Power functions are further converted into the fill distance and finally (iii) the fill distance is translated into the $N_{l}$. 
\begin{eqnarray}
\label{myerror_2}
\sum_{l=1}^{L} | \eta_{l}(\boldsymbol{x}) - m_{l,N_{l}}^{\eta_{l}}\left(\boldsymbol{x}\right) | 
\end{eqnarray}
Then, we present the theorem as the result of the above direction. To simplify the notation, we write $a \lesssim b$ for two positive quantities $a$ and $b,$ if $a / b$ is uniformly bounded independent of the number of points $N_{l}$ and the degree of accuracy $h_{l}$ of $\delta_{l}$. We also write $a \simeq b,$ if $a \lesssim b$ and $b \lesssim a$. The definition of the fill distance is given after this theorem.

\begin{thm}
\label{mlmc_bound}
We assume the following conditions.

\begin{enumerate}
\item $\mathbb{D} \subseteq \mathbb{R}^{d}$ is compact and convex with a positive Lebesgue measure.
\item $X_{l,N_{l}}$ is a uniform tensor grid in $\mathbb{D}$ for $l=1, \ldots, L$.     \footnote{As long as $X_{l,N_{l}}$ is quasi-uniform (see Definition 4.6 and Proposition 14.1 in \cite{wendland2004scattered}), this condition is satisfied. } 
\item Condition \ref{cond_scattered} in \citep{wang2020prediction} is satisfied for Mat\'ern kernels $K_{l}$ and $K_{l,0}$ in (\ref{mat}) for $l=1, \ldots, L$.
\item $\sigma_{l} \lesssim (h_{l}-h_{l-1})^{2\alpha}$ holds for some positive $\alpha$ and $l=1,\ldots,L$ where $h_{0}$ is $0$.
\item $\tilde{c}_{l} h_{\mathrm{X}_{l,N_{l}}}^{\nu_{l}}$ in Lemma \ref{wu} takes value in $(0,\sqrt{\mathrm{e}}]$ and $h_{\mathrm{X}_{l,N_{l}}}^{\nu_{l}}$ is smaller than $\frac{1}{e}$ for $l=1,\ldots, L$, where $h_{\mathrm{X}_{l,N_{l}}}$ is the fill distance of $X_{l,N_{l}}$.
\item $N_{l} \geq 2$ for $l=1, \ldots, L$.
\end{enumerate}
Then, the following inequality holds for probability at least $p$ where $p$ is an arbitrary number in $[0,1)$.
\begin{eqnarray}
\label{error_overall}
\sum_{l=1}^{L} |\eta_{l}(\boldsymbol{x}) - m_{l,N_{l}}^{\eta_{l}}\left(\boldsymbol{x}\right) |  
&\lesssim& \sum_{l=1}^{L} \Bigl( (h_{l}-h_{l-1})^{2\alpha} N_{l}^{-{\nu_{l} \over d}} \log ^{1 / 2} N_{l} \Bigr)
\end{eqnarray}
\end{thm}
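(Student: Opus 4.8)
I would prove Theorem~\ref{mlmc_bound} by controlling each summand in the decomposition (\ref{myerror_2}) separately and then adding the $L$ contributions, following the three-step plan (power function $\to$ fill distance $\to$ $N_{l}$) announced just before the statement. Fix a level $l$. Hypothesis~3 ensures the pair $(K_{l},K_{l,0})$ satisfies Condition~\ref{cond_scattered}, and hypotheses~5--6 keep the supremum power function $P_{K_{l},X_{l,N_{l}}}$ small, so the ``dense enough'' requirement of Theorem~\ref{scattered_conv} is met once $N_{l}$ is large enough. Applying Theorem~\ref{scattered_conv} to the GP emulator of $\delta_{l}$ with $u_{l}=\sigma_{l}\,P_{K_{l},X_{l,N_{l}}}\bigl(2A_{1,l}\log\tfrac{2L}{1-p}\bigr)^{1/2}$ gives, with probability at least $1-(1-p)/L$,
\[
\sup_{x\in\mathbb{D}}\bigl|m_{l,N_{l}}^{\delta_{l}}(x)-\delta_{l}(x)\bigr|\;\lesssim\;\sigma_{l}\,P_{K_{l},X_{l,N_{l}}}\log^{1/2}\!\bigl(\mathrm{e}/P_{K_{l},X_{l,N_{l}}}\bigr),
\]
where the $u_{l}$ term is absorbed into the main term because $P_{K_{l},X_{l,N_{l}}}\le 1$ forces $\log^{1/2}(\mathrm{e}/P_{K_{l},X_{l,N_{l}}})\ge 1$ and $\log\tfrac{2L}{1-p}$ is a constant (here $L$ is fixed). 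A union bound over $l=1,\dots,L$ makes all $L$ estimates hold simultaneously with probability at least $p$; together with (\ref{myerror_2}) this reduces the theorem to controlling $\sigma_{l}\,P_{K_{l},X_{l,N_{l}}}\log^{1/2}(\mathrm{e}/P_{K_{l},X_{l,N_{l}}})$ for each $l$ in terms of $N_{l}$ and $h_{l}$.

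For that step I would use the three conversions. First, Lemma~\ref{wu} bounds the power function by a power of the fill distance, $P_{K_{l},X_{l,N_{l}}}\le\tilde{c}_{l}\,h_{\mathrm{X}_{l,N_{l}}}^{\nu_{l}}$, which by hypothesis~5 lies in $(0,\sqrt{\mathrm{e}}]$; and on $(0,\sqrt{\mathrm{e}}]$ the scalar map $t\mapsto t\log^{1/2}(\mathrm{e}/t)$ is non-decreasing, its derivative $(1-2\log t)/(2\log^{1/2}(\mathrm{e}/t))$ being non-negative exactly when $t\le\sqrt{\mathrm{e}}$. Hence the bound on $P_{K_{l},X_{l,N_{l}}}$ can be pushed through the logarithmic factor, and using hypothesis~6 ($h_{\mathrm{X}_{l,N_{l}}}^{\nu_{l}}<1/\mathrm{e}$) to absorb $\tilde{c}_{l}$ and the additive constant inside the logarithm, the per-level bound becomes $\lesssim\sigma_{l}\,h_{\mathrm{X}_{l,N_{l}}}^{\nu_{l}}\log^{1/2}(1/h_{\mathrm{X}_{l,N_{l}}})$. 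Second, because $\Omega$ is compact and convex (hypothesis~1) and $X_{l,N_{l}}$ is a uniform tensor grid (hypothesis~2), a direct count of the grid spacing gives $h_{\mathrm{X}_{l,N_{l}}}\simeq N_{l}^{-1/d}$, hence $h_{\mathrm{X}_{l,N_{l}}}^{\nu_{l}}\simeq N_{l}^{-\nu_{l}/d}$ and $\log^{1/2}(1/h_{\mathrm{X}_{l,N_{l}}})\simeq\log^{1/2}N_{l}$. Third, hypothesis~4 supplies $\sigma_{l}\lesssim(h_{l}-h_{l-1})^{2\alpha}$. Substituting these three facts into the per-level bound and summing over $l$ yields exactly (\ref{error_overall}).

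The hardest part is not any single estimate but assembling the pieces with constants that are genuinely uniform in $N_{l}$ and $h_{l}$. The load-bearing facts are the monotonicity of $t\mapsto t\log^{1/2}(\mathrm{e}/t)$ on $(0,\sqrt{\mathrm{e}}]$ — which is exactly why hypotheses~5--6 are imposed, since they are what allow the power-function bound to be transported through the otherwise-awkward logarithmic factor — and the fill-distance asymptotics $h_{\mathrm{X}_{l,N_{l}}}\simeq N_{l}^{-1/d}$ for a tensor grid on a compact convex domain. The more delicate bookkeeping is the probability split: distributing the failure budget $1-p$ across the $L$ levels is harmless only because $L$ is held fixed, so the extra $\log(2L/(1-p))$ enters the implied constant rather than the rate, and one must also verify that each of the constants appearing along the way ($\mathcal{K}_{l}$, $A_{0,l}$, $A_{1,l}$, $\tilde{c}_{l}$, and the grid constants) depends only on the fixed data of the problem, as the statement of Theorem~\ref{mlmc_bound} requires.
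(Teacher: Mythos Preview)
Your proposal is correct and follows essentially the same route as the paper's proof: apply Theorem~\ref{scattered_conv} per level, pass from the power function to the fill distance via Lemma~\ref{wu} using the monotonicity of $t\mapsto t\log^{1/2}(\mathrm{e}/t)$ on $(0,\sqrt{\mathrm{e}}]$, convert fill distance to $N_{l}^{-1/d}$ for tensor grids, substitute $\sigma_{l}\lesssim(h_{l}-h_{l-1})^{2\alpha}$, and sum. The only substantive difference is that you choose $u_{l}$ with $\log\tfrac{2L}{1-p}$ and take a union bound over levels to secure the \emph{overall} probability $p$, whereas the paper takes $u$ with $\log\tfrac{2}{1-p}$, which strictly speaking only delivers probability $p$ per level; your bookkeeping here is slightly more careful, though since $L$ is fixed it does not affect the rate. (Minor note: the theorem has five hypotheses, not six; what you call hypotheses~5 and~6 are the two clauses of hypothesis~5.)
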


Since the evaluations of the each increment $\delta_{l}(x)$ are assumed to be the realizations of the mutually independent $\eta_{l}(x) \sim \text{GP}(0,K_{l}(x,x^{\prime}))$, we can apply the Theorem \ref{scattered_conv}, which deals with a GP sample path itself, to our case.
Before showing the detail of proof, the auxiliary lemma is presented so that the Power function is converted into the function of the fill distance, which is a quantity depending only on the design $X_{l,N_{l}}$. Given the input domain $\mathbb{D}$, the fill distance of a design $\mathbf{X}$ is defined as $
h_{\mathrm{X}} \coloneqq \sup _{x \in \mathbb{D}} \min _{x_{j} \in \mathrm{X}}\left\|x-x_{j}\right\|
.$ Clearly, the fill distance quantifies the space-filling property \citep{santner2003design} of a design. A design having the minimum fill distance among all possible designs with the same number of points is known as a minimax distance design \citep{johnson1990minimax}. Thanks to the following Lemma, which is introduced in \citep{wang2020prediction}, the unknown hyperparameter $\lambda_{l}$ in the kernel is incorporated into the constant $\tilde{c}_{l}, \tilde{h}_{l,0}$.

\begin{lem}[{\cite{wu1993local}}, Theorem 5.14]
\label{wu}
 Let $\mathbb{D}$ be compact and convex with a positive Lebesgue measure; $K_{l}(x,x^{\prime})$ be a Mat\'ern kernel given by (\ref{mat}) with the smoothness parameter $\nu_{l}$. Then there exist constants $\tilde{c}_{l}, \tilde{h}_{l,0}$ depending only on $\mathbb{D}, \nu_{l}$ and the scale parameter $\lambda_{l}$ in (\ref{mat}), such that $P_{K_{l}, X_{l,N_{l}}} \leq \tilde{c}_{l} h_{\mathrm{X}_{l,N_{l}}}^{\nu_{l}}$ provided that $h_{\mathrm{X}_{l,N_{l}}} \leq \tilde{h}_{l,0}$
\end{lem}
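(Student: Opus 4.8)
The plan is to read the quantity $P_{K_l,X_{l,N_l}}$ as a worst-case interpolation error and then control it by the fill distance through a classical scattered-data sampling inequality. Recall the standard identity that, for the Mat\'ern native space $\mathbb{H}_l$ with $\sigma_l^2$ normalised as in the definition of the power function, $P_{K_l,X_{l,N_l}}=\sup_{x\in\mathbb{D}}\sup\{\,|f(x)-s_f(x)| : \|f\|_{\mathbb{H}_l}\le 1\,\}$, where $s_f$ denotes the kernel interpolant of $f$ on $X_{l,N_l}$; equivalently $P_{K_l,X_{l,N_l}}(x)$ is the best-approximation residual of $K_l(\cdot,x)$ by $\mathrm{span}\{K_l(\cdot,x_j)\}$ in $\mathbb{H}_l$. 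The key structural input is Proposition \ref{stuart}: the native space $\mathbb{H}_l$ of $K_{\nu_l,\lambda_l,\sigma_l^2}$ equals the Sobolev space $H^{\tau_l}(\mathbb{D})$ with $\tau_l=\nu_l+d/2$, the norm-equivalence constants depending only on $\nu_l,\lambda_l,\sigma_l^2,d$. This reduces the whole question to a Sobolev-norm statement and routes the eventual dependence of $\tilde c_l$ on $\lambda_l$ (and the absence of $\sigma_l$-dependence, which is normalised away in $P$) entirely through these equivalence constants.

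The heart of the argument is a sampling inequality on $\mathbb{D}$: for every $g\in H^{\tau_l}(\mathbb{D})$ with $\tau_l>d/2$ that vanishes on $X_{l,N_l}$, and provided the fill distance $h:=h_{X_{l,N_l}}$ is at most some threshold $h_{l,0}$, one has $\|g\|_{L^\infty(\mathbb{D})}\lesssim h^{\tau_l-d/2}\,|g|_{H^{\tau_l}(\mathbb{D})}=h^{\nu_l}\,|g|_{H^{\tau_l}(\mathbb{D})}$. I would establish this in the Wu--Schaback / Wendland manner: around each $x\in\mathbb{D}$, select the centres of $X_{l,N_l}$ lying within a ball of radius $\asymp h$ and construct coefficients $(u_j)$ that reproduce all polynomials of degree $<\tau_l$ exactly at $x$ while keeping $\sum_j|u_j|$ bounded by an absolute constant. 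Existence of such a uniformly bounded local polynomial reproduction is exactly where the geometry enters: compactness, convexity and positive Lebesgue measure of $\mathbb{D}$ supply an interior cone condition, and the smallness requirement $h\le h_{l,0}$ guarantees enough centres inside each local ball. A Bramble--Hilbert / Taylor-with-remainder estimate applied to $g$ at the reproducing centres then produces the stated power, the gained smoothness being precisely $\tau_l-d/2=\nu_l$.

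With the sampling inequality in hand the lemma follows quickly: apply it to $g=f-s_f$, which interpolates zero on $X_{l,N_l}$. Since the kernel interpolant is the $\mathbb{H}_l$-orthogonal projection onto $\mathrm{span}\{K_l(\cdot,x_j)\}$, one has $|f-s_f|_{H^{\tau_l}(\mathbb{D})}\lesssim\|f-s_f\|_{\mathbb{H}_l}\le\|f\|_{\mathbb{H}_l}$, whence $|f(x)-s_f(x)|\le\|f-s_f\|_{L^\infty(\mathbb{D})}\lesssim h^{\nu_l}\|f\|_{\mathbb{H}_l}$; taking the supremum over $\|f\|_{\mathbb{H}_l}\le 1$ and over $x\in\mathbb{D}$ and folding the norm-equivalence constants into a single $\tilde c_l$ depending only on $\mathbb{D},\nu_l,\lambda_l$ yields $P_{K_l,X_{l,N_l}}\le\tilde c_l\,h_{X_{l,N_l}}^{\nu_l}$ for $h_{X_{l,N_l}}\le h_{l,0}$. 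I expect the main obstacle to be the sampling inequality itself --- building the local polynomial reproduction with $\ell^1$-bounded coefficients and carrying the correct $h$-power through the Bramble--Hilbert step while keeping every constant independent of $N_l$ and tracking its $\lambda_l$-dependence. Since Lemma \ref{wu} is quoted verbatim as Theorem 5.14 of \cite{wu1993local}, the economical alternative is simply to verify through Proposition \ref{stuart} that the Mat\'ern native space is $H^{\nu_l+d/2}(\mathbb{D})$, so that the cited theorem applies directly and delivers the constants $\tilde c_l,h_{l,0}$ with the asserted dependencies.
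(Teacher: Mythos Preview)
The paper does not prove this lemma at all: it is stated as a direct citation of Theorem~5.14 in \cite{wu1993local} and used as a black box in the proof of Theorem~\ref{mlmc_bound}. Your sketch is a correct outline of the standard scattered-data argument (power function as worst-case interpolation error, native-space/Sobolev norm equivalence via Proposition~\ref{stuart}, local polynomial reproduction plus a Bramble--Hilbert step to obtain the $h^{\nu_l}$ rate), and you yourself identify the economical alternative the paper actually takes, namely to invoke the cited theorem directly once the Mat\'ern native space is identified with $H^{\nu_l+d/2}(\mathbb{D})$.
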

\begin{proof}[Proof of Theorem \ref{mlmc_bound}]
By applying Theorem \ref{scattered_conv} and Lemma \ref{wu} to $| \eta_{l}(x)-m_{l,N_{l}}^{\eta_{l}}\left(x\right)|$, we obtain the following inequality with probability at least $p$ :
\begin{eqnarray}
\label{conv_wang_again_now}
\sup_{x \in \mathbb{D}}\left|\eta_{l}(x)-m_{l,N_{l}}^{\eta_{l}}\left(x\right) \right| &\leq& \mathcal{K}_{l} \sigma_{l} P_{K_{l}, X_{l,N_{l}}} \log ^{1 / 2}\left(\mathrm{e} / P_{K_{l}, X_{l,N_{l}}}\right)+u \nonumber \\ 
&\leq & \mathcal{K}_{l} \sigma_{l} \tilde{c}_{l} h_{\mathrm{X}_{l,N_{l}}}^{\nu_{l}} \log ^{1 / 2}\left(\mathrm{e} / \tilde{c}_{l} h_{\mathrm{X}_{l,N_{l}}}^{\nu_{l}}\right)+u \nonumber \\
& \leq&  \mathcal{K}_{l} \sigma_{l} \tilde{c}_{l} h_{\mathrm{X}_{l,N_{l}}}^{\nu_{l}} \log ^{1 / 2}\left(1 /  h_{\mathrm{X}_{l,N_{l}}}^{\nu_{l}}\right)^{\kappa_{l}} +u
\end{eqnarray}
where $u= \sigma_{l}P_{K_{l}, X_{l,N_{l}}}A_{1,l} \left(2\log{2 \over 1-p} \right)^{1/2}$ so that the inequality holds for probability at least $p$. 
In the second inequality, we used the fact the function $x\log^{1/2}(\mathrm{e}/x) = x(1-\log x)^{1/2}$ is monotone increasing in $(0,\sqrt{\mathrm{e}} ]$ and assumption that $\tilde{c}_{l} h_{\mathrm{X}_{l,N_{l}}}^{\nu_{l}}$ falls in this interval. For the third inequality, $(\mathrm{e} / \tilde{c}_{l} h_{\mathrm{X}_{l,N_{l}}}^{\nu_{l}}) \leq (1 /  h_{\mathrm{X}_{l,N_{l}}}^{\nu_{l}})^{\kappa_{l}}$ holds for $h_{\mathrm{X}_{l,N_{l}}}^{\nu_{l}} < \frac{1}{e}$ and some positive $\kappa_{l}$. 

For uniform tensor grids $X_{l,N_{l}}$, the fill distance $h_{\mathrm{X}_{l,N_{l}}}$ is of the order $N_{l}^{-1/d}$  (Proposition 14.1 in \citep{wendland2004scattered}). Then, the upper bound in (\ref{conv_wang_again_now}) is written in the following:

\begin{eqnarray*}
\label{conv_tensor}
&\hspace{1pt}& \mathcal{K}_{l} \sigma_{l} \tilde{c}_{l} h_{\mathrm{X}_{l,N_{l}}}^{\nu_{l}} \log ^{1 / 2}\left(1 /  h_{\mathrm{X}_{l,N_{l}}}^{\nu_{l}}\right)^{\kappa_{l}} +u  \nonumber \\
&=& \mathcal{K}_{l} \sigma_{l} \tilde{c}_{l} h_{\mathrm{X}_{l,N_{l}}}^{\nu_{l}} \log ^{1 / 2}\left(1 /  h_{\mathrm{X}_{l,N_{l}}}^{\nu_{l}}\right)^{\kappa_{l}} +\sigma_{l}P_{K_{l}, X_{l,N_{l}}}A_{1,l} \left(2\log{2 \over 1-p} \right)^{1/2} \nonumber \nonumber \\
&\leq& \mathcal{K}_{l} \sigma_{l} \tilde{c}_{l} h_{\mathrm{X}_{l,N_{l}}}^{\nu_{l}} \log ^{1 / 2}\left(1 /  h_{\mathrm{X}_{l,N_{l}}}^{\nu_{l}}\right)^{\kappa_{l}} +\sigma_{l} \tilde{c}_{l} h_{\mathrm{X}_{l,N_{l}}}^{\nu_{l}} A_{1,l} \left(2\log{2 \over 1-p} \right)^{1/2} \nonumber \\
&\leq& \mathcal{K}_{l} \sigma_{l} \tilde{c}_{l} \kappa_{l}^{1/2} h_{\mathrm{X}_{l,N_{l}}}^{\nu_{l}} \log ^{1 / 2}\left(1 /  h_{\mathrm{X}_{l,N_{l}}}^{\nu_{l}}\right) +\sigma_{l} A_{1,l}  \left(2\log{2 \over 1-p} \right)^{1/2} \tilde{c}_{l}  h_{\mathrm{X}_{l,N_{l}}}^{\nu_{l}} \log ^{1 / 2}\left(1 /  h_{\mathrm{X}_{l,N_{l}}}^{\nu_{l}}\right) \nonumber \\
&\simeq& \sigma_{l} h_{\mathrm{X}_{l,N_{l}}}^{\nu_{l}} \log ^{1 / 2}\left(1 /  h_{\mathrm{X}_{l,N_{l}}}\right) \nonumber \\
&\simeq& \sigma_{l} N_{l}^{-{\nu_{l} \over d}} \log ^{1 / 2} N_{l} \nonumber \\
&\lesssim&  (h_{l}-h_{l-1})^{2\alpha} N_{l}^{-{\nu_{l} \over d}} \log ^{1 / 2} N_{l} \nonumber \\
\end{eqnarray*}

In the fourth line, we used the fact that $x \leq x \log^{1/2}\frac{1}{x}$ for $x < \frac{1}{e}$. In the sixth line, the detail of the proof of $ h_{\mathrm{X}_{l,N_{l}}}^{\nu_{l}} \log ^{1 / 2}\left(1 /  h_{\mathrm{X}_{l,N_{l}}}\right) \simeq  N_{l}^{-{\nu_{l} \over d}} \log ^{1 / 2} N_{l}$ for $N_{l} \geq 2$ is in \ref{appendix_inequality}. The same result of one dimensional case (without proof) is in Section 3.3 in \citep{wang2020prediction}. We finally arrive at a conclusion by summing up the above upper bound:
\begin{eqnarray*}
\label{error_overall_2}
\sum_{l=1}^{L} | \eta_{l}(x) - m_{l,N_{l}}^{\eta_{l}}\left(x\right)| 
&\lesssim & \sum_{l=1}^{L} \Bigl( (h_{l}-h_{l-1})^{2\alpha} N_{l}^{-{\nu_{l} \over d}} \log ^{1 / 2} N_{l} \Bigr).
\end{eqnarray*}
\end{proof}

The comments on the assumptions of $p$ and $\sigma_{l}$ are made here. The probability $p$ is no longer present in the upper bound since the constants are independent of the order of convergence but $p$ does affect the actual convergence speed. Besides, we cannot know the exact or estimated value of $\sigma_{l}$ before running computer simulations though $\sigma_{l}$ represents the magnitude (hence importance) of the increment $\delta_{l}$. Therefore, we assume that $\sigma_{l}$, which regulates the magnitude of the increment $\delta_{l}$, is proportional to the degree of accuracy $h_{l}$ and $h_{l-1}$ corresponding to the increment $\delta_{l}$. 

Our final goal is to minimize the (equivalent) upper bound of (\ref{error_overall}) under the restriction of computational budget $\sum_{l=1}^{L}N_{l} t_{l}= T$ and $N_{l} \geq 2$ for every $l$. We are now ready to present the following proposition.

\begin{prop}
\label{mlmcresult}
Under the same conditions in Theorem \ref{mlmc_bound}, the optimal numbers of runs $N_{l}$ are given by the solution of the following optimization problem:

\begin{eqnarray*}
\label{MLMC_like_opt}
&\argmin_{N_{1},\cdots,N_{L}}& \ \sum_{l=1}^{L} \Bigl( (h_{l}-h_{l-1})^{2\alpha} N_{l}^{-{\nu_{l} \over d}} \log ^{1 / 2} N_{l} \Bigr)  \\
&s.t.&  \sum_{l=1}^{L} N_{l}t_{l} =T \\
&\hspace{10pt} & N_{l} \geq 2 \hspace{10pt}  l=1,\ldots, L
\end{eqnarray*}

where the optimal numbers of runs mean that the equivalent quantity of the upper bound $\sum_{l=1}^{L} |m_{l,N_{l}}^{\eta_{l}}\left(x\right) - \eta_{l}(x)|$ is minimized with respect to $N_{l}$.
\end{prop}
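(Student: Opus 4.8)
The plan is to read Proposition~\ref{mlmcresult} as an essentially immediate corollary of Theorem~\ref{mlmc_bound}, supplemented by a short well-posedness argument for the constrained optimization. Theorem~\ref{mlmc_bound} already delivers, under its five hypotheses,
\[
|y_{L}(x) - m^{y_{L}}_{L}(x)| \;\lesssim\; \sum_{l=1}^{L}\Bigl((h_{l}-h_{l-1})^{2\alpha}\, N_{l}^{-\nu_{l}/d}\log^{1/2} N_{l}\Bigr),
\]
where, by the $\lesssim$ convention, the hidden constant does not depend on the $N_{l}$ or on the accuracies $h_{l}$. The right-hand side is precisely the ``equivalent quantity of the upper bound'' named in the statement, so the only thing to argue is that, among all designs compatible with the budget, the one minimising this surrogate is the solution of the displayed program. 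I would organise the proof in three short steps: (i) identify the objective with the bound above; (ii) justify the feasible set; (iii) check the program admits a minimizer with the budget constraint active, so that ``optimal $N_{l}$'' is meaningful.

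For step (ii) I would spell out the cost accounting: one evaluation of $\delta_{l}$ costs $t_{l}$ and the level-$l$ design $X_{l,N_{l}}$ consists of $N_{l}$ such evaluations, so the total expenditure is $\sum_{l=1}^{L}N_{l}t_{l}$, which must equal the budget $T$; the bound $N_{l}\ge 1$ records that at least one run per level is needed for the increment GP to be defined (invertibility of $K_{l}(X_{l,N_{l}},X_{l,N_{l}})$, as in hypothesis~5 of Theorem~\ref{conv}). This reproduces exactly the feasible region of the program.

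For step (iii) I would show each summand $g_{l}(N):=(h_{l}-h_{l-1})^{2\alpha}N^{-\nu_{l}/d}\log^{1/2}N$ is strictly decreasing on the range of $N_{l}$ permitted by the hypotheses: its derivative has the sign of $-\tfrac{\nu_{l}}{d}\log^{1/2}N+\tfrac12\log^{-1/2}N$, which is negative once $\log N>d/(2\nu_{l})$, and Assumption~5 of Theorem~\ref{mlmc_bound} (fill distance with $h_{X_{l,N_{l}}}^{\nu_{l}}<1/e$, together with $h_{X_{l,N_{l}}}\simeq N_{l}^{-1/d}$ for the uniform tensor grid, Proposition~14.1 in \citep{wendland2004scattered}) keeps $N_{l}$ in exactly that large-$N$ regime. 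Thus the objective is strictly decreasing in each coordinate, the budget hyperplane binds at the optimum, the feasible set (a bounded slice of $\sum N_{l}t_{l}=T$) is compact, and a minimizer exists; relabelling it $(N_{1}^{*},\dots,N_{L}^{*})$ is the claim. If one wants an explicit description rather than an existence statement, one relaxes to $N_{l}\in\mathbb{R}_{\ge1}$ and writes the KKT/Lagrangian conditions just as in the MLMC principle~(\ref{MLMLprinciple}); because of the $\log^{1/2}N_{l}$ factor these do not close in elementary form, which is why in practice the program is solved numerically, or equivalently by the greedy allocation scheme of Section~\ref{algo_imp}.

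The obstacle here is interpretive rather than technical: one must keep in mind that the proposition does \emph{not} assert that the prediction error itself is minimised, only the explicit Theorem~\ref{mlmc_bound} surrogate for it, which is exactly what the clause ``where the optimal numbers of runs mean\ldots'' records. The single genuine mathematical point is the monotonicity check in step (iii): it is there that Assumption~5 of Theorem~\ref{mlmc_bound} is used, since without the fill-distance smallness the summands $g_{l}$ need not be monotone and ``minimise the bound'' could be met at an interior, budget-unexhausting point, spoiling the clean statement.
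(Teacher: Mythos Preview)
Your reading is correct: the proposition is an immediate consequence of Theorem~\ref{mlmc_bound}, and the paper in fact offers no formal proof at all---it simply states the proposition and then moves directly to commentary on how the resulting optimization is solved numerically (Trust-Region, BFGS) and on the role of $\alpha$. In the paper's presentation, the clause ``where the optimal numbers of runs mean that the equivalent quantity of the upper bound \ldots\ is minimized'' is doing all the work: it \emph{defines} optimality as minimisation of the Theorem~\ref{mlmc_bound} surrogate, so nothing remains to prove beyond invoking that theorem.

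Your steps~(ii) and~(iii)---the cost-accounting justification of the feasible set and the monotonicity/compactness argument for existence of a minimizer with the budget binding---are genuine additions that the paper omits. They are correct and make the statement cleaner, though one could quibble that the $\simeq$ in $h_{X_{l,N_l}}\simeq N_l^{-1/d}$ only places $N_l$ in the monotone regime up to an unspecified constant; the paper sidesteps this by never claiming existence explicitly and treating the program as a numerical recipe. In short: same approach, but you have supplied the well-posedness details the paper leaves implicit.
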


The contribution from $\delta_{l}$ for large $l$ is weighted with a relatively small value $\sigma_{l}^{2}$, which is proportional to the difference of the accuracy level $(h_{l}-h_{l-1})^{2\alpha}$, which would be usually expected to decline as the level $l$ increases. Hence we do not have to run expensive computer simulations many times. The optimization problem would not be solved in a closed form in general but we can apply an ordinary method of numerical optimization such as Trust-Region Constrained Algorithm. The choice of $\alpha$ might be a challenge while $h_{l}$ can be determined by the user, typically taken to be proportional to discretization level in a numerical solver. In the similar framework of MLMC \citep{giles2015multilevel}, deciding unknown parameters is also a difficult task. A feasible strategy for choosing $\alpha$ in our analysis is to select a relatively small value. For $\alpha < \alpha^{\prime}$, if $\sigma_{l} \lesssim (h_{l}-h_{l-1})^{2\alpha^{\prime}}$ holds then $\sigma_{l} \lesssim (h_{l}-h_{l-1})^{2\alpha}$ is satisfied since $h_{i} - h_{i-1}$ is smaller than 1 as defined before. Figure \ref{explicit} shows how the numbers of data points at different levels change as $\alpha$ takes different values. More emphasis is placed on the lower levels as $\alpha$ takes larger values, which is consistent with our expectation. We assume that the number of levels $L$ is $5$, $d=1$, $\nu_{l}=2.5$ for every $l$ and total budget $T$ is $800$. The degree of accuracy $h_{l}$ and computational costs $t_{l}$ are in Table \ref{explicit_info}. The rate of increase in the cost is exponential and this assumption is similar to that of the theorem of MLMC (see Theorem 2.1 in \cite{giles2015multilevel}).  

\begin{table}[htb]
 \caption{Computational costs of computer simulations}
 \begin{center}
  \begin{tabular}{|c|c|c|c|c|c|} \hline 
 & $l=1$ & $l=2$ & $l=3$  & $l=4$ & $l=5$  \\ \hline 
  $h_{l}$ & $1$ & $1/2$ & $1/4$ & $1/8$ & $1/16$  \\ \hline 
  $t_{l}$ & $0.5$ & $2$ & $8$ & $32$ & $128$  \\ \hline 
  \end{tabular}
  \label{explicit_info}
 \end{center}
\end{table}

A Larger value of $\alpha$ means more importance is placed on $\delta_{l}$ of lower accuracy hence the smaller number of samples is picked from expensive simulations. Unless the simulations of low fidelity are known to be trustworthy in advance, the small $\alpha$ is recommended.

\begin{remark}
\label{closed_form}
If we assume $\nu_{l} = \nu^{*}$ for every $l$ and $ \nu^{*} > d /2e $, a closed form of the numbers of sample points can be derived. By taking note that $\log x \leq x^{a}$ for $x \geq 1$ and $a \geq 1 / e$, the upper bound in (\ref{error_overall}) is replaced with the more loose bound $\sum_{l=1}^{L} \Bigl( (h_{l}-h_{l-1})^{2\alpha} N_{l}^{-{\nu^{*} \over d} + {1 \over 2e}}\Bigr)$. Therefore, the optimization problem in Proposition \ref{mlmcresult} can be replaced with $\argmin_{N_{1},\cdots,N_{L}} \sum_{l=1}^{L} \Bigl( (h_{l}-h_{l-1})^{2\alpha} N_{l}^{-{\nu^{*} \over d} + {1 \over 2e}} \Bigr) $ under the restriction $ \sum_{l=1}^{L} N_{l}t_{l} =T$ for positive $N_{l}$. By applying Lagrange multipliers, the optimal numbers of points (these should be no less than 2 when used) are given by 

\[N_{l}^{*} =
 \Bigl(\frac{t_{l}}{-r (h_{l}-h_{l-1})^{2 \alpha}} \Bigr)^{\frac{1}{r-1}} \frac{T}{ \sum_{j=1}^{L}t_{j}^{\frac{r}{r-1}} (-r (h_{l}-h_{l-1})^{2\alpha})^{\frac{1}{1-r}}} \] where $r= -\nu^{*} / d + 1 /2e $  $(<0)$. 
\end{remark}

\begin{figure}[h]
\begin{center}
\begin{tabular}{cc}
\subfloat[The number of runs depending on different budgets. $\alpha=1$.]{
\includegraphics[width=0.45\linewidth]{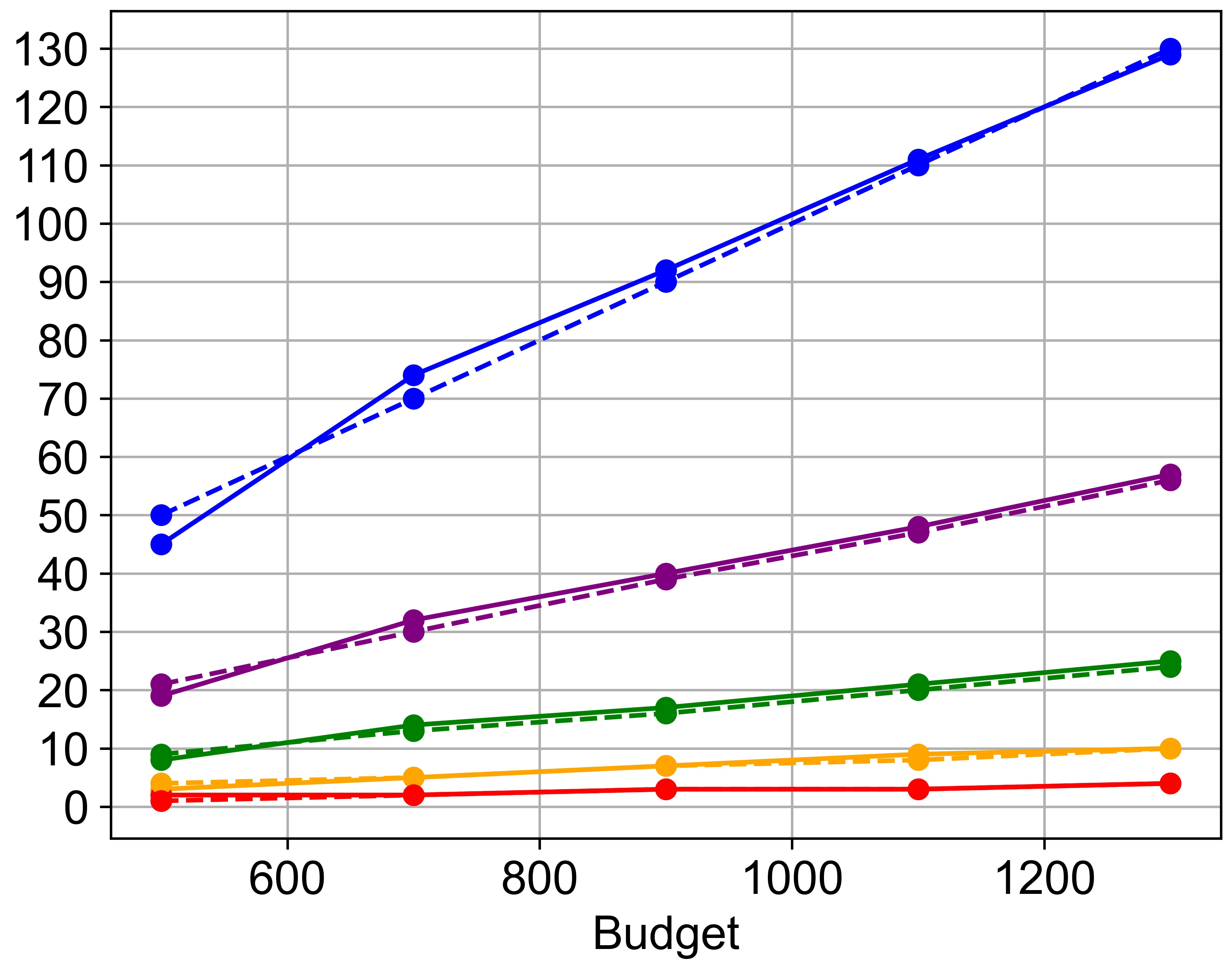}
\label{explicit_BUDGET}
} &
\subfloat[The number of runs depending on different $\alpha$. Budget is $800$.]{
\includegraphics[width=0.45\linewidth]{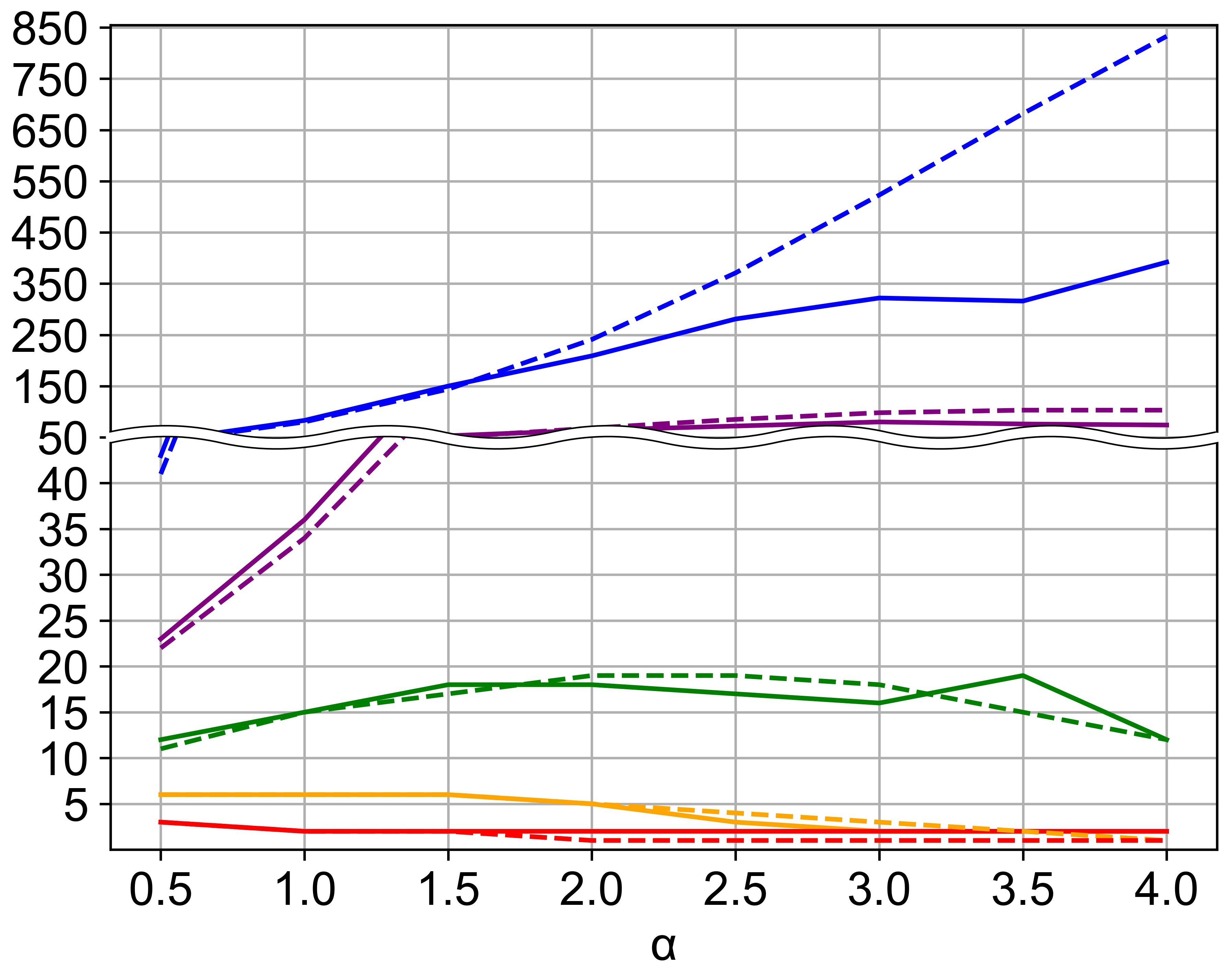}
\label{explicit}
}  \\
\end{tabular}
\caption{The number of runs of at each level ( $L = 5$). The blue, purple, green, orange, red lines denote level 1, 2, 3, 4, and 5 respectively. $t_{l}$ and $h_{l}$ are in Table \ref{explicit_info}. $\nu_{l}=2.5$ for every $l$, $d=1$. The solid lines denote the numerical solutions of optimization problem in Proposition \ref{mlmcresult} and the dashed line denote the closed form in Remark \ref{closed_form}.}
\label{fig}
\end{center}
\end{figure}

\section{Numerical examples}
\label{nume_example_1}
In this section, we present the results of our implementation of MLASCE on several numerical examples and compare with existing multi-fidelity kriging models. Emulators created by MLASCE yield more accurate results than the existing methods. In our algorithm, $a_{l}$ in (\ref{score}) is set to $t_{l}$ since the difference between successive computer simulations gets smaller and smaller. Section \ref{numerical_example_1} highlights the difference in the performance of MLASCE and other methodology in a simple example. Section \ref{numerical_example_2} shows that MLASCE is more robust in the case where different versions of a computer simulation have different degrees of smoothness, which can occur in practice as resolutions increase for instance. Section \ref{numerical_exmaple_tdac} reveals a more realistic application of MLASCE in a long wave simulation with multidimensional inputs. Compared to a non-multilevel methodology, the predictions by MLASCE are still valid in this example.

\subsection{Example 1: a toy model with fixed smoothness}
\label{numerical_example_1}
A simple toy model is considered to illustrate the performance of MLASCE and contrast it with the models in \cite{le2014recursive,le2015cokriging} as the most sophisticated formulation of multi-fidelity GPR and strategy of sequential design. Three levels of computer simulations defined in $[0,\pi]$ are used (Figure \ref{bump}). This type of nested functions is used as the numerical examples in \cite{le2013bayesian}. We define the simulations $f_{1}(x) = \mathrm{sin}x$, $f_{2}(x) = f_{1}(x) + \xi \bigl(x,\frac{\pi}{3},0.4 \bigr)$ and $f_{3}(x) = f_{2}(x) - \frac{1}{2}\xi \bigl(x,\frac{\pi}{4},0.2 \bigr) +\frac{1}{2}\xi \bigl(x,\frac{3\pi}{4},0.2\bigr)$ where $\xi(x,a,\lambda) = \bigl( 1+\frac{\sqrt{5}}{\lambda}|x-a|  +\frac{5|x-a|^{2}}{3\lambda^{2}} \bigr) \mathrm{exp}\bigl(-\frac{\sqrt{5}}{\lambda}|x-a| \bigr)$.

\begin{figure}[h]
\begin{center}
\begin{tabular}{cc}
\subfloat[The black line is $f_{1}(x)$,  green is $f_{2}(x)$ and blue is $f_{3}(x)$.]{
\includegraphics[width=0.45\linewidth]{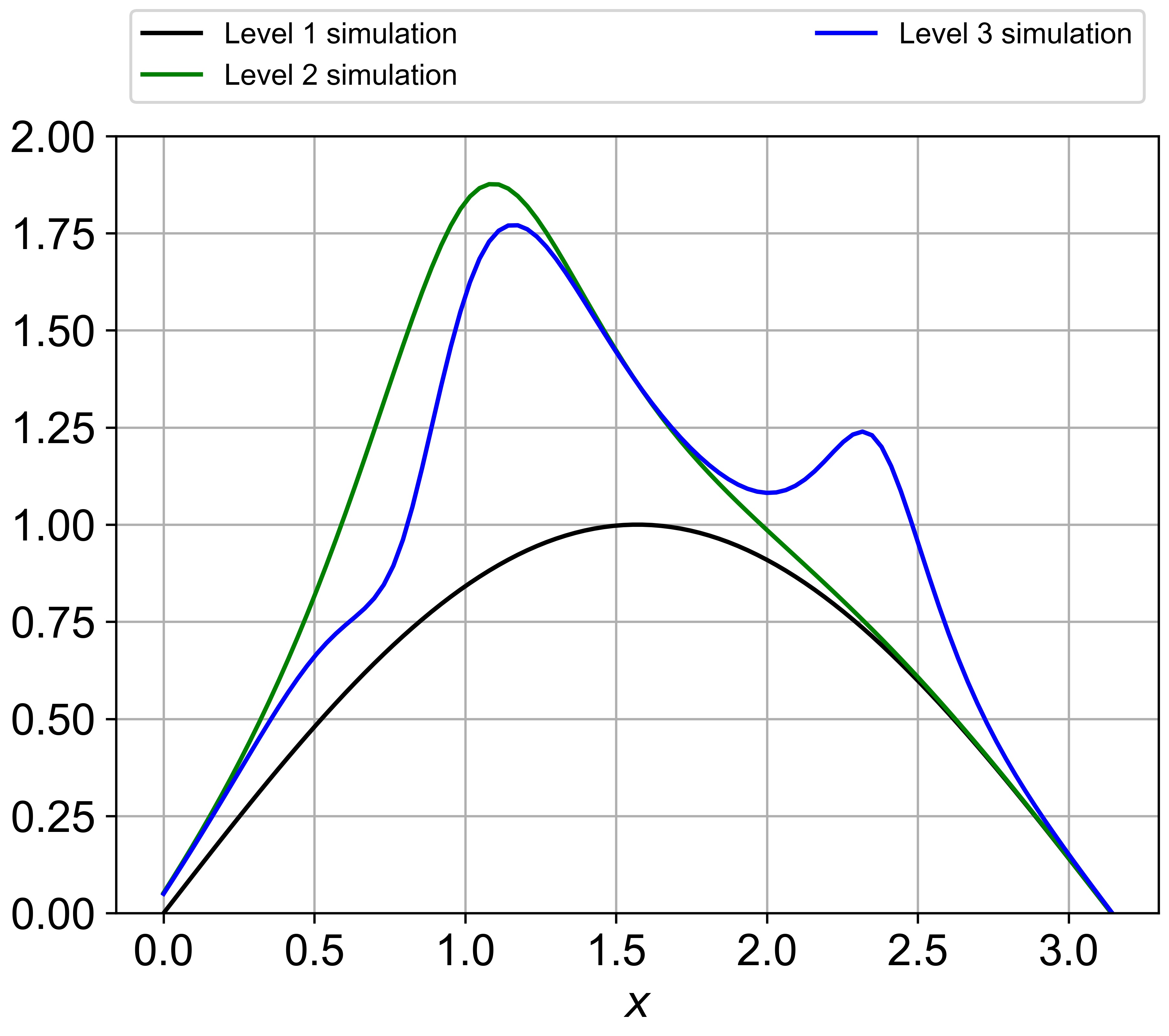}
\label{bump}
} &
\subfloat[]{
\includegraphics[width=0.45\linewidth]{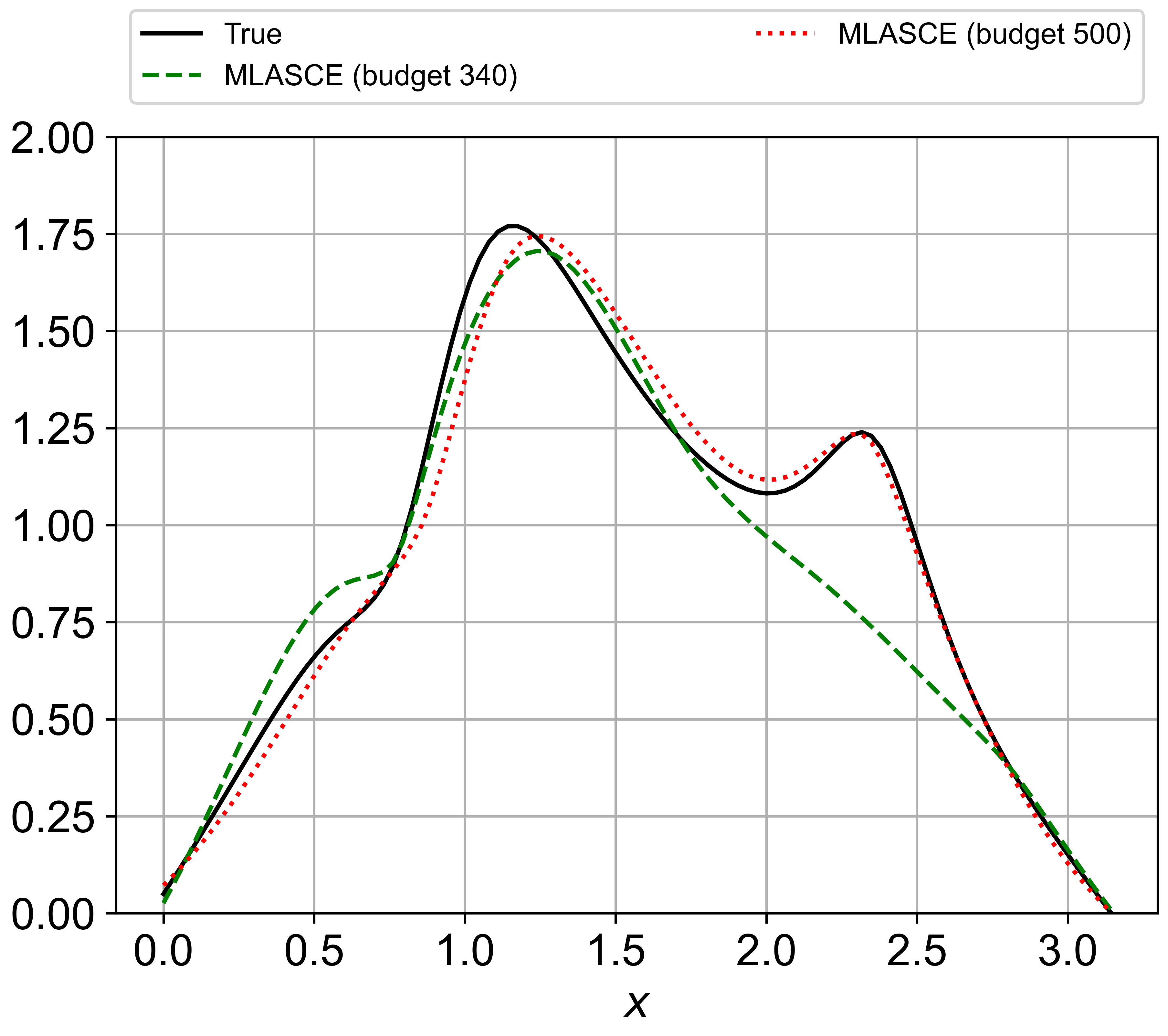}
\label{budget_500}
}  \\
\subfloat[]{
\includegraphics[width=0.45\linewidth]{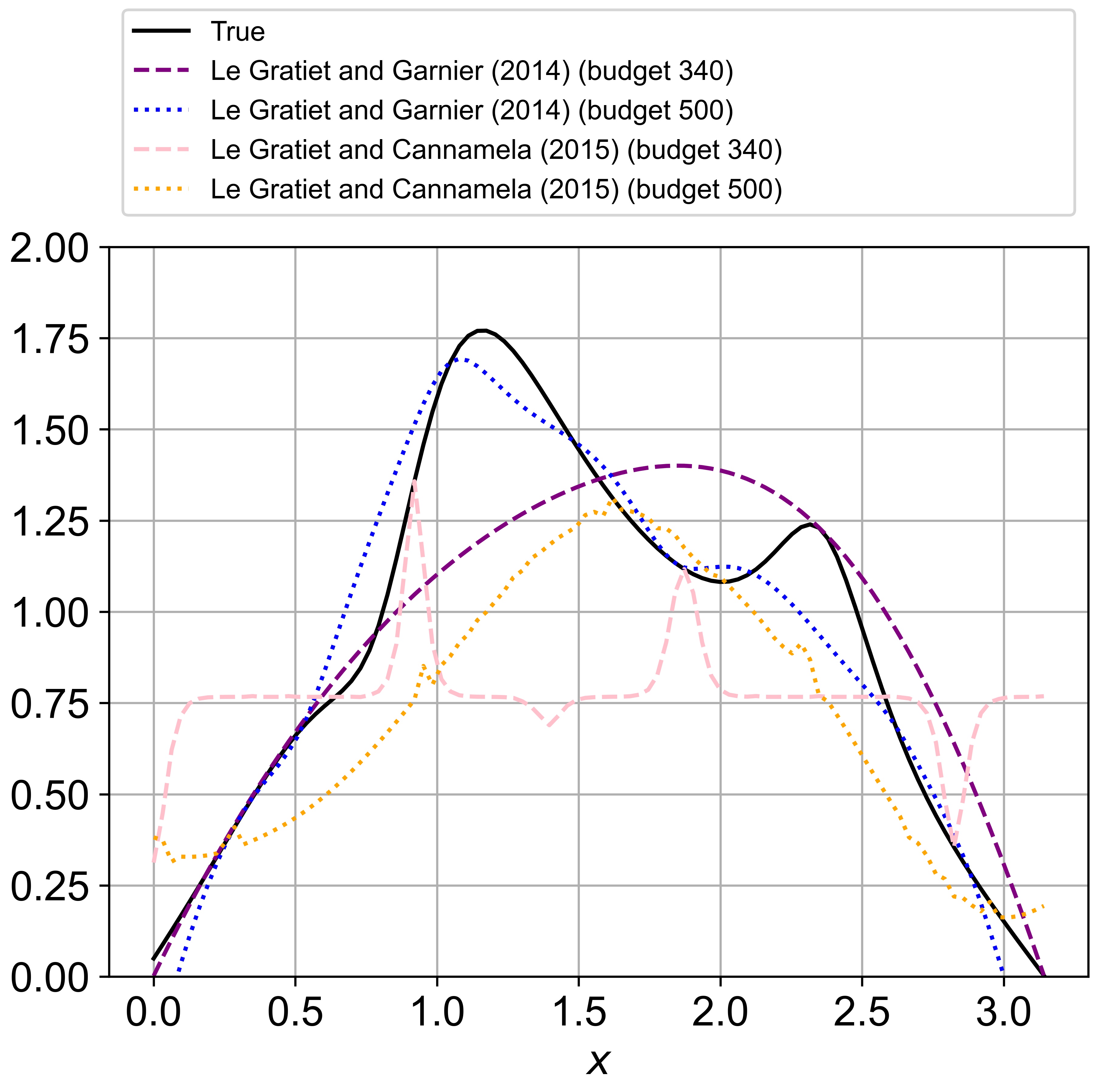}
\label{budget_500_2}
}  &
\subfloat[]{
\includegraphics[width=0.45\linewidth]{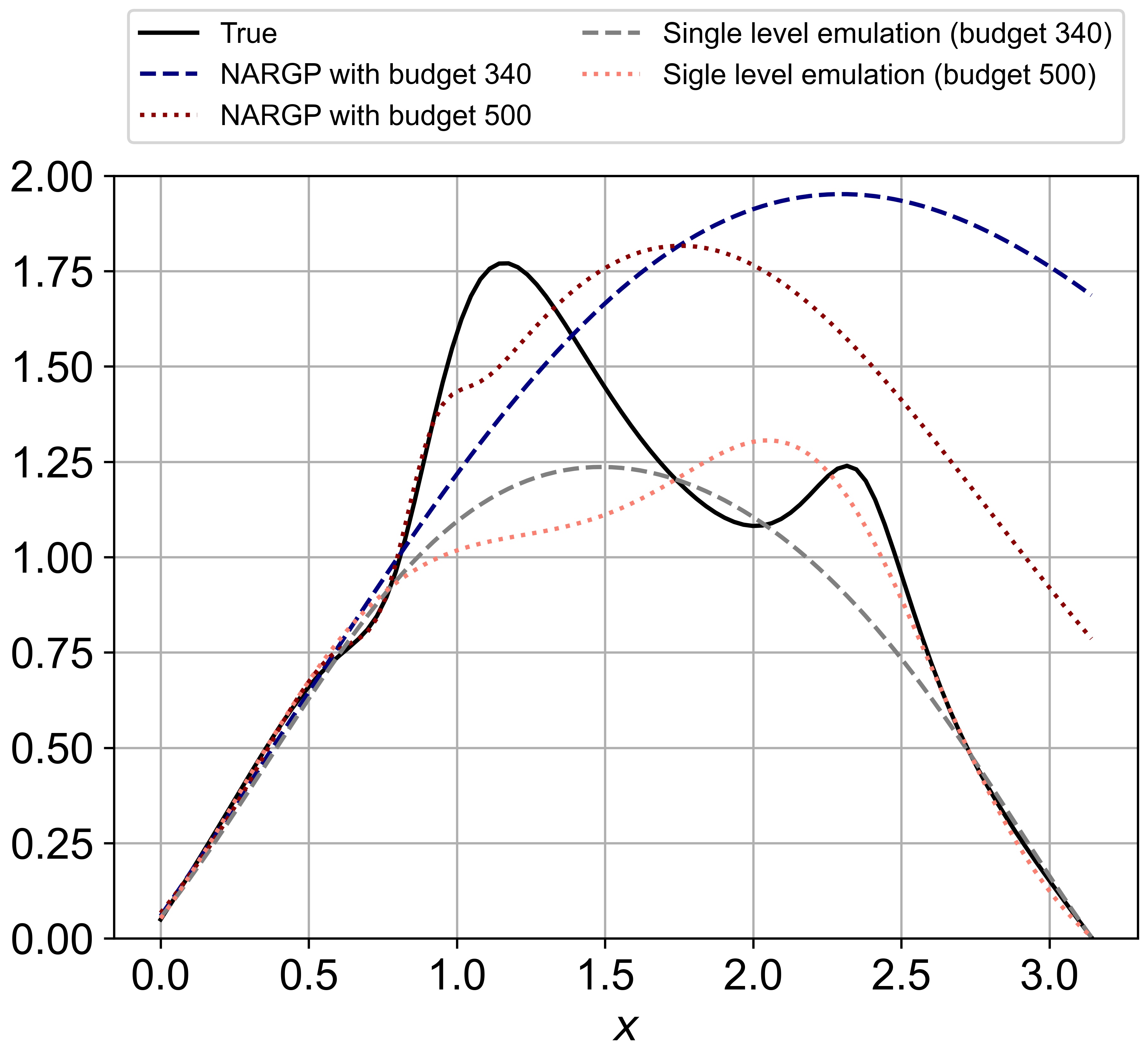}
\label{budget_500_2}
} \\
\end{tabular}
\caption{The graphs of computer simulations and emulators}
\end{center}
\end{figure}

$f_{1}(x)$, $f_{2}(x)$ and $f_{3}(x)$ share the same trend $\mathrm{sin}x$ while $ f_{2}(x)$ and $f_{3}$ have the idiosyncratic movements of $\xi(x,a,\lambda)$ at $x=\pi/3, \pi/4, 3\pi/4$ respectively. A similar phenomenon can be found in a simulation of complex tsunami dynamics as the degree of accuracy of the numerical simulator improves, assuming the conditions of numerical stability, such as CFL conditions, are guaranteed. The computational costs for running these simulations are illustrated in Table \ref{info}. These costs are artificially assumed and increase at an exponential rate of levels. We consider five cases of different total budgets $340$, $380$, $420$, $460$ and $500$. As discussed in the next part where the methods of \cite{le2014recursive,le2015cokriging} are compared with ours, their formulations require a sufficiently large amount of training data for every level hence the generous budget while MLASCE is robust in small size of training data thanks to relatively few numbers of hyperparameters. 

\begin{table}[htb]
 \caption{Computational costs of computer simulations}
 \begin{center}
  \begin{tabular}{|c|c|c|c|c|c|} \hline 
 $f_{1}(x)$ & $f_{2}(x)$ & $f_{3}(x)$ & $h_{1}(x)$ & $h_{2}(x)$ & $h_{3}(x)$  \\ \hline 
  4 & 16 & 64 & 4 & 20 & 80  \\ \hline 
  \end{tabular}
  \label{info}
 \end{center}
\end{table}
We use the incremental functions $h_{l}, l=1,2,3$ defined as $h_{1}(x)=f_{1}(x)$ and $h_{l}(x) = f_{l}(x) - f_{l-1}(x) \hspace{4pt} (l=2, 3)$. 
We now compare with other existing methods.
We prepare three Gaussian processes $GP(0,K_{5/2, \lambda _{l},\sigma^{2}_{l}}(x, x^{\prime}) )$ with a Mat\'ern kernel for $l=1, 2, 3$. MICE and fitting of GPs are implemented by python modules ``Multi-Output Gaussian Process Emulator'' and  ``scikit-learn'' and a nugget parameter is chosen to be $10^{-8}$ for every level.
The mean predictions $m^{f_{3}}_{3}\left(x\right) = \sum_{l=1}^{3} m_{l,N_{l}}^{h_{l}}\left(x\right)$ of MLASCE method with different budgets are illustrated in Figure \ref{budget_500} in comparison with \cite{le2014recursive,le2015cokriging} and the numbers of runs are also presented in Figure \ref{data_imp}. As expected, the performance of the emulators improves as the budget increases. Since MLASCE tends to run the function which yields a larger incremental value $\delta_{l}(x)$ more often than the other, the number of training data points at level $3$ gets larger if $\delta_{3}$ is affordable. Moreover, the numbers of training data points at level $1$ and $2$ stabilize after sufficient numbers of data points are obtained while MLASCE focus on low-fidelity simulator if the budget is small. The $L^{2}$ error is also presented in Figure \ref{L2} and $L^{2}$ error is defined as the integral of the squared difference between the mean prediction and $f_{3}(x)$ in $[0, \pi]$. 


\begin{figure}[h]
\begin{center}
\begin{tabular}{cc}
\subfloat[$L^{2}$ error of the different emulators (MLASCE, \cite{le2014recursive}, \cite{le2015cokriging} and \cite{perdikaris2017nonlinear}) with various budgets from 340 to 500.]{
\includegraphics[width=0.45\linewidth]{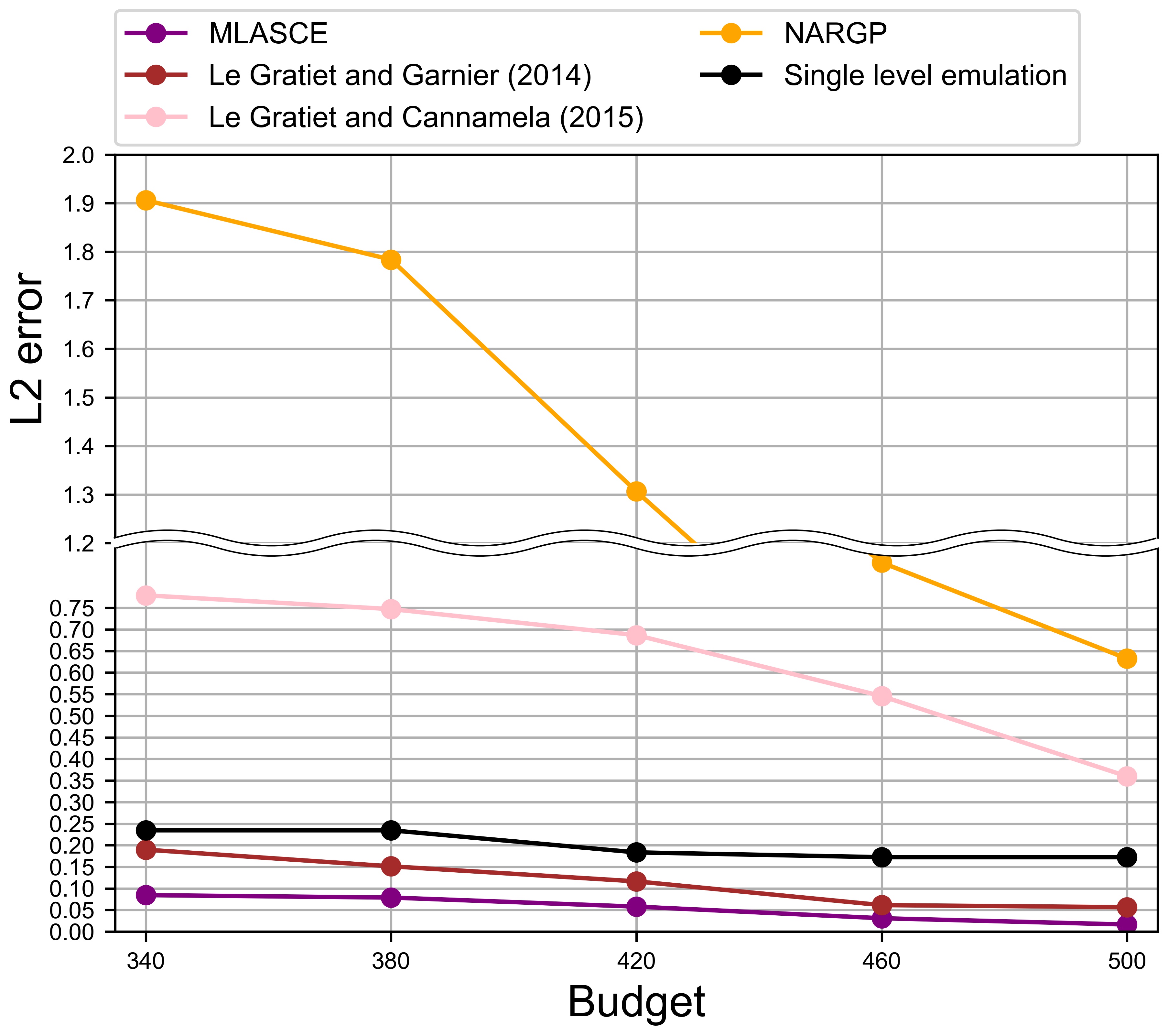}
\label{L2}
} &
\subfloat[Number of data points per level]{
\includegraphics[width=0.45\linewidth]{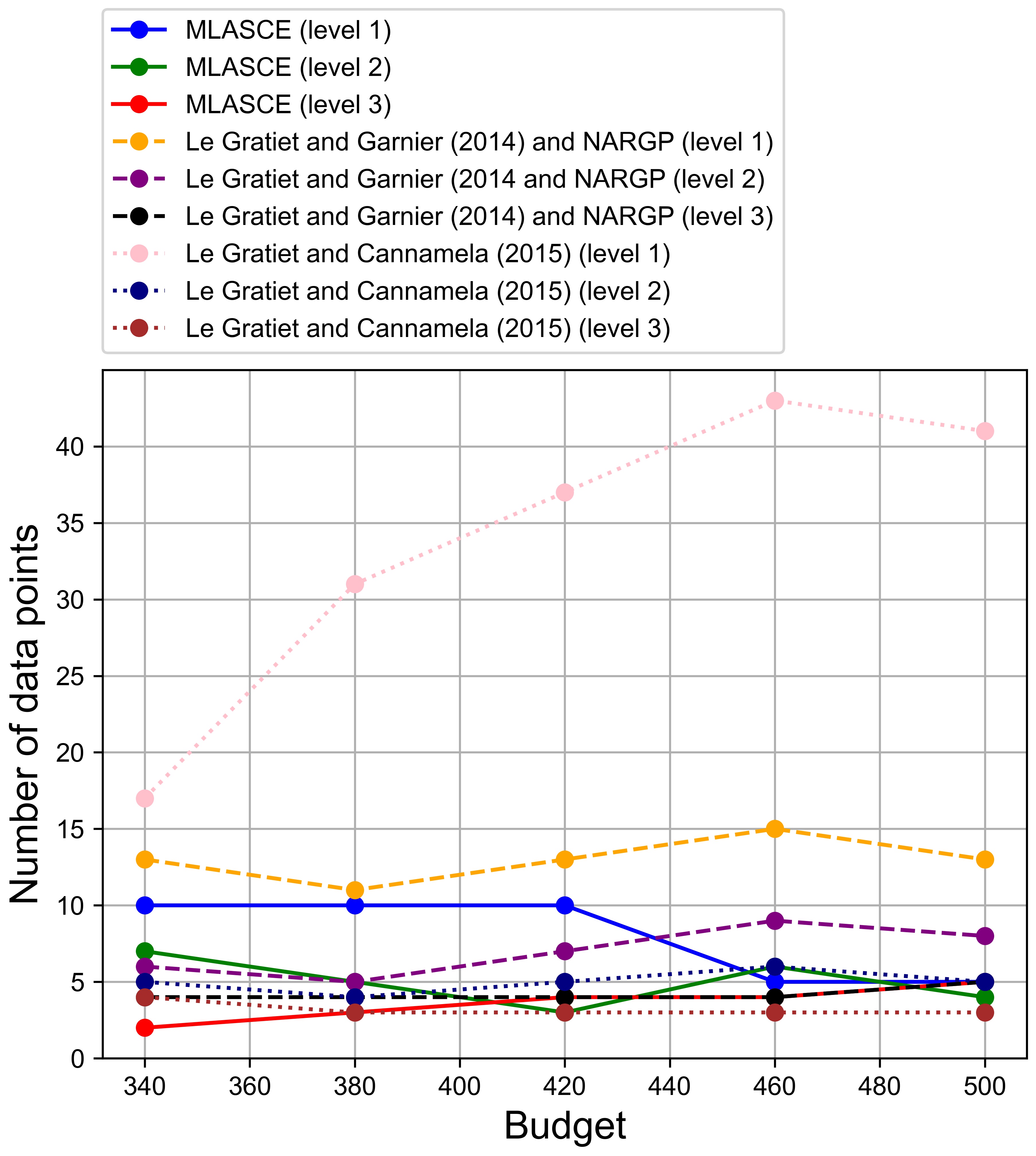}
\label{data_imp}
} \\
\end{tabular}
\caption{The $L^{2}$ error (left) and numbers of runs of different emulators (right).}
\end{center}
\end{figure}

   We compare MLASCE with the model proposed by \cite{le2014recursive,le2015cokriging} by looking at the performances of these GP emulators under the same amount of computational budget.  Following the formulation in (\ref{leg}), we prepare three Gaussian processes $GP(\beta_{0,l},K_{5/2, \lambda _{l},\sigma^{2}_{l}}(x, x^{\prime}) )$ with a Mat\'ern kernel for $\delta_{1}(x), \delta_{2}(x)$ and $\delta_{3}(x)$ in (\ref{leg}) where $\beta_{0,l}$ denotes a constant mean. We specify the adjustment effect $\rho_{l}(x)$ as $\rho_{l}(x) = \beta_{\rho_{l},0} + \beta_{\rho_{l},1}x +  \beta_{\rho_{l},2}x^{2}$ for the method by \cite{le2014recursive} where $\beta_{\rho_{l},0}$, $\beta_{\rho_{l},1}$ and $\beta_{\rho_{l},2}$ are constant coefficients. This specification is to include a nonlinear adjustment effect since the relationships among different $f_{i}(x)$ seem nonlinear. On the other hand, the method of \cite{le2015cokriging} assumes constant for the adjustment effect $\rho_{l}(x)=\rho_{l}$. $f_{l}(x)$ is used as a simulation instead of an increment $h_{l}(x)$. We assume noninformative priors for all of the hyperparameters and maximum likelihood estimation is employed. The single level emulation which only uses the outputs from $f_{3}(x)$ and relies on MICE in terms of its design strategy is also compared. The numbers of its design points are 5 (budget 340, 380), 6 (budget 420) and 7 (budget 460, 500) and the Mat\'ern 2.5 kernel is used. NARGP \cite{perdikaris2017nonlinear} is also in comparison although NARGP does not consider an efficient  strategy for the experimental design in their framework. The same numbers of training points as in \cite{le2014recursive} are prepared. 

As proposed in \cite{le2014recursive}, a strategy of nested space filling design is presented in which the designs of input at higher levels are incorporated in the ones at lower levels while the method of \cite{le2015cokriging} employs their own strategy of sequential design. Their strategy of design and construction of the emulators are implemented by R package ''MuFiCokriging'' (\url{https://CRAN.R-project.org/package=MuFiCokriging}). Although a rule for deciding the number of design points is not presented in \cite{le2014recursive}, the number of data points at each level is determined to keep a good proportion of the data size across every level (Figure \ref{data_imp}). More crucially, due to a large amount of hyperparameters in their formulation, sufficient numbers of training data points should be provided even for the most expensive simulators. Otherwise, determining the hyperparameters fails and the emulator cannot be constructed. This drawback also holds for \cite{le2015cokriging} and it requires decent numbers of initial design points at every level before implementing their strategy of sequential design even though it originally intends to allocate limited computational resources among cheap and expensive simulators. On the other hand, in MLASCE, only the two hyperparameters $\lambda_{l}, \sigma_{l}^{2}$ are estimated at each level and the incremental function $\delta_{l}$ tends to take smaller values hence fitted hyperparameters are relatively more stable. Although the method of \cite{le2014recursive} tries to overcome the nonlinear correlations among simulators of different fidelity by designing $\rho_{l}(x)$, this attempt requires a large amount of training data, which is practically impossible if the simulator is expensive. On the other hand, constant $\rho_{l}(x) = \rho_{l}$ by \cite{le2015cokriging} ends up yielding much worse emulators. As discussed previously, the budget is decided so that their method can construct an emulator successfully and more data is available for them since the cheaper simulator $f_{l}(x)$ is used instead of $h_{l}(x)$.
The results are shown in Figure \ref{budget_500}. Although the mean predictions improve as the budget increases, MLASCE performs better than the method of \cite{le2014recursive,le2015cokriging} especially in the cases of the smaller budgets. This is more evident in terms of $L^{2}$ error (Figure \ref{L2}).
The key difference is whether the error between the true response and emulator is considered or not. As already mentioned, the existing co-kriging methods do not take into account the error of the emulator and only consider the posterior variance as an indicator of accuracy. This leads to relatively poor performance as an emulator even though enough amount of training data is provided. The sequential design strategy by \cite{le2015cokriging} may not be helpful since it places too much emphasis on the cheaper simulations, which is their original intention, hence skews the emulation. This type of problem of the existing co-kriging methods becomes more evident or even worse in the next part, where the computer simulations with more levels of fidelity are dealt with.
The single level emulation cannot capture the complex behavior of the true underlining function, as we expected. NARGP predicts much worse in terms of the $L^{2}$ error and its unstable behavior, since NARGP seems intended to be applied in the case where more abundant amount of training data is available, as in the examples in the original paper \cite{perdikaris2017nonlinear} where at least 20 samples were available at the highest resolution. However, NARGP could be improved using new Deep GP analytical formula from \cite{ming2021linked}.

\subsection{Example 2: five levels and variations of smoothness}
\label{numerical_example_2}
We focus on the functions with different smoothness and compare the performance of emulators with kernels of different smoothness. As discussed, it is unlikely to know in advance the exact degree of smoothness of the simulation in practice and an estimation of the smoothness parameter $\nu$ in a Mat\'ern kernel is not reliable if the simulation is expensive to run. Thus, as mentioned before, a feasible strategy is to choose a relatively small $\nu$ so that the corresponding RKHS can incorporate a wider class of functions. The simulated function can be less smooth as the degree of fidelity in the simulator refines hence flexible choice of $\nu$ may improve the quality of the emulator. In the following examples, we see that the kernels with fixed smoothness across every level lead to a relatively inefficient surrogate. Besides, we also assume more levels of computer simulations so that our method is robust in such a case.

The five levels of functions are defined in $[0, \pi]$ and five different bump effects are incorporated:
\begin{eqnarray*}
f_{1}(x) &=& \mathrm{sin}x \\
f_{2}(x) &=& f_{1}(x) + \xi_{2} \bigl(x,\frac{\pi}{6} \bigr) + \xi_{2} \bigl(x,\frac{5\pi}{6} \bigr) \\
f_{3}(x) &=& f_{2}(x)  -\xi_{3} \bigl(x,\frac{\pi}{4} \bigr) -\xi_{3} \bigl(x,\frac{3\pi}{4} \bigr) \\
f_{4}(x) &=& f_{3}(x)  +\xi_{4} \bigl(x,\frac{\pi}{3} \bigr) +\xi_{4} \bigl(x,\frac{2\pi}{3} \bigr) \\
f_{5}(x) &=& f_{4}(x) + \xi_{5} \bigl(x,\frac{\pi}{8} \bigr)  -\xi_{5} \bigl(x,\frac{4\pi}{8} \bigr) + \xi_{5} \bigl(x,\frac{7\pi}{8} \bigr)
\end{eqnarray*}
where
\begin{eqnarray*}
\xi_{2} \bigl(x, a \bigr) &=& 
 \mathrm{exp}\Bigl(-\frac{1}{(\pi / 8)^{2} - (x-a)^{2} }\Bigr) \times \boldsymbol{1}_{(|x-a| < \frac{\pi}{8})} \\
 \xi_{3} \bigl(x,a \bigr)  &=& 0.3 \hspace{2pt}\mathrm{e}^{-8(x-a)^{2}}\left(-|x-a|^{5}+1 \right) \\
 \xi_{4} \bigl(x,a \bigr) &=& 0.3 \hspace{2pt}\mathrm{e}^{-8(x-a)^{2}}\left(-|x-a|^{3}+1 \right) \\
 \xi_{5} \bigl(x, a \bigr) &=& 0.15 \hspace{2pt} \mathrm{e}^{-12(x-a)^{2}}(x-a+1)^{2} \boldsymbol{1}_{(a-1< x \leq a)}  \\
&+&0.3 \hspace{2pt}\mathrm{e}^{-12(x-a)^{2}} \bigl(1 - {1 \over 2} (x-a-1)^{2} \bigr) \boldsymbol{1}_{(a< x \leq a+1)} \\
&+&0.3 \hspace{2pt}\mathrm{e}^{-12(x-a)^{2}}\boldsymbol{1}_{(a+1 < x )} \\
\end{eqnarray*}
and $\boldsymbol{1}_{(\cdot)}$ denotes an indicator function.

As a function of $x$ and for a fixed $a$, $\xi_{2}$ is a function of class $C^{\infty}$ (but not analytical) and $\xi_{3}$, $\xi_{4}$, $\xi_{5}$ is of class $C^{4}$, $C^{2}$ and $C^{1}$. Thus, every $f_{i}$ has the same degree of smoothness as that of $\xi_{i}$ while $f_{1}$ is analytical. An equivalent Sobolev space $H^{s}(\mathbb{R}^{d})$ is incorporated in $C^{[s- d/2]}(\mathbb{R}^{d})$ for nonnegative and non-integer $s- d/2$ where $[s- n/2]$ denotes the maximum integer which does not exceed $s- n/2$. Then, we can see that the equivalent space $H^{3}(\mathbb{R}) \subset C^{2}(\mathbb{R})$ for $\nu = 5/2$ in the Mat\'ern kernel and $H^{2}(\mathbb{R}) \subset C^{1}(\mathbb{R})$ for $\nu = 3/2$ albeit indirect relationships. These functions are illustrated in Figure \ref{cp_1}. The bump term $\xi_{2}$ is very smooth and has a small effect on the simulation while the other bump terms have more effects (0.3 for $\xi_{3}$ and $\xi_{4}$ and 0.15 for $\xi_{5}$ at location $x=a$) and become less smooth as levels go up. The computational costs are listed in Table \ref{info_bump2} and definition of $h_{l}$ is the same as in the previous part.

\begin{table}[htb]
 \caption{Computational costs of computer simulations. The costs increase at an exponential rate of the levels.}
 \begin{center}
  \begin{tabular}{|c|c|c|c|c|c|} \hline 
  & $l=1$ & $l=2$ & $l=3$& $l=4$ & $l=5$ \\ \hline 
  $f_{l}(x)$ & 0.5 & 2 & 8 & 32 & 128  \\ \hline 
  $h_{l}(x)$ & 0.5 & 2.5 & 10 & 40 & 160  \\ \hline 
  \end{tabular}
  \label{info_bump2}
 \end{center}
\end{table}

\begin{figure}[h]
\begin{center}
\subfloat[Computer simulations of different levels: $f_{1}(x)$ (blue), $f_{2}(x)$ (purple), $ f_{3}(x)$ (green), $f_{4}(x)$ (orange) and $f_{5}(x)$ (red). The objective function for emulation is $f_{5}(x)$.]{
\includegraphics[width=0.49\linewidth]{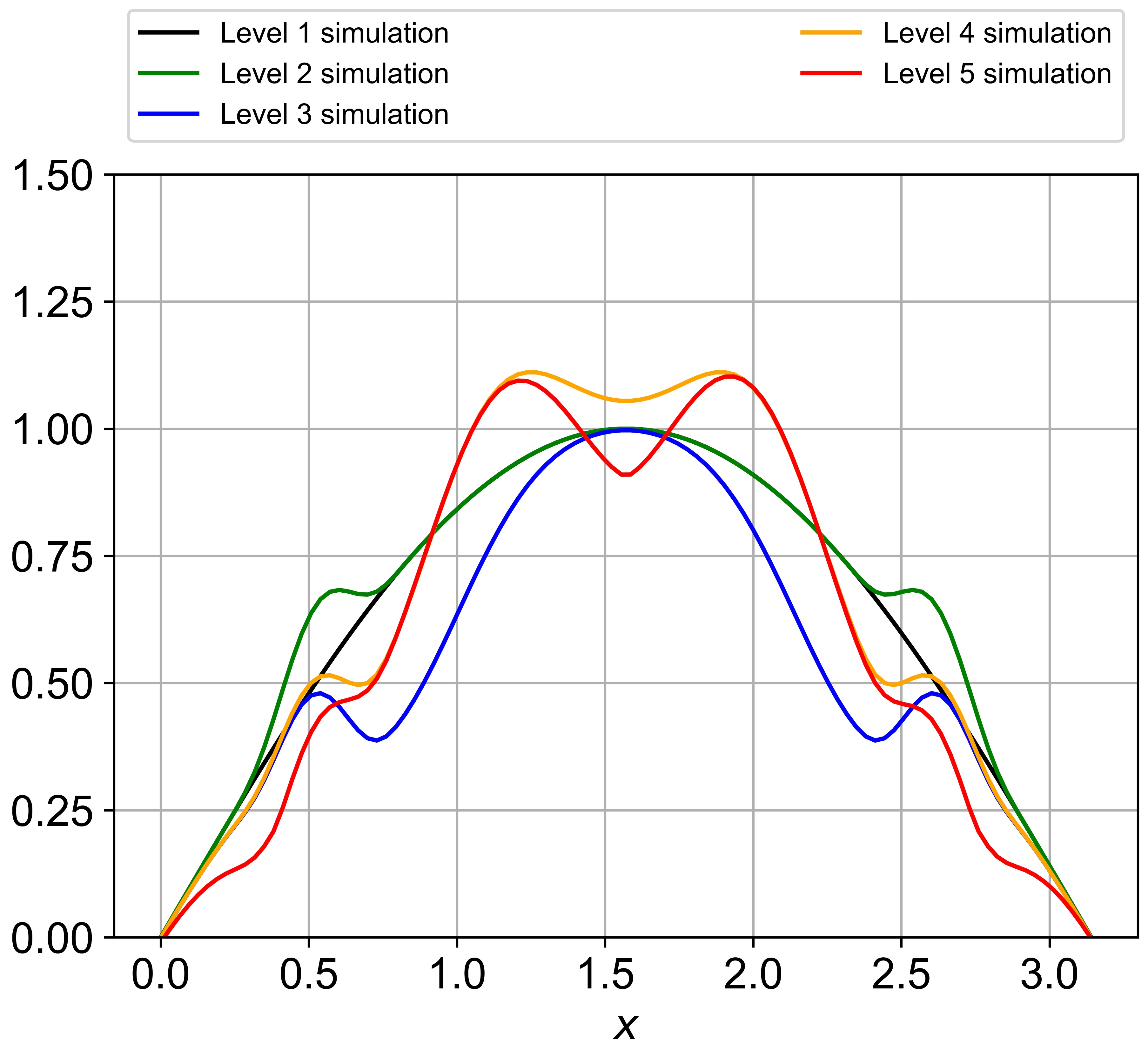}
\label{cp_1}
} 
\subfloat[MLASCE with varied and fixed smoothness.]{
\includegraphics[width=0.49\linewidth]{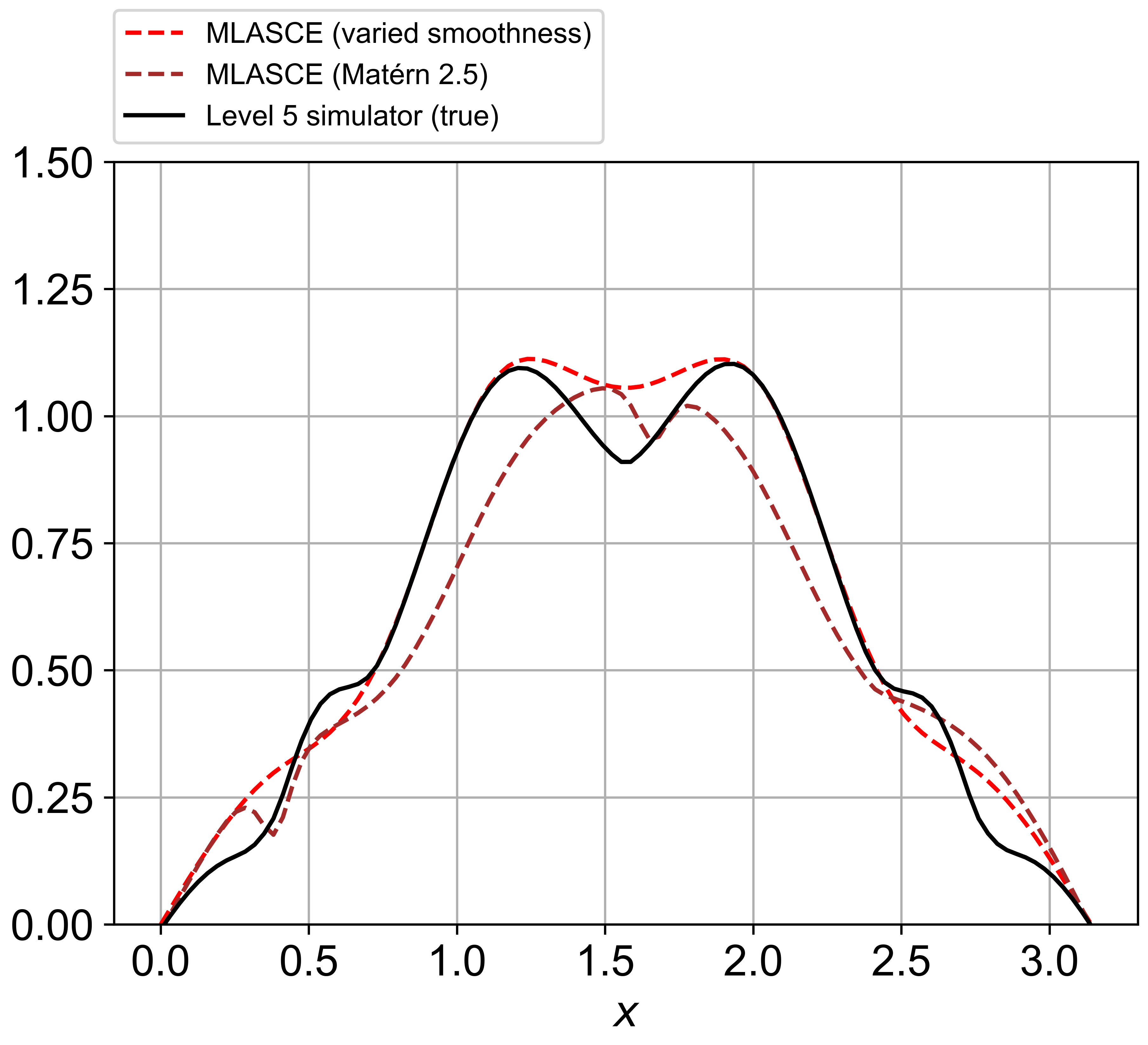}
\label{d}
} \\
\subfloat[\cite{le2014recursive} with Gaussian and Mat\'ern kernels, \cite{le2015cokriging} with Mat\'ern kernels and \cite{perdikaris2017nonlinear}.]{
\includegraphics[width=0.49\linewidth]{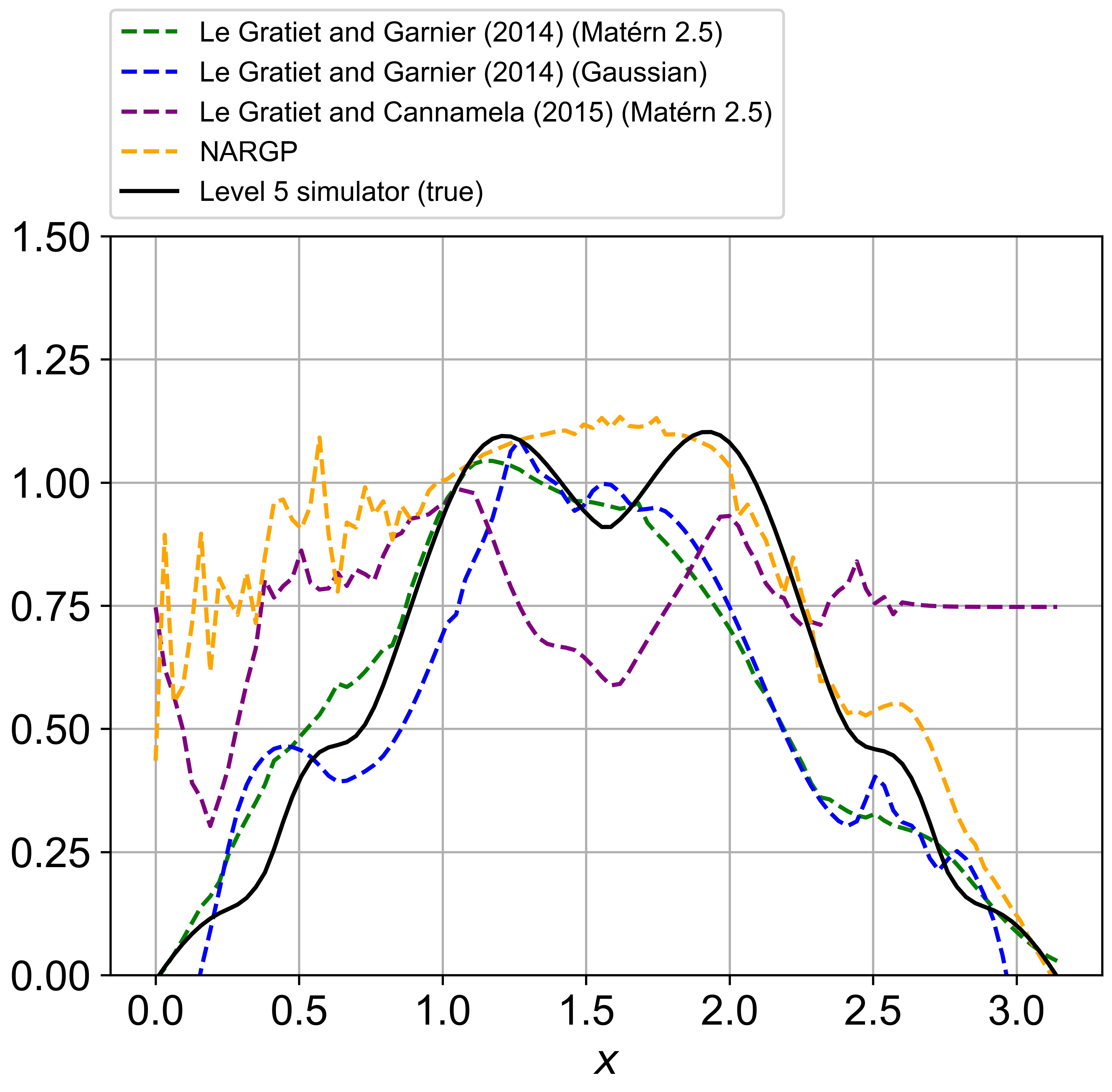}
\label{bump2_L2}
} 
\subfloat[$L^{2}$ error of  emulators as a function of budgets ranging from 1050 to 1400.]{
\includegraphics[width=0.49\linewidth]{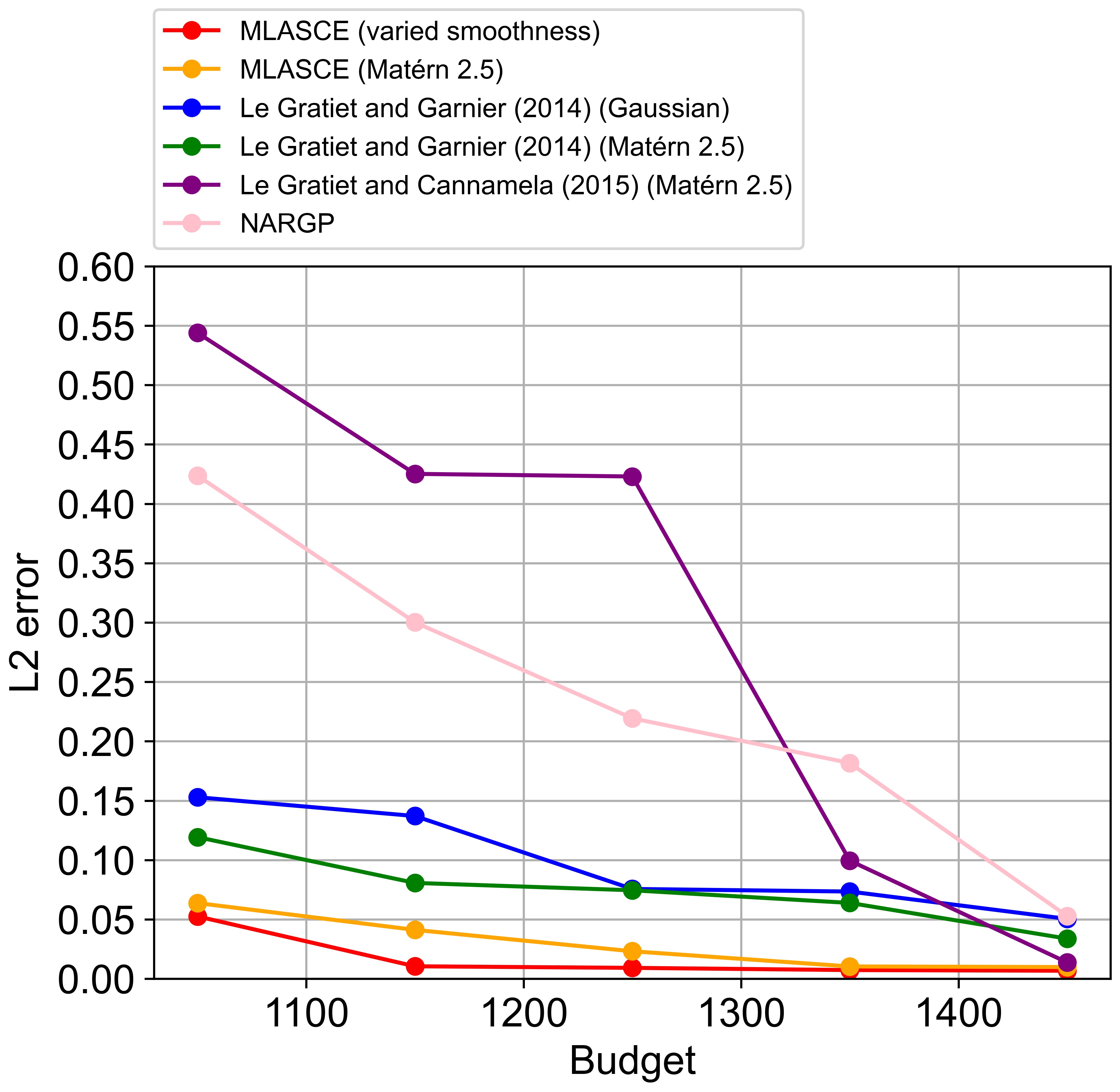}
\label{bump2_L2}
} 
\caption{Example 2: Computer simulations (a), emulations with different strategies (b), (c) and error in the approximation (d).}
\end{center}
\end{figure}

To implement MLASCE in the same way as before, we prepare five Gaussian processes $GP(0,K_{\nu_{l}, \lambda _{l},\sigma^{2}_{l}}(x, x^{\prime}) )$ with a Mat\'ern kernel for $l=1, \ldots, 5$. We set $\nu_{1}  = 3.5$, $\nu_{2}, \nu_{3} = 2.5$ and $\nu_{4},\nu_{5} = 1.5$ based on the smoothness of each computer simulation. We obtain the mean prediction by implementing MICE and fitting these GPs to the data. On the other hand, the other five Gaussian processes $GP(0,K_{2.5, \lambda _{l},\sigma^{2}_{l}}(x, x^{\prime}) )$ with a Mat\'ern kernel for $l=1, \ldots, 5$ are used to prepare another mean prediction. The smoothness parameter $\nu_{l}$ is chosen to be fixed in this case. For comparison, three other existing methods (\cite{le2014recursive} with Gaussian kernels and Mat\'ern kernels with $\nu= 2.5$ and  \cite{le2015cokriging} with Mat\'ern kernels $\nu= 2.5$) are implemented. The Gaussian kernel is too smooth compared with the computer simulations and Mat\'ern kernels with $\nu= 2.5$ may still be overly smooth for $f_{5}$. We assume five levels for this model and the specification of adjustment effect $\rho_{l}(x)$ is the same as in the previous part. The Maximum Likelihood method is used in fitting and nugget parameters are $10^{-8}$. We consider the different case of computational budgets $1050$, $1150$, $1250$, $1350$ and $1450$ thereby sufficient amount of initial design is prepared for \cite{le2014recursive,le2015cokriging}. NARGP is again compared in the same way and the same numbers of training points as in  \cite{le2014recursive} are prepared.

The numbers of samples are plotted in Figure \ref{data_imp_bump_2}. MLASCE with Mat\'ern kernel $\nu_{l} = 2.5$ still mimics the overall movements but gives the slightly less efficient result. The autoregressive co-kriging \citep{le2014recursive} has difficulty in capturing the less smooth movements of $\xi_{5}$, $\xi_{4}$ and $\xi_{3}$, although relatively a large amount of samples is prepared for every level (Figure \ref{data_imp_bump_2}). The emulator of \cite{le2014recursive} with Gaussian kernels produces some kinks around the points where $\xi_{i}(x)$ lies and their method with Mat\'ern kernel $\nu_{l} = 2.5$ also has difficulty in approximating the true function. Due to the relatively inefficient specification of the multi-fidelity modeling, the strategy by \cite{le2015cokriging} does not provide efficient emulators if the budget is limited. Moreover, even if the budget is rich enough, numerical difficulty may emerge due to too many samples from cheap simulations. This is because their strategy of sequential design emphasizes low-fidelity simulations too much as the improvements from simulators are directly weighted with corresponding computational costs. In this setting, the computational costs increase at an exponential rate (the most expensive simulation is 256 times more expensive than the cheapest) hence applying directly their strategy could be problematic. NARGP produces unreliable results again, potentially due to the lack of an efficient framework of the experimental design.  

These results show that an emulator by MLASCE with a flexible choice of small smoothness parameters performs better than the one with fixed and overly smooth kernels. More importantly, autoregressive co-kriging with Gaussian kernels may result in poor performance if the target is not very smooth. Besides, even if the kernel is not so smooth, it may still have difficulty in mimicking low-smooth behaviors. 
The comparison of these emulators in terms of $L^{2}$ error is illustrated in Figure \ref{bump2_L2}.  To show the robustness of MLASCE, we prepare the box plots of the $L^{2}$ errors of MLASCE in this numerical examples (Figure \ref{boxplot_flex} and \ref{boxplot_fix}). Although the variations of the $L^{2}$ error are relatively large when the budget is more constrained, the predictive accuracy of MLASCE seems stable.

\begin{figure}[h]
\begin{center}
\subfloat[Number of data points per level (MLASCE, varied smoothness)]{
\includegraphics[width=0.45\linewidth]{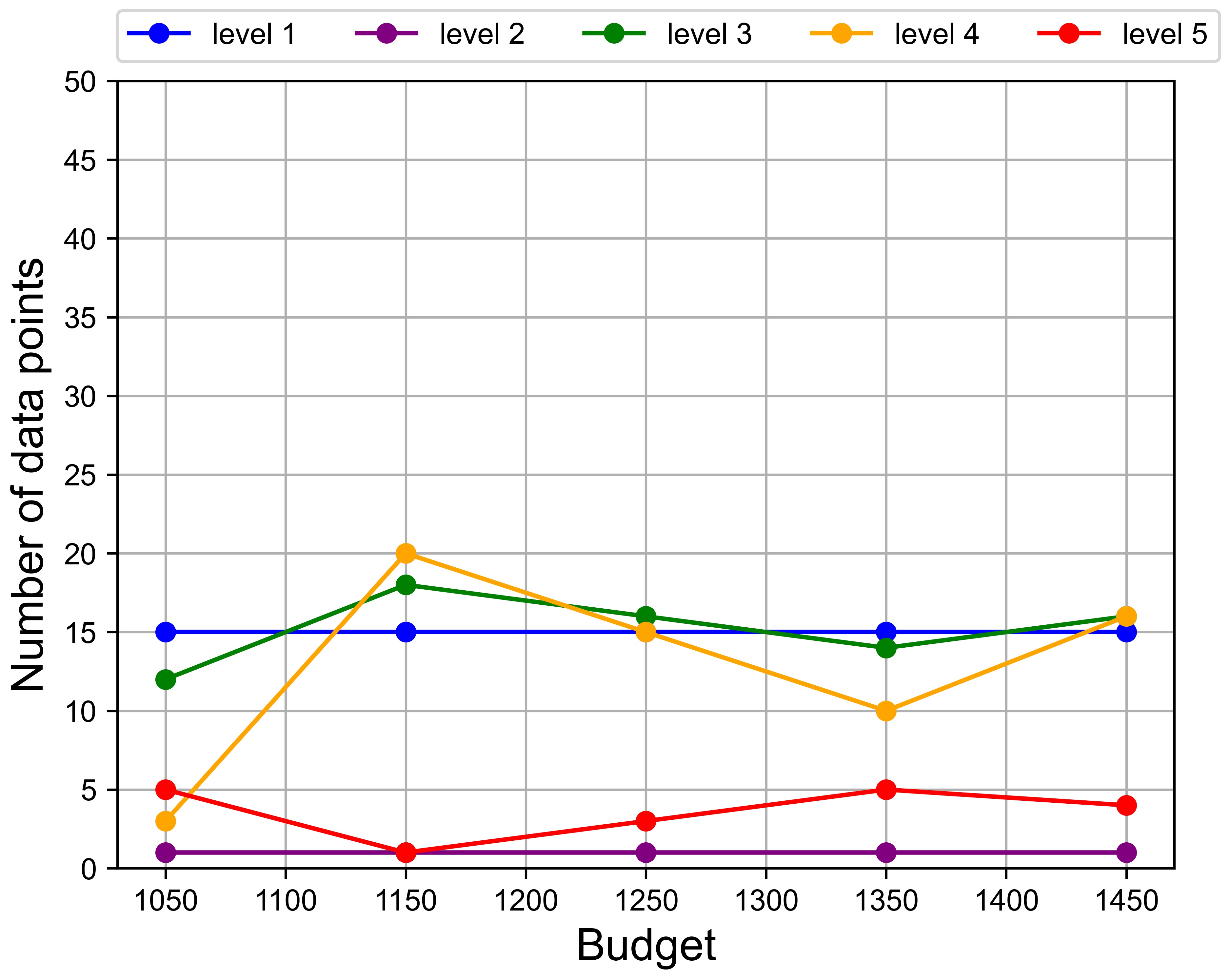}
\label{data_imp_bump_2}
} 
\subfloat[Number of data points per level (MLASCE, fixed smoothness)]{
\includegraphics[width=0.45\linewidth]{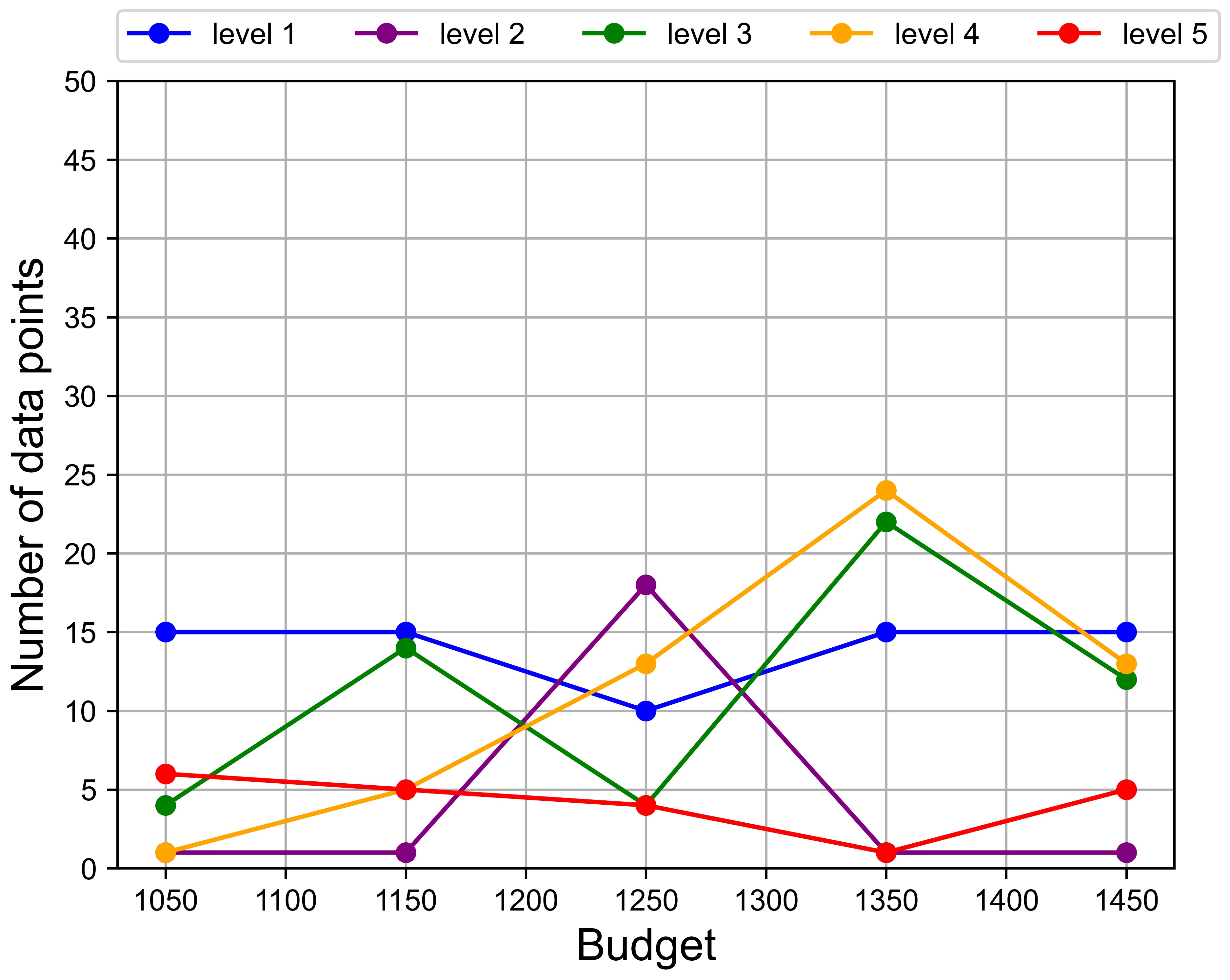}
\label{bump2_L2_1}
} 
\\
\subfloat[Number of data points per level (\cite{le2014recursive} and \cite{perdikaris2017nonlinear})]{
\includegraphics[width=0.45\linewidth]{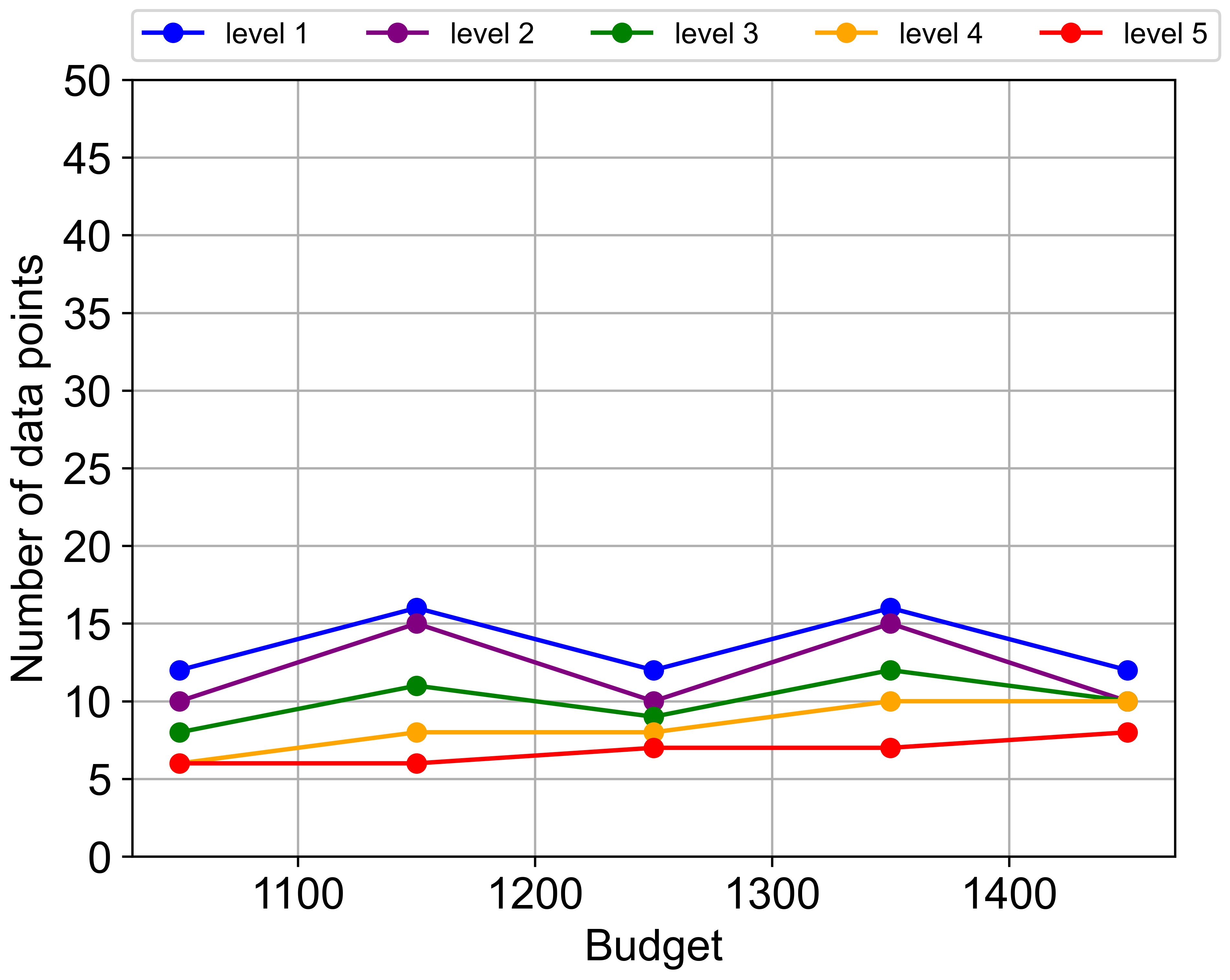}
\label{bump2_L2_2}
} 
\subfloat[Number of data points per level, level 2 and level 3 coincides. (\cite{le2015cokriging})]{
\includegraphics[width=0.45\linewidth]{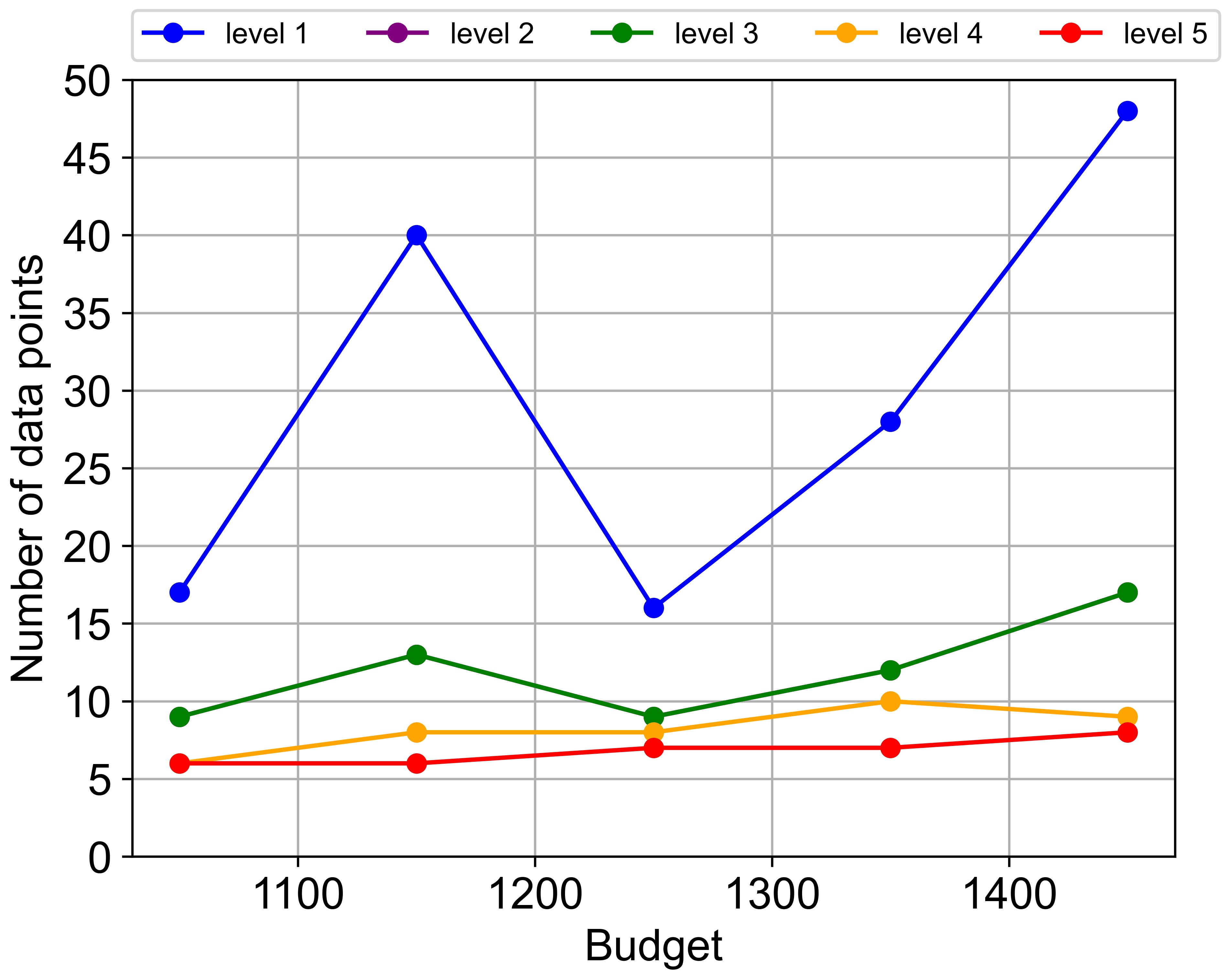}
\label{bump2_L2_3}
} 
\caption{The numbers of runs of different emulators.}
\end{center}
\end{figure}

\begin{figure}[h]
\begin{center}
\subfloat[Boxplot]{
\includegraphics[width=0.45\linewidth]{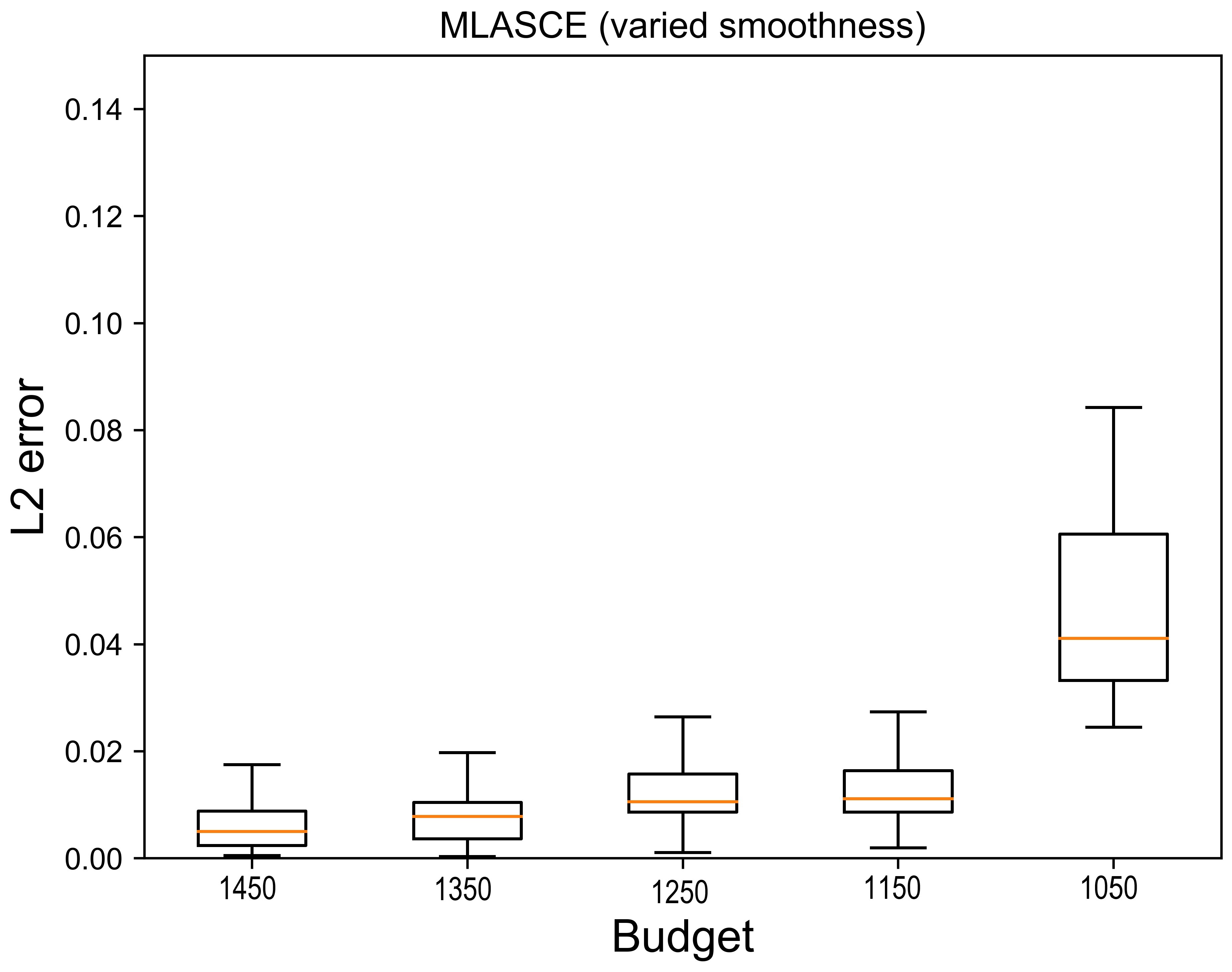}
\label{boxplot_flex}
} 
\subfloat[]{
\includegraphics[width=0.45\linewidth]{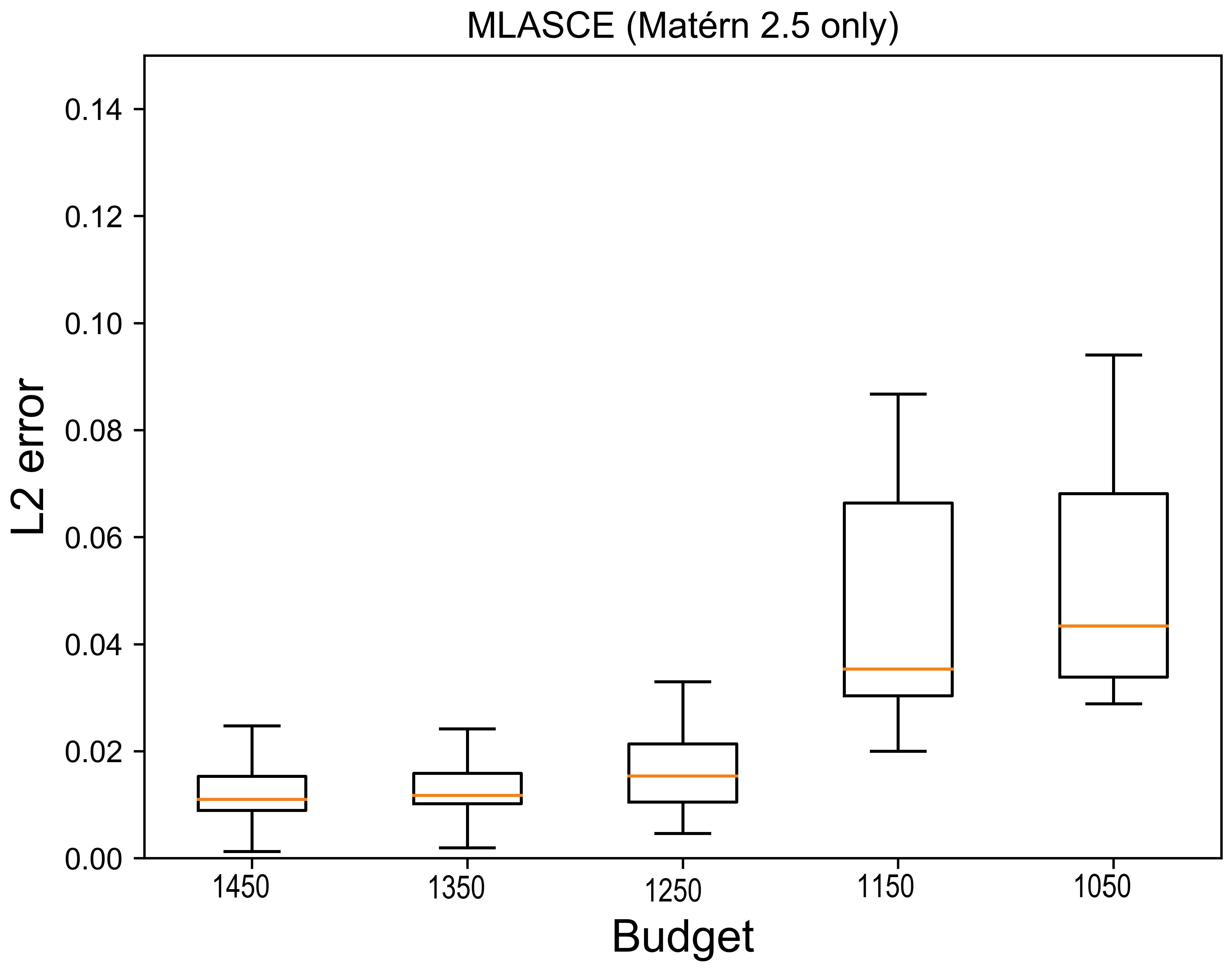}
\label{boxplot_fix}
} 
\caption{The box plots of the $L^{2}$ errors, depending on the varied locations of the initial design point.}
\end{center}
\end{figure}

\subsection{Example 3: multi-level tsunami emulation with multi-dimensional inputs}
\label{numerical_exmaple_tdac}
In this section, we consider a more realistic multi-dimensional simulation (input dimension 3) and show that MLASCE outperforms the standard framework. As our focus is on creating a design of experiment, we did not include NARGP in this example. The simulator is TDAC (Tsunami Data Assimilation Code), see \url{https://github.com/Team-RADDISH/tdac}. We do not employ the data assimilation component. 
TDAC adopts a simple 2-D linear long-wave equation as the governing equation \citep{maeda2015successive} for tsunamis observed in the deep ocean:
$$
\begin{aligned}
&\frac{\partial \gamma(x, y, t)}{\partial t}=-\frac{\partial M(x, y, t)}{\partial x}-\frac{\partial N(x, y, t)}{\partial y} \\
&\frac{\partial M(x, y, t)}{\partial t}=-g D(x, y) \frac{\partial \gamma(x, y, t)}{\partial x} \\
&\frac{\partial N(x, y, t)}{\partial t}=-g D(x, y) \frac{\partial \gamma(x, y, t)}{\partial y}
\end{aligned}
$$
where $\gamma$ is the tsunami height, $(M, N)$ are the vertically integrated horizontal velocity components of the tsunami in the $x$ and $y$ directions, $g$ is the gravitational acceleration constant $\left(9.80665 \mathrm{~m} / \mathrm{s}^{2}\right)$, and $D(x, y)$ is the sea depth (3000m over the whole domain). The numerical solution of this simulation is based on the finite difference method with the first-order accuracy. To implement TDAC, one should specify the initial conditions of the governing equations through the input variables: the initial wave height $\zeta_{1}$ and the location of the elevation in the $x$-$y$ coordinates $(\zeta_{2},\zeta_{3})$. A circle defines the boundary of the initial elevation area (in yellow in Figure \ref{tdac_coordinate}), whose radius is $15$ km.

We assume that the entire domain of the simulation is $[0,500] \times [0,500]$ (km) in the $x-y$ coordinate and fix the gauge point where the wave height is collected at $(80,80)$. We set values for the parameters that determine the initial conditions of the governing equation. We are interested in how the maximum wave height $\max_{0\leq t \leq T}\gamma(80,80,t) $ at the gauge point changes depending on the values of these parameters. Moreover, as one would usually assume nonlinear relationships among the input variables in practical situations (e.g., the Okada dislocation model \citep{okada_1985} in a tsunami simulation), we specify the following toy nonlinear equations (a reparameterization mimicking a nonlinear influence) among the input variables $\zeta_{1}$, $\zeta_{2}$, and $\zeta_{3}$ by introducing other independent variables $s_{1}$, $s_{2}$ and $s_{3}$:
\begin{eqnarray*}
\zeta_{1} &=& \sum_{k=1}^{2}\sum_{j=1}^{2}\exp\bigl(-12|s_{2}-\psi_{j}|^{3/2}-12|s_{3}-\psi_{k}|^{3/2} \bigr) \bigl(s_{1} -2 \bigr) + 2 \\
\zeta_{2} &=& s_{2} \\
\zeta_{3} &=& s_{3} 
\end{eqnarray*}
where $\psi_{1} = 31.25, \psi_{2} = 43.75$ so that the peaks of $\zeta_{1}$ are in the middle of the domain of $s_{2}$, and $s_{3}$. Note that $s_{1} \in [2,9]$ represents the energy of the wave elevation, and both $s_{2}$ and $s_{3}$ (location of the initial wave elevation) are in $[25,50]$. The initial wave height is determined by not only the energy of the wave elevation $s_{1}$ but also its location $s_{2}$ and $s_{3}$.
Our input-output structure is $\gamma_{\text{max}}(s_{1},s_{2},s_{3})$
where $\gamma_{\text{max}}$ is the maximum wave height at the gauge point $(80,80)$ over the simulation time period $[0,T]$. We set $T=1500$ (seconds).

Next, we run TDAC with three different sizes of the grids: 4 km (level 1), 3 km (level 2) and, 2km (level 3). The CFL condition is set to 0.2 for every simulation to avoid numerical instability based on the fact that the wave velocity is around 170 m/s. Figure \ref{tdac_slip} shows the typical time series of the wave height at the gauge point $(80,80)$ with the different discretization levels. We observe from Figure \ref{tdac_slip} the maximum wave height tends to be smaller as the discretization becomes coarser. Therefore, as far as we are interested in the maximum elevation at the gauge point, these simulations are valid examples for illustrating our methodology.

\begin{figure}[h]
\begin{center}
\subfloat[The gauge point (red circle) and the location of the initial wave elevation.]{
\includegraphics[height=0.23\textheight]{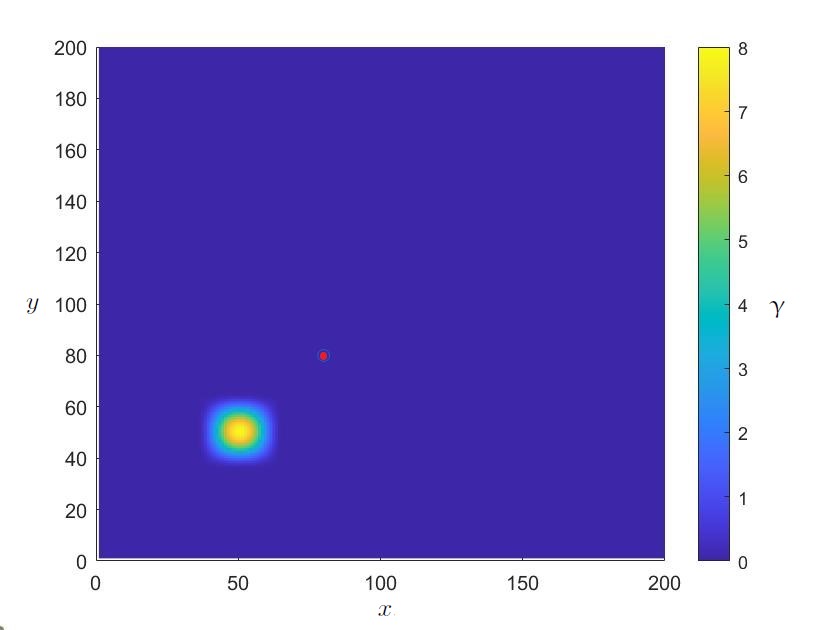}
\label{tdac_coordinate}
} 
\subfloat[The typical results of the simulations with different fidelities ($\zeta_{1} = 9$, $\zeta_{2} = \zeta_{3} =  25$).]{
\includegraphics[height=0.23\textheight]{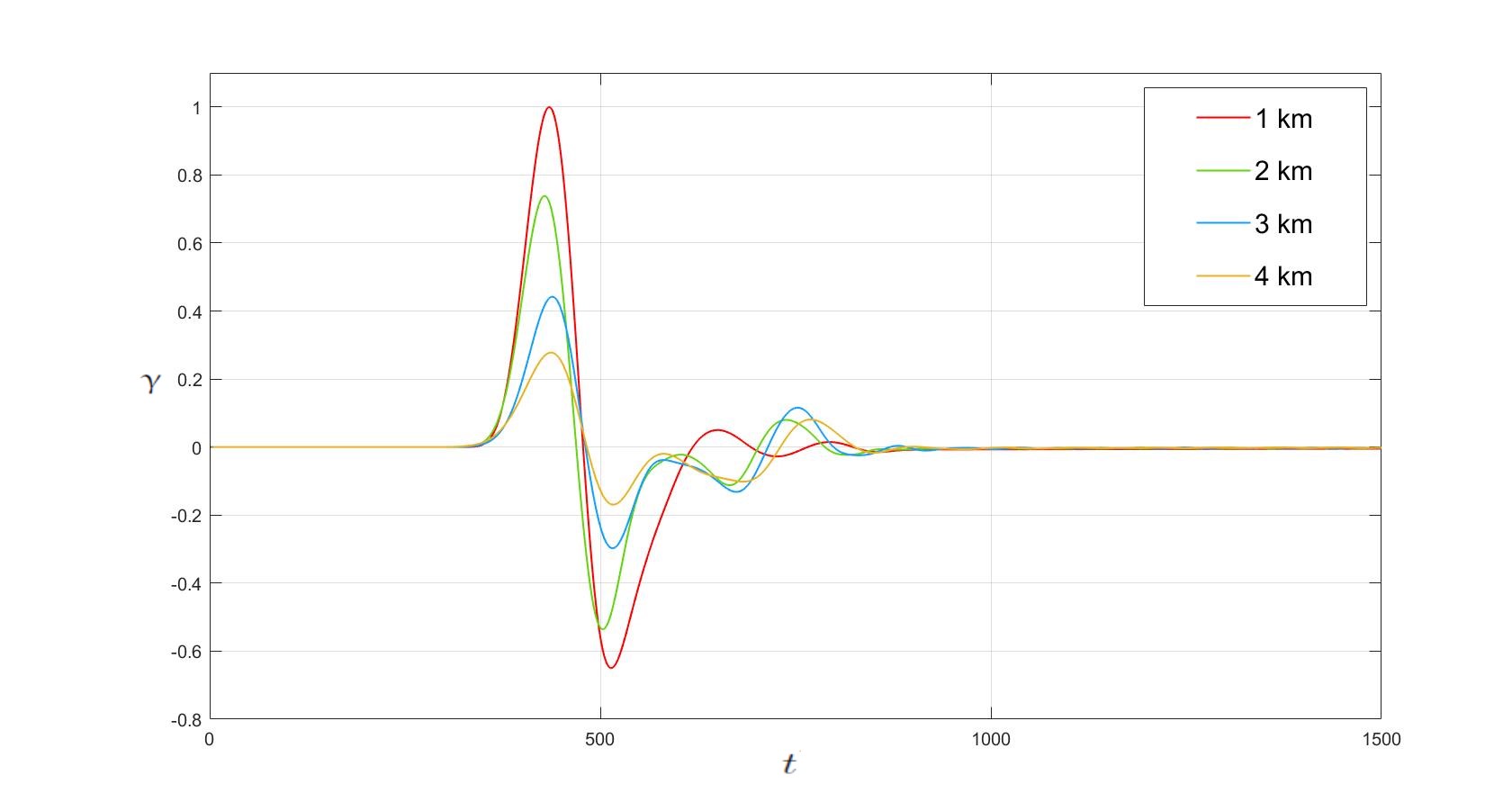}
\label{tdac_slip}
} \\
\caption{Time series of wave height at the gauge point at $(80, 80)$ according to resolutions from 1 km to 4 km.  }
\end{center}
\end{figure}

Let $\gamma_{\text{max},1}(s)$, $\gamma_{\text{max},2}(s)$ and $\gamma_{\text{max},3}(s)$ denote the maximum wave height provided by TDAC with the grid size $4$ km, $3$ km and $2$ km respectively where $s=(s_{1},s_{2},s_{3}) \in [2,9] \times [25,50] \times [25,50]$ and $T=1500$. We call $\gamma_{\text{max},l}(s)$ the level $l$ simulation. Our goal is to construct the multi-fidelity GP emulators of $\gamma_{\text{max},3}(s)$ under the restriction of the computational budget $T_{0}$. We prepare the incremental functions $\tilde{\gamma}_{1}(s)=\gamma_{\text{max},1}(s)$ and $\tilde{\gamma}_{l}(s)=\gamma_{\text{max},l}(s) - \gamma_{\text{max},l-1}(s)$ for $l=2,3$. Then we prepare three mutually independent GPs $\eta_{l}(x) \sim \text{GP}(0,K_{l}(x), x^{\prime})$ for $\tilde{\gamma}_{l}(s)$. As observed in \citep{gopinathan2021probabilistic}, the Mat\'ern 5/2 kernel is a common choice in constructing a GP emulator for a tsunami simulation. Therefore, we select the Mat\'ern 5/2 kernel for the specification of the kernels in the GP emulations. The nugget parameters and parameter estimation method are the same as in the previous subsection. The typical computational time for running a certain level of TDAC is in the Table \ref{cost_tdac} and we choose the different values, 168, 196, 224, 252, 280 seconds for $T_{0}$ as examples of a limited amount of available computational resource. 

For the purpose of validating the performance of MLASCE, We prepare 80 true values of $\gamma_{\text{max},3}(s)$,  $\bigl\{s =(s_{1},s_{2},s_{3}) \in X_{\text{validation}} \mid X_{\text{validation}} = (3, 4.25, 5.5, 6.75, 8) \times (30, 35, 40, 45) \times(30, 35 , 40, 45) \bigr\}$ and none of these data points is included in our experimental designs. These true values and our predictions are compared by computing the root mean square errors (RMSE) as a measure of the overall prediction error. Moreover, to show the advantage of MLASCE over the non-multilevel methodology, we prepare the single level emulator, which is based on the training data only from the level 3 simulation $\gamma_{\text{max},3}(s)$ using MICE for the design under the same budget. The result of validation is shown in Figure \ref{tdac_RMSE}. 
MLASCE is more reliable than the single level emulator under the whole range of the budget and the prediction error of our multi-fidelity emulator also gets reduced as the budget increases, albeit at a slow rate. The RMSE of the single level emulator declines with the bigger budget but it is well above that of MLASCE. The number of design points is in Table \ref{number_tdac} and the experimental designs under $T_{0} = 280$ are shown in Figure \ref{tdac_design_1} - \ref{tdac_design_3}. MLASCE successfully allocates the total budget among the different levels of the simulations by evaluating the magnitude of their contributions.

\begin{table}[htb]
 \centering
 \caption{The typical cost (computational time) for each level of TDAC.}
  \scalebox{0.8}{\begin{tabular}[width=0.5\textwidth]{|c|c|c|c|c|} \hline 
$\gamma_{\text{max},1}(s)$  & $\gamma_{\text{max},2}(s)$  & $\gamma_{\text{max},3}(s)$ & $\tilde{\gamma}_{2}(s)$ & $\tilde{\gamma}_{3}(s)$ \\ \hline 
$4$ (second) & $8$ (second)  & $ 20$ (second) & $ 12$ (second) & $ 28$ (second)    \\ \hline
  \end{tabular}}
  \label{cost_tdac}
\end{table}

\begin{figure}
\begin{center}
   \includegraphics[width=0.4\linewidth]{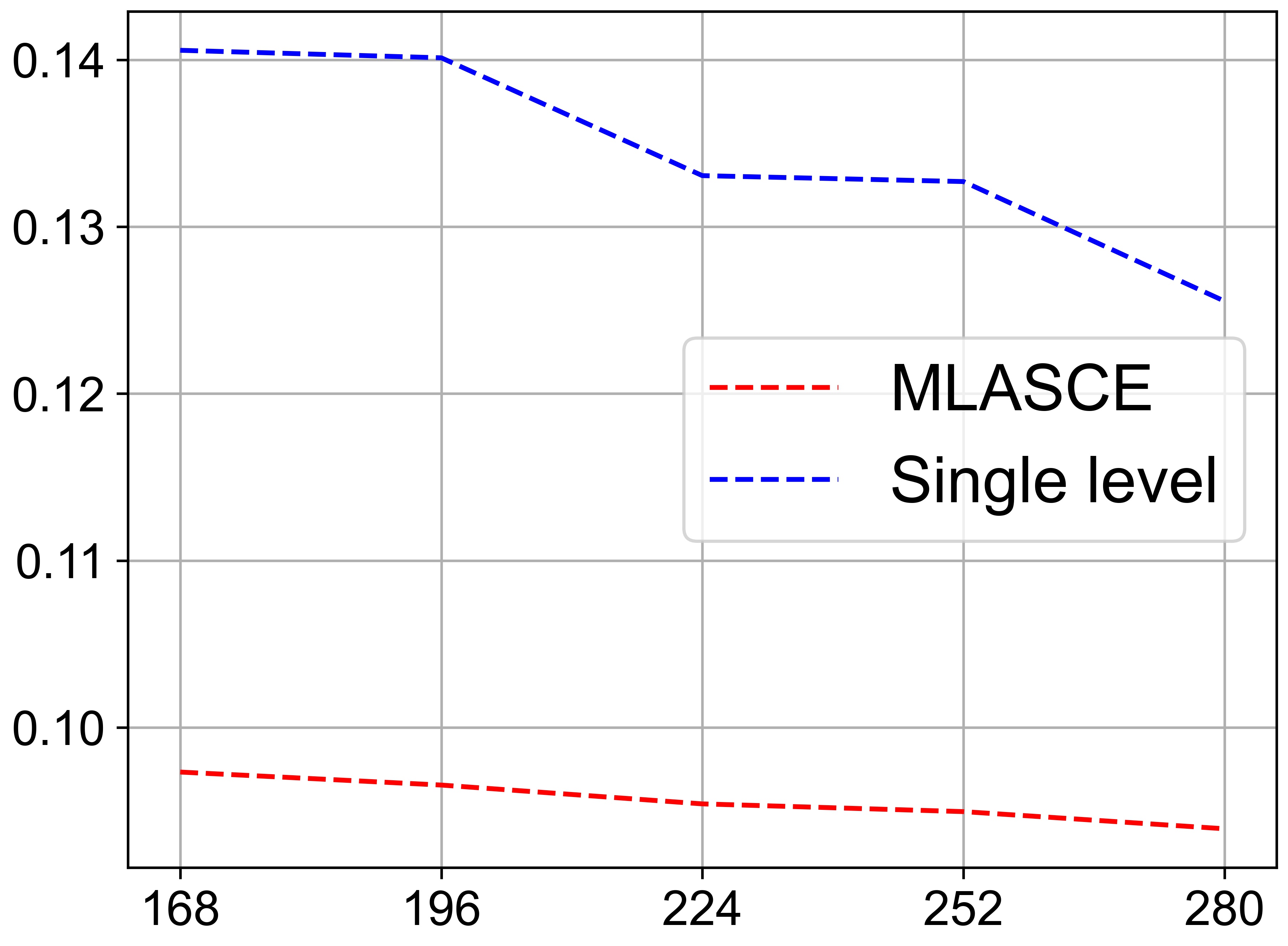} 
\caption{Tsunami wave example. RMSE with budget $T_{0} = 168$ to $280$.}
\label{tdac_RMSE}
    \end{center}
\end{figure}

\begin{figure}
\begin{center}
   \subfloat[Design points of $s_{1}$ and $s_{2}$. ]{\includegraphics[width=0.3\linewidth]{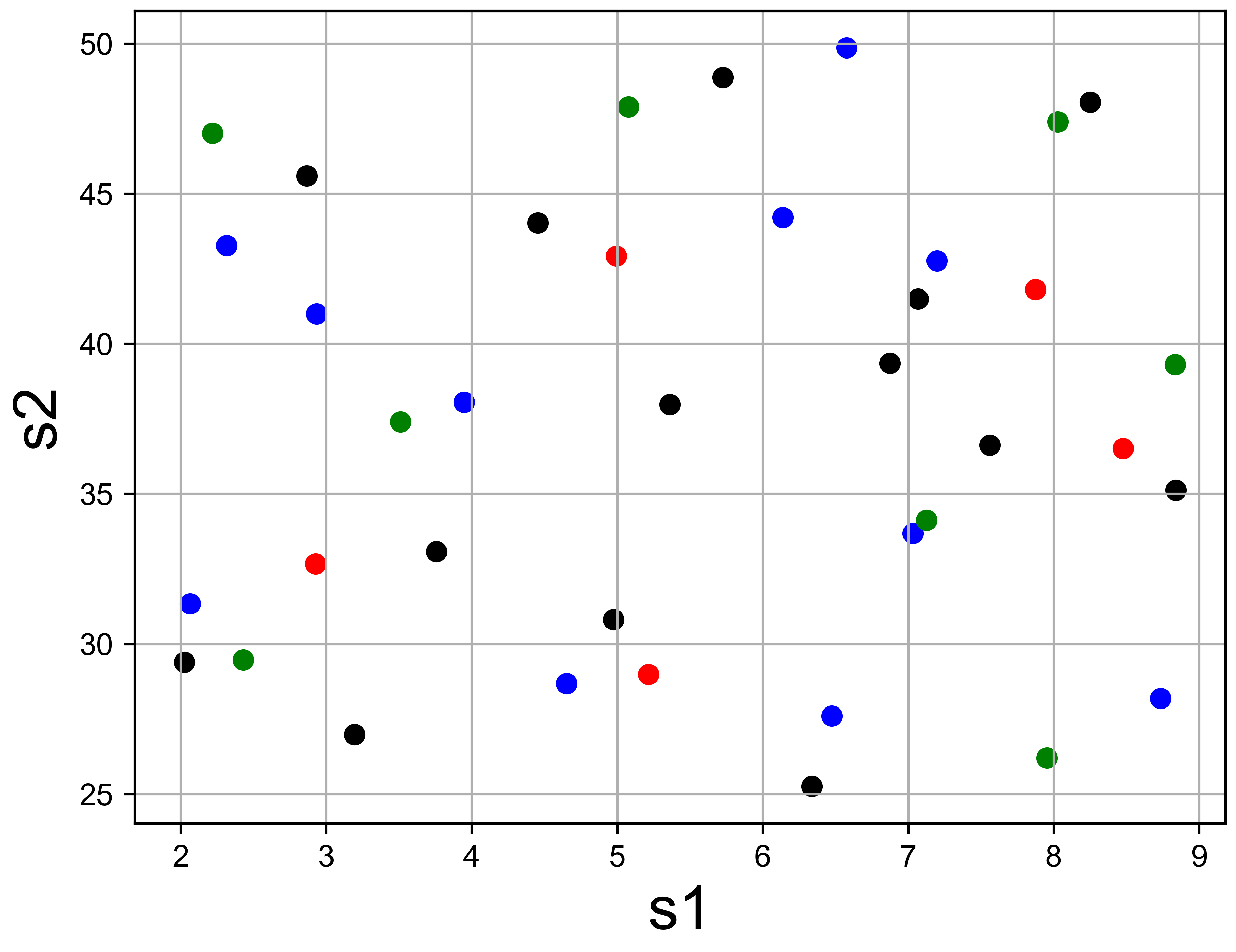} \label{tdac_design_1}}
    \hfil 
    \subfloat[Design points of $s_{1}$ and $s_{3}$.]{\includegraphics[width=0.3\linewidth]{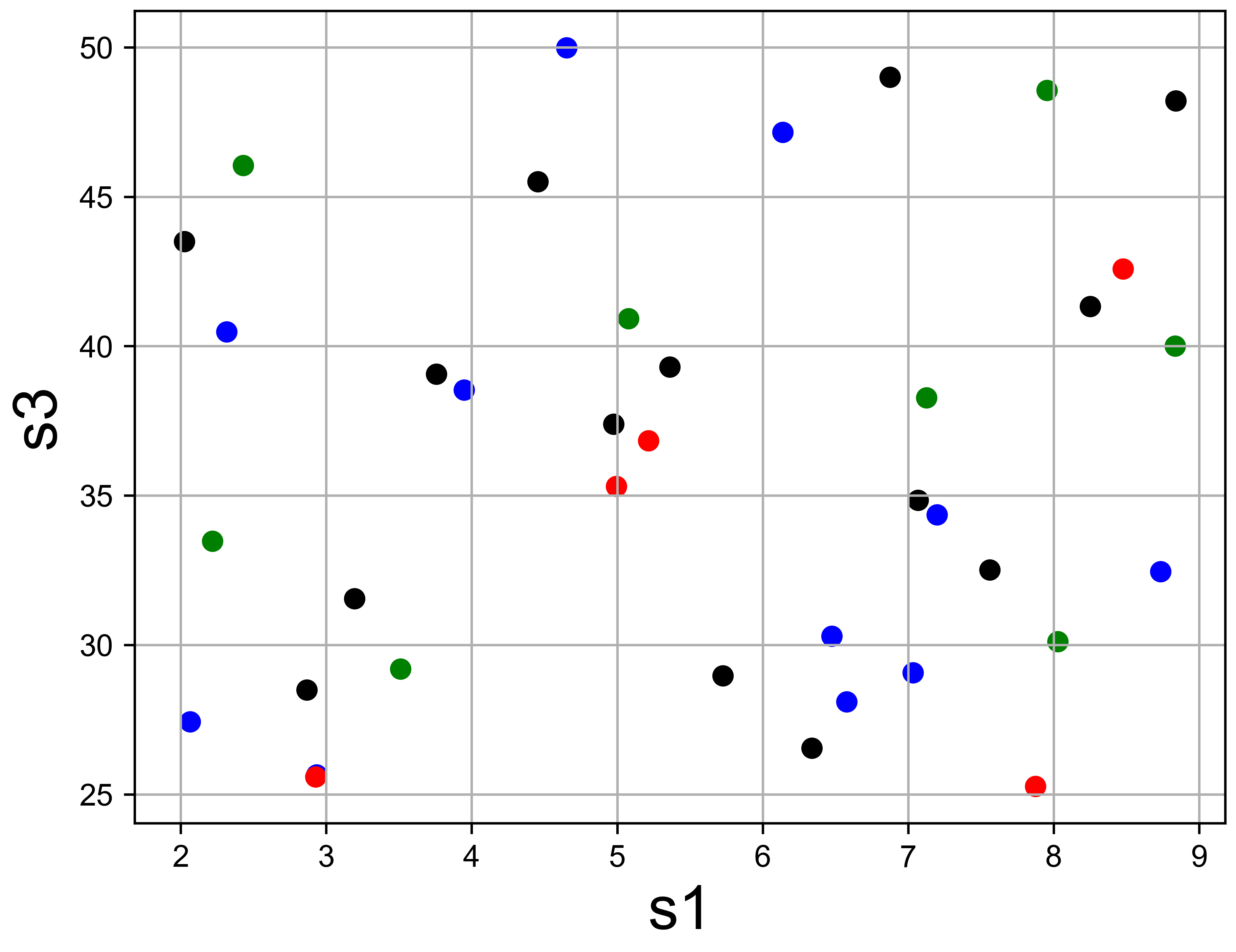} \label{tdac_design_2}}
    \subfloat[Design points of $s_{2}$ and $s_{3}$.]{\includegraphics[width=0.3\linewidth]{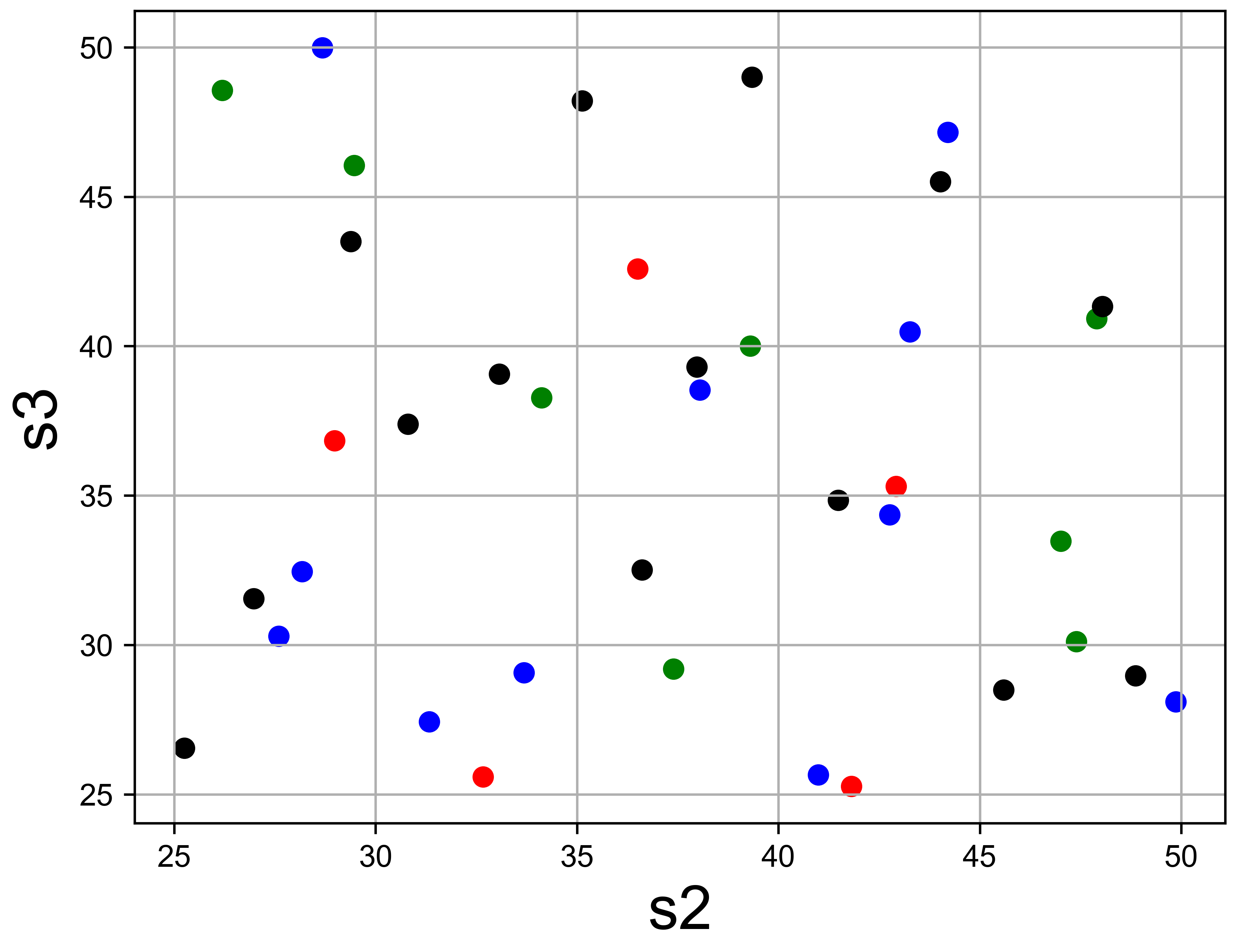} \label{tdac_design_3}}
    \caption{The experimental design of the multi-fidelity emulator and single level emulator with $T_{0}=280$. The red, green, blue and black points denote the training data points for $\tilde{\gamma}_{1}(s)$, $\tilde{\gamma}_{2}(s)$, $\tilde{\gamma}_{3}(s)$ and single level emulator.}
    \end{center}
\end{figure}

\begin{table}[htb]
 \centering
 \caption{The number of design points at each level according to budget.}
  \scalebox{0.8}{\begin{tabular}[width=0.5\textwidth]{|c|c|c|c|c|c|} \hline 
  Budget ($T_{0}$) & 168  & 196 & 224 & 252 & 280  \\ \hline 
 Level 1 ($\tilde{\gamma}_{1}(s)$)  & 9 & 9 & 10 & 11 & 11  \\ \hline
 Level 2 ($\tilde{\gamma}_{2}(s)$) & 4 &  4 & 6  & 8 & 8 \\ \hline
 Level 3 ($\tilde{\gamma}_{3}(s)$) & 3 &  4 & 4  & 4 & 5 \\ \hline
 Single level ($\gamma_{\text{max},3}(s)$) & 8 &  9 & 11  & 12 & 14 \\ \hline
  \end{tabular}}
  \label{number_tdac}
\end{table}

\subsection{Example 4: non-hierarchical robustness}
\label{minor_correction_decomposition}

The decomposition structure (\ref{decomp}) assumes that the incremental functions $\delta_{i}(x)$ are summed up in a hierarchical way. In other words, the fidelity of the simulations $y_{l}(x)$ improves with $l$. It might be of interest to observe how MLASCE behaves when the hierarchical order in the decomposition structure (\ref{decomp}) is violated. Assuming three levels of simulation, not ordered in terms of fidelity, our purpose is to predict $y_{3}$ while enforcing MLASCE to this structure of the non-hierarchical fidelity. Using $f_{1}(x) = \cos{x}$, $f_{2}(x) = 0.5\cos{2x}$, $f_{3}(x) = 0.3\sin{3x}$, $f_{4}(x) = 0.3\cos{4x}$, $f_{5}(x) = 0.3\sin{4x}$, our simulators are $y_{1}(x) = f_{1}(x)$, $y_{2}(x) = f_{1}(x) + f_{2}(x) + f_{3}(x)$, $y_{3}(x) = f_{1}(x) + f_{4}(x) + f_{5}(x)$.

We compare MLASCE and co-kriging methods, \citep{le2014recursive} and \citep{le2015cokriging}. The Mat\'ern 2.5 kernels and the nugget parameter $10^{-8}$ are assumed for every framework and $\rho(x) = \beta_{0} + \beta_{1}x + \beta_{2}x^{2}$ is determined for the model of \citep{le2014recursive} while constant $\rho$ is set for \citep{le2015cokriging}. The domain is $[0, \pi]$ and the computational costs for level 1, level 2 and level 3 simulations are $1$, $2$ and $4$ with the budgets $35, 47, 59,71,83$. The $L^{2}$ errors are compared in Figure \ref{minor_multi_error} and MLASCE still outperforms the co-kriging methods of \citep{le2014recursive,le2015cokriging} albeit with slightly better predictions than that of \cite{le2014recursive}.

\begin{figure}
\begin{center}
   \subfloat[The $L^{2}$ errors.]{\includegraphics[width=0.49\linewidth]{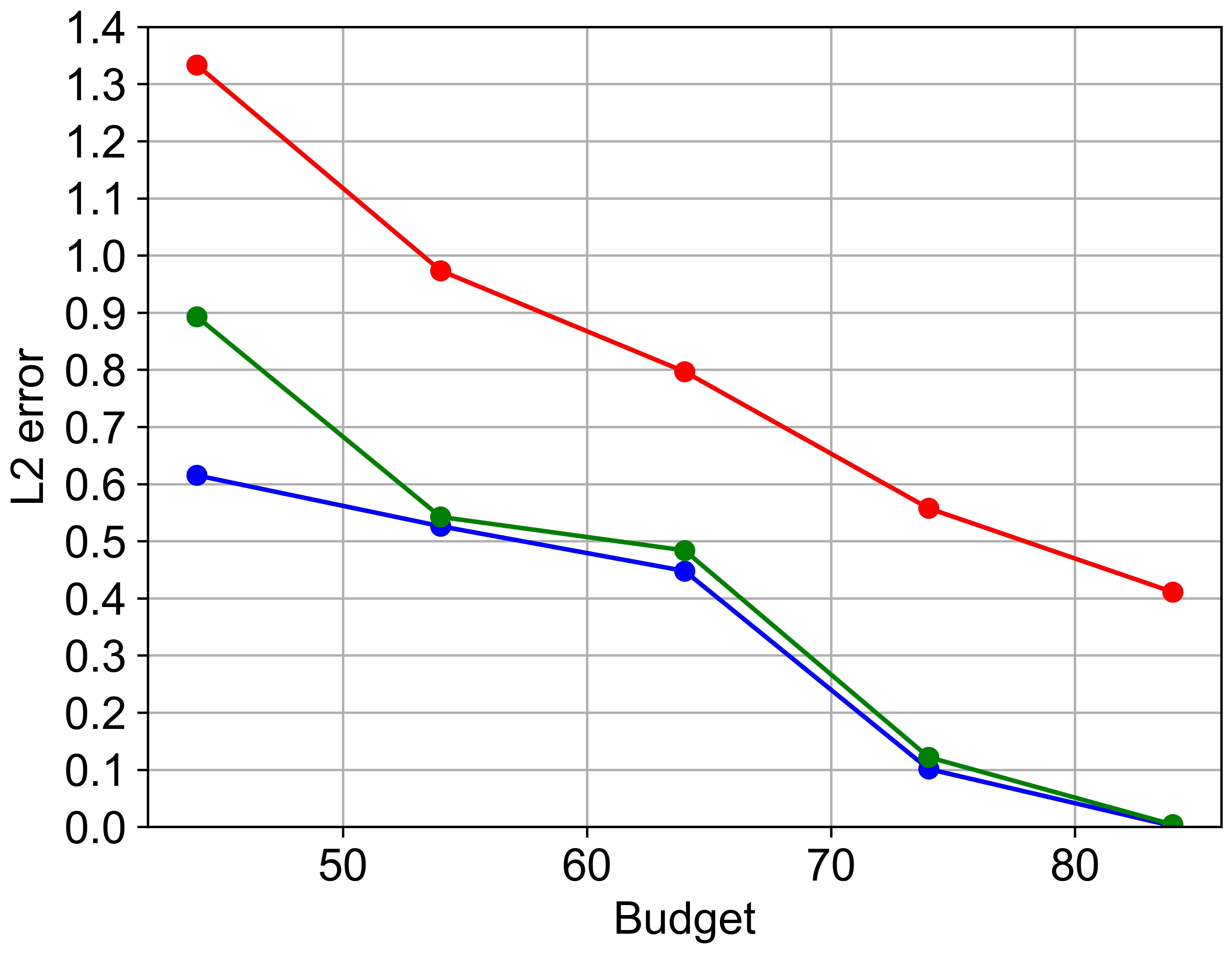} \label{minor_multi_error}}
    \subfloat[The numbers of the design points. The solid, dotted, dashed lines correspond to level 1, level 2 and level 3.]{\includegraphics[width=0.49\linewidth]{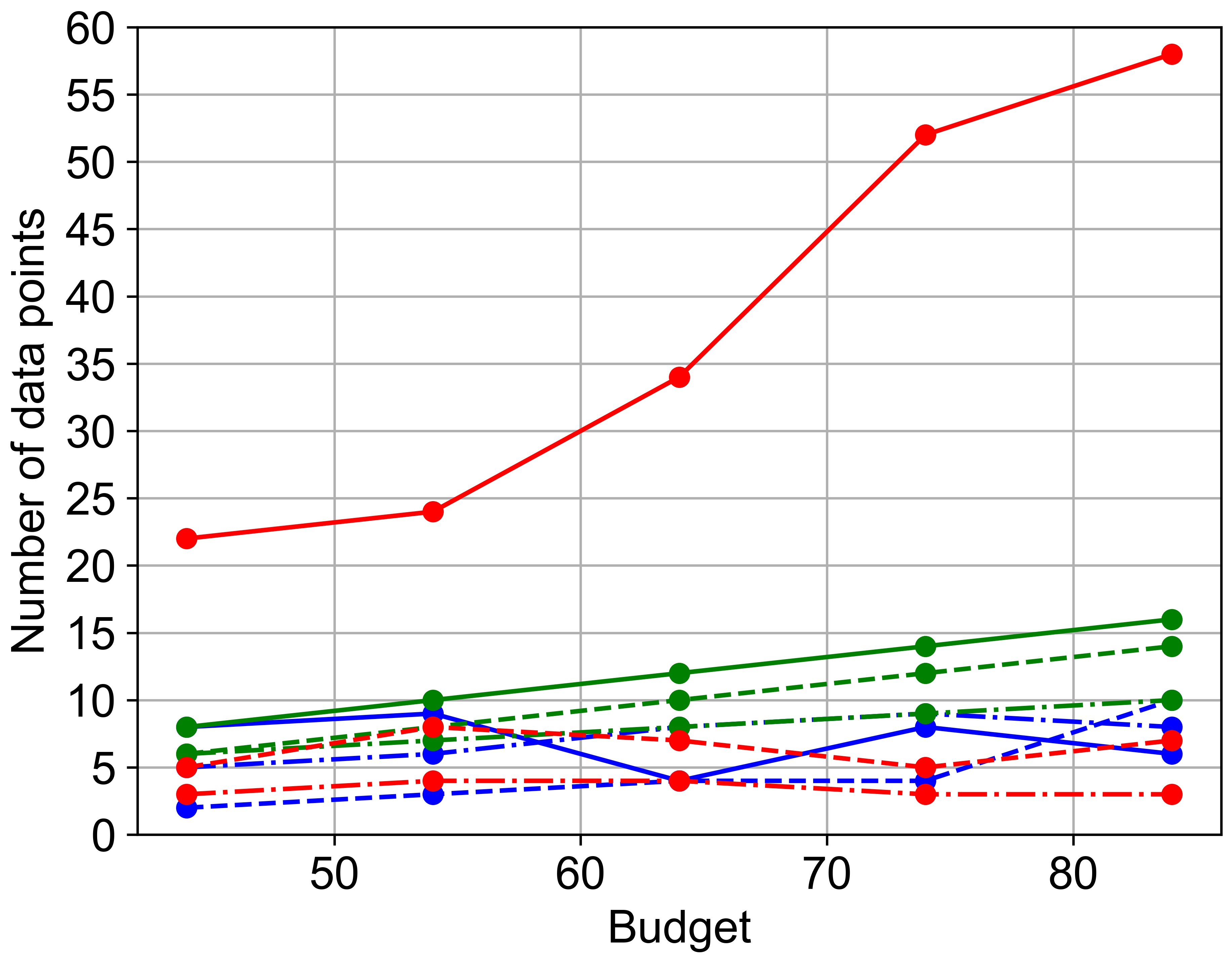} \label{minor_multi_num}}
    \hfil 
    \caption{Example 4: Errors and sampling points: blue, green and red lines correspond to MLASCE, \citep{le2014recursive} and \citep{le2015cokriging} respectively.}
    \end{center}
\end{figure}


\section{Conclusions}
We presented a novel method of multilevel adaptive sequential design of computer experiments (MLASCE) in the framework of multi-fidelity Gaussian process emulators. This strategy allows us to allocate efficiently limited computational resource over simulations of different levels of fidelity and build the GP emulator accordingly. Furthermore, the allocation of computational resource is shown to be the solution of a simple optimization problem in a special case that turns out to match well the general optimal allocation of resources. This allows ambitious computer experiments at multiple fidelities to be vetted before launching the expensive suite of simulations, thus reassuring modelers that their budget will be effective in building the multilevel emulator. The strength of the suggested approach is that it considerably improves the accuracy of the prediction by a GP emulator compared to other existing multi-fidelity models. One of the most important consequences of our proposed method is that the size of improvements from using more sophisticated, hence more expensive, simulations is evaluated while the convergence of the prediction is guaranteed. 

First, we decompose the most refined simulation into the functions of increments and scales of these functions are measured in the RKHS norms. Mat\'ern kernels are used in the covariance structures of GPs for each increment and a principle for choosing the smoothness parameters is suggested. Second, using the RKHS norms of the conditional means of increments, we introduce MLASCE, an algorithm of sequential designs for simulations of different levels of fidelity based on MICE. Third, we present the explicit form of optimal numbers of runs in the special cases where the designs are located in tensor grids.

MLASCE is applied to one-dimensional toy examples in comparisons with other existing multi-fidelity GP emulators. We further employed MLASCE in a multidimensional case comparing it with the non-multi-fidelity emulator. From these illustrations we can derive three conclusions. First, when the computational budget is small (i.e. the number of high fidelity runs is tiny), other existing multi-fidelity methods do not perform as well as MLASCE, or sometimes cannot even construct an emulator. Our method is more robust whenever the budget is small, because of the increased flexibility in adaptively choosing the level and the input locations. Secondly, the choice of smoothness parameters of the GP emulators of our method is adjustable. The two cases of series of either fixed or different smoothness parameters are considered. In the usual case where the smoothness of a simulation is unknown, MLASCE with varied levels of smoothness parameters across different levels shows more accurate results. Indeed, our method is able to follow the usual growth in roughness of differences across two consecutive levels whenever fidelities increase. Obviously, this is now an \textit{ad hoc} approach to the variation in smoothness across levels, so more research, e.g in terms of eliciting prior knowledge about smoothness across levels would improve our method further. 

MLASCE performs better than the non-multi-fidelity framework in a realistic multi-dimensional computer simulation. The numerical solution of this system is also dependent on a complex relationship among the input variables, which is motivated by a practical simulation. The predictions by the non-multi-fidelity emulator are unreliable due to the lack of training data under the limited computational budget. In contrast, MALSCE achieves a decent quality of predictions with an efficient allocation of the small budget.

Finally, our paper has only illustrations in dimensions of the input space up to 3. In higher dimensional settings we expect MLASCE to perform possibly even better as the choice of next point and fidelity is done jointly and efficiently, and the curse of dimensionality requires a more efficient algorithm to navigate a more ``empty'' space. However, this is an area of research that needs more investigation and tuning but will benefit the upcoming exascale computing in which fidelities and dimensions are naturally large. 


\bibliographystyle{abbrv}  
\bibliography{references}

%





\appendix
\section{Proof of Theorem 2}
\begin{proof}
Let $L_{N}^{\theta}$ be a closure of $\{\sum_{i=1}^{N}a_{i} K_{\nu,\theta}(\cdot, x_{i}) ; a_{i} \in \mathbb{R} \}$, that is

\[L^{\theta}_{N} = \overline{\{  \sum_{i=1}^{N} a_{i} K_{\nu,\theta}(\cdot, x_{i}) ; a_{i} \in \mathbb{R}  \}} .\]

$\{ K_{\nu,\theta}(\cdot, x_{i}) \}_{i=1}^{N}$ are nonorthogonal basis in $L^{\theta}_{N}$ and $L^{\theta}_{N}$ is a closed subspace of $\mathbb{H}_{\theta}$. We obtain the orthogonal basis functions $\{ e_{i}\}_{i=1}^{N}$ in $L^{\theta}_{N}$ which are specified by

\[(e_{1}, \ldots, e_{N})^\top{} =   K_{\nu,\theta}(\cdot, X_{N})K_{\nu,\theta}^{-1 / 2}(X_{N}, X_{N})  .\]

It is seen that $\{ e_{i}\}_{i=1}^{N}$ are orthogonal by taking the inner product
\[\langle K_{\nu,\theta}(\cdot, X_{N}) K_{\nu,\theta}^{-1/2}(X_{N}, X_{N}), K_{\nu,\theta}(\cdot, X_{N})K_{\nu,\theta}^{-1/2}(X_{N}, X_{N})   \rangle_{\mathbb{H}_{\theta}}= I_{N} \]
where $I_{N}$ denotes the $N$ dimensional identity matrix. Since $L^{\theta}_{N}$ is a closed subspace of $\mathbb{H}_{\theta}$, the decomposition $\mathbb{H}_{\theta} = L^{\theta}_{N} \oplus {L^{\theta}_{N}}^{c}$ holds. Here, $A^{c}$ denotes the complement of an set $A$. We define the projection operator $P_{N}$ onto $L^{\theta}_{N}$ for $y \in \mathbb{H}_{\theta}$ as 
\begin{eqnarray*}
\label{proj}
P_{N}y = \sum_{i=1}^{N} \langle y, e_{i} \rangle_{\mathbb{H}_{\theta}} e_{i} = K_{\nu,\theta}(\cdot, X_{N})K_{\nu,\theta}^{-1}(X_{N}, X_{N}) \langle y, K_{\nu,\theta}(\cdot, X_{N}) \rangle_{\mathbb{H}_{\theta}}
\end{eqnarray*}
 As $N \to \infty$, $L_{N}^{\theta}$ itself coincides with $\mathbb{H}_{\theta}$ because $\mathbb{H}_{\theta}$ is separable and $\text{Span} \{K_{\nu,\theta}(\cdot, x_{i}) | 1 \leq i  \leq \infty\}$ is dense in $\mathbb{H}_{\theta}$. As a result, $P_{N}$ tends to the identity map $I: \mathbb{H}_{\theta} \to \mathbb{H}_{\theta}$ in the operator norm since 
\begin{eqnarray*}
\label{ope_norm}
\sup_{\|y \|_{\mathbb{H}_{\theta}} =1} \| (I-P_{N})y \|_{\mathbb{H}_{\theta} } =  
 \sup_{\|y \|_{\mathbb{H}_{\theta}} =1} \| \sum_{i=N+1}^{\infty} \langle y, e_{i} \rangle_{\mathbb{H}_{\theta}} e_{i} \|_{\mathbb{H}_{\theta}} =  \sup_{\|y \|_{\mathbb{H}_{\theta}} =1}  \Bigl( \sum_{i=N+1}^{\infty} \langle y, e_{i} \rangle^{2}_{\mathbb{H}_{\theta}} \Bigr) ^{1/2}  \to 0
 \end{eqnarray*}
as $N \to \infty$ where $\{e_{i}\}_{i=1}^{\infty}$ denote the orthogonal basis functions in $\mathbb{H}_{\theta}$. Applying Mercer's theorem (Theorem 4.6.5 in \citep{hsing2015theoretical}), $K_{\nu,\theta}(x, u) $ is represented as $K_{\nu,\theta}(x, u) =   \sum_{i=1}^{\infty} \lambda_{i}   \phi_{i}(x)  \phi_{i}(u)$ with eigenvalues $\{\lambda_{i}\}_{i=1}^{\infty}$ and eigenfunctions $\{ \phi_{i} \}_{i=1}^{\infty}$. Then, $K_{\nu,\theta}(x, X_{N})K_{\nu,\theta}^{-1}(X_{N}, X_{N}) K_{\nu,\theta}(X_{N}, u)$ becomes
\begin{eqnarray*}
K_{\nu,\theta}(x, X_{N})K_{\nu,\theta}^{-1}(X_{N}, X_{N}) K_{\nu,\theta}(X_{N}, u) &=& K_{\nu,\theta}(x, X_{N})K_{\nu,\theta}^{-1}(X_{N}, X_{N}) \Bigl( \sum_{i=1}^{\infty} \lambda_{i} \phi_{i}(X_{N})\phi_{i}( u) \Bigr) \\
&=&  \sum_{i=1}^{\infty} \lambda_{i} \bigl(  P_{N} \phi_{i}(x) \bigr) \phi_{i}(u)
\end{eqnarray*}

where $\phi_{i}(X_{N}) =  (\phi_{i}(x_{1}), \ldots, \phi_{i}(x_{N}))^\top{}$. We can establish $P_{N} \phi_{i} \to \phi_{i}$ pointwise for every $i$ since $P_{N} \to I$ in the operator norm as $N \to \infty$. Thus, 
\begin{eqnarray*}
K_{\nu,\theta}(x, u) - K_{\nu,\theta}(x, X)K_{\nu,\theta}^{-1}(X, X) K_{\nu,\theta}(X, u) =  \sum_{i=1}^{\infty} \lambda_{i}   \phi_{i}(x)  \phi_{i}(u) -  \sum_{i=1}^{\infty} \lambda_{i} \bigl(  P_{N} \phi_{i}(x) \bigr) \phi_{i}(u) \to 0.
\end{eqnarray*}
Similarly,
\begin{eqnarray*}
m_{N,\theta}^{y}(x) = K_{\nu,\theta}(x, X_{N})K_{\nu,\theta}^{-1}(X_{N}, X_{N}) y(X_{N})&=& K_{\nu,\theta}(x, X_{N})K_{\nu,\theta}^{-1}(X_{N}, X_{N}) \langle y, K_{\nu,\theta}(\cdot, X_{N}) \rangle_{\mathbb{H}_{\theta}}  \\
&=&    P_{N} y(x) \to y(x)
\end{eqnarray*}
Regarding $\|m_{N,\theta}^{y} -y\|_{\mathbb{H}_{\theta}} \to 0$, we observe again that $P_{N}$ tends to the identity map $I$ in the operator norm. Let $\overline{y} = y/ \|y\|_{\mathbb{H}_{\theta}} $ for $y \neq 0$ and we can see that $ \|\overline{y}\|_{\mathbb{H}_{\theta}} = 1$. Then, for any $y \in \mathbb{H}_{\theta}$ $(y \neq 0)$, the following holds. 
\begin{eqnarray*}
\|m_{N,\theta}^{y} -y\|_{\mathbb{H}_{\theta}} &=& \|(P_{N} -I)y\|_{\mathbb{H}_{\theta}} = \|y\|_{\mathbb{H}_{\theta}} \cdot \|(P_{N} -I) \overline{y}\|_{\mathbb{H}_{\theta}}  \\
&\leq& \|y\|_{\mathbb{H}_{\theta}} \cdot \sup_{\|\overline{y}\|_{\mathbb{H}_{\theta}}  =1}  \| (P_{N}-I) \overline{y} \|_{\mathbb{H}_{\theta}} \to 0
\end{eqnarray*}
For $y=0$, this is trivial since $m_{N,\theta}^{y}$ becomes 0.
\end{proof}

We present Mercer's theorem in the following (Theorem 4.6.5 in \citep{hsing2015theoretical}) without proof which is used in the above proof .

\begin{thm}[\cite{hsing2015theoretical}, Theorem 4.6.5]
\label{mercer}
 Let the continuous kernel $K$ be symmetric and nonnegative definite and $\mathcal{K}$ the corresponding integral operator. If $\left(\lambda_{j}, e_{j}\right)$ are the eigenvalue and eigenfunction pairs of $\mathcal{K}$, then $K$ has the representation
$K(s, t)=\sum_{j=1}^{\infty} \lambda_{j} e_{j}(s) e_{j}(t)$
for all $s, t,$ with the sum converging absolutely and uniformly.
\end{thm}

\begin{remark}
\label{dif_theta}
Assuming the same conditions in Theorem \ref{conv}, let $\{\hat{\theta}_{N}\}_{N=1}^{\infty}$ be a convergent sequence in $\Theta$ with a fixed limit $\theta_{0} \in \Theta$ and $g_{N}(\theta) \coloneqq m_{N,\theta}^{y}\left(\boldsymbol{x}\right) $. We further assume that $\sup_{\theta \in \Theta} \sum_{i=1}^{2}|\frac{\partial g_{N}}{\partial \theta_{i}}( \theta) | < M$ for every $N$ and some finite $M$ independent of $N$ and $\{X_{N},y(X_{N})\}$ where $\theta_{1} = \lambda$ and $\theta_{2} = \sigma^{2}$. Based on these above assumptions and for any $\boldsymbol{x} \in \mathbb{D}$, we can establish $|y(\boldsymbol{x})  -m_{N,\hat{\theta}_{N}}^{y}\left(\boldsymbol{x}\right)| \to 0$. To verify this, first we see 
\begin{eqnarray*}|y(\boldsymbol{x}) -m_{N,\hat{\theta}_{N}}^{y}\left(\boldsymbol{x}\right)| \le |m_{N,\theta_{0}}^{y}\left(\boldsymbol{x}\right) - m_{N,\hat{\theta}_{N}}^{y}\left(\boldsymbol{x}\right)| + |y(\boldsymbol{x}) - m_{N,{\theta}_{0}}^{y}\left(\boldsymbol{x}\right)|
\end{eqnarray*}

and the second term converges to $0$ by applying Theorem \ref{conv}. As for the first term, 
\[|m_{N,\hat{\theta}_{N}}^{y}\left(\boldsymbol{x}\right) - m_{N,\theta_{0}}^{y}\left(\boldsymbol{x}\right) | = |g_{N}(\hat{\theta}_{N}) - g_{N}(\theta_{0})| < M \|\hat{\theta}_{N} - \theta_{0}\|_{2} \to 0\]
since $M$ is independent of $N$ and $\{X_{N},y(X_{N})\}$ and $\hat{\theta}_{N} \to \theta_{0}$. 

If $\hat{\theta}_{N} $ converges to $\theta_{0}$ in probability, the convergence of the first term is also in probability. If $\hat{\theta}_{N}$ is determined by maximum likelihood method, assuming the convergence in probability is plausible.

The condition on $g_{N}(\theta)$ is partially verified by taking note that \begin{eqnarray*}
\frac{\partial g_{N}}{\partial \theta_{i}}( \theta) =&\Bigl(&\frac{\partial k_{\nu,\theta}}{\partial \theta_{i}}(x,X_{N}) K_{\nu, \theta}(X_{N},X_{N})^{-1} \\
&-& K_{\nu, \theta}\left(x, X_{N}\right)K_{\nu, \theta}(X_{N},X_{N})^{-1} \frac{\partial k_{\nu,\theta}}{\partial \theta_{i}}(X_{N},X_{N})K_{\nu, \theta}(X_{N},X_{N})^{-1}\Bigr)y(X_{N})
\end{eqnarray*}
is continuous in $\Theta$ because of the form of Mat\'ern kernel and the assumption that $K_{\nu, \theta}(X_{N},X_{N})^{-1}$ exists hence $\sup_{\theta \in \Theta} \sum_{i=1}^{2}|\frac{\partial g_{N}}{\partial \theta_{i}}( \theta) | < M_{N,X_{N},y(X_{N})}$. Moreover, if $\sup_{\theta \in \Theta} \sum_{i=1}^{2}|\frac{\partial g_{N}}{\partial \theta_{i}}( \theta) | < M$ does not hold, it implies that $\sup_{\theta \in \Theta} |\frac{\partial g_{N}}{\partial \theta_{i}}( \theta) |$ is diverging as the number of samples grows. $\sup_{\theta \in \Theta} \sum_{i=1}^{2}|\frac{\partial g_{N}}{\partial \theta_{i}}( \theta) | < M$ may be usually implicitly assumed for the numerical stability in implementing the hyperparameter estimation typically maximum likelihood method with the gradient. 
\end{remark}
\section{The detail of the proof of Theorem 3}
\label{appendix_inequality}
\begin{proof}

Recall that we want to confirm $h_{X_{l,N_{l}}}^{\nu_{l}} \bigl( \log(1/h_{X_{l,N_{l}}}) \bigr)^{1/2} \simeq N_{l}^{\frac{-\nu_{l}}{d}} ( \log N_{l} )^{1/2} $. Since $h_{X_{l,N_{l}}} \simeq N_{l}^{\frac{-1}{d}}$, there exists positive constants $\tilde{c}_{1}, \tilde{c}_{2}$ such that 
\[\tilde{c}_{2} N_{l}^{\frac{1}{d}} \leq  \frac{1}{h_{X_{l,N_{l}}}} \leq \tilde{c}_{1} N_{l}^{\frac{1}{d}}\]. 

Hence, $\bigl(\log \tilde{c}_{2} N_{l}^{\frac{1}{d}} \bigr)^{1/2} \leq  \bigl(\log(1/h_{X_{l,N_{l}}}) \bigr)^{1/2} \leq \bigl(\log \tilde{c}_{1} N_{l}^{\frac{1}{d}} \bigr)^{1/2} $ holds. Then, $h_{X_{l,N_{l}}}^{\nu_{l}} \bigl( \log(1/h_{X_{l,N_{l}}}) \bigr)^{1/2}$ has the following upper and lower bounds. 
\begin{eqnarray*}
 h_{X_{l,N_{l}}}^{\nu_{l}} \bigl(\log \tilde{c}_{2} N_{l}^{\frac{1}{d}} \bigr)^{1/2} \leq h_{X_{l,N_{l}}}^{\nu_{l}}\bigl(\log(1/h_{X_{l,N_{l}}}) \bigr)^{1/2}\leq h_{X_{l,N_{l}}}^{\nu_{l}} \bigl(\log \tilde{c}_{1} N_{l}^{\frac{1}{d}} \bigr)^{1/2}
\end{eqnarray*}
We divide this inequalities by $N_{l}^{\frac{-\nu_{l}}{d}} ( \log N_{l} )^{\frac{1}{2}}$ and obtain the following:
\begin{eqnarray}
\label{inequality_target}
 \frac{h_{X_{l,N_{l}}}^{\nu_{l}}\bigl(\log \tilde{c}_{2} +  \frac{1}{d}\log N_{l} \bigr)^{1/2} }{N_{l}^{\frac{-\nu_{l}}{d}} (\log N_{l})^{\frac{1}{2}}  } \leq \frac{h_{X_{l,N_{l}}}^{\nu_{l}} \bigl(\log(1/h_{X_{l,N_{l}}}) \bigr)^{1/2}}{N_{l}^{\frac{-\nu_{l}}{d}} (\log N_{l})^{\frac{1}{2}} } \leq \frac{h_{X_{l,N_{l}}}^{\nu_{l}} \bigl(\log \tilde{c}_{1} + \frac{1}{d}\log N_{l} \bigr)^{1/2}}{N_{l}^{\frac{-\nu_{l}}{d}} (\log N_{l})^{\frac{1}{2}}  }
\end{eqnarray}
For the upper bound of this inequality, it has the following upper and lower bounds.
\begin{eqnarray*}
\frac{N_{l}^{-\frac{\nu_{l}}{d}}}{\tilde{c}_{1}} \frac{\bigl(\log \tilde{c}_{1} + \frac{1}{d}\log N_{l} \bigr)^{1/2}}{N_{l}^{\frac{-\nu_{l}}{d}} (\log N_{l})^{\frac{1}{2}}  }  \leq \frac{h_{X_{l,N_{l}}}^{\nu_{l}} \bigl(\log \tilde{c}_{1} + \frac{1}{d}\log N_{l} \bigr)^{1/2}}{N_{l}^{\frac{-\nu_{l}}{d}} (\log N_{l})^{\frac{1}{2}}  } \leq \frac{N_{l}^{-\frac{\nu_{l}}{d} }}{\tilde{c}_{2}} \frac{ \bigl(\log \tilde{c}_{1} + \frac{1}{d}\log N_{l} \bigr)^{1/2}}{N_{l}^{\frac{-\nu_{l}}{d}} (\log N_{l})^{\frac{1}{2}}  }   \\
\frac{1}{\tilde{c}_{1}} \frac{\bigl(\log \tilde{c}_{1} + \frac{1}{d}\log N_{l} \bigr)^{1/2}}{ (\log N_{l})^{\frac{1}{2}}  }  \leq \frac{h_{X_{l,N_{l}}}^{\nu_{l}} \bigl(\log \tilde{c}_{1} + \frac{1}{d}\log N_{l} \bigr)^{1/2}}{ N_{l}^{\frac{-\nu_{l}}{d}} (\log N_{l})^{\frac{1}{2}}  } \leq \frac{1}{\tilde{c}_{2}} \frac{ \bigl(\log \tilde{c}_{1} + \frac{1}{d}\log N_{l} \bigr)^{1/2}}{(\log N_{l})^{\frac{1}{2}}  }  
\end{eqnarray*}
By taking note that for $N_{l} \geq 2$, the following inequality holds.
\[\Bigl( \frac{1}{d} \Bigr) ^{1/2} \leq \Bigl( \frac{\log \tilde{c}_{1} + \frac{1}{d} \log N_{1}}{\log N_{l}} \Bigr) ^{1/2} \leq \Bigl( \frac{\log \tilde{c}_{1}}{\log 2}  + \frac{1}{d} \Bigr) ^{1/2}\]
Then, the upper bound of \ref{inequality_target}  is bounded in the following inequality.
\begin{eqnarray}
\label{kekka_ineq_1}
 \frac{h_{X_{l,N_{l}}}^{\nu_{l}} \bigl(\log \tilde{c}_{1} + \frac{1}{d}\log N_{l} \bigr)^{1/2}}{N_{l}^{\frac{-\nu_{l}}{d}} (\log N_{l})^{\frac{1}{2}}  } \leq  \frac{1}{\tilde{c}_{2}} \Bigl( \frac{\log \tilde{c}_{1}}{\log 2}  + \frac{1}{d} \Bigr) ^{1/2}
\end{eqnarray}
Similarly, the lower bound of \ref{inequality_target} has the following inequality.
\begin{eqnarray}
\label{kekka_ineq_2}
\frac{1}{\tilde{c}_{1}} \Bigl(\frac{1}{d} \Bigr) ^{1/2}  \leq \frac{h_{X_{l,N_{l}}}^{\nu_{l}}\bigl(\log \tilde{c}_{2} +  \frac{1}{d}\log N_{l} \bigr)^{1/2} }{N_{l}^{\frac{-\nu_{l}}{d}} (\log N_{l})^{\frac{1}{2}}  }
\end{eqnarray}
Combining \ref{kekka_ineq_1} and \ref{kekka_ineq_2}, we confirm the conclusion:
\begin{eqnarray*}
 \frac{1}{\tilde{c}_{1}} \Bigl(\frac{1}{d} \Bigr) ^{1/2} \leq \frac{h_{X_{l,N_{l}}}^{\nu_{l}} \log(1/h_{X_{l,N_{l}}})}{N_{l}^{\frac{-\nu_{l}}{d}} (\log N_{l})^{\frac{1}{2}} } \leq \frac{1}{\tilde{c}_{2}} \Bigl( \frac{\log \tilde{c}_{1}}{\log 2}  + \frac{1}{d} \Bigr) ^{1/2}.
\end{eqnarray*}
\end{proof}

\end{document}